\documentclass[11pt]{article}
\usepackage[margin=1in]{geometry}
\usepackage{amsthm,amsfonts,amsmath,amssymb,mathtools}
\usepackage{graphicx}
\usepackage{xcolor}
\usepackage[normalem]{ulem}
\usepackage{url}
\usepackage{booktabs}
\usepackage{multirow}
\usepackage{enumitem}
\usepackage{float}
\usepackage{tikz}
\usepackage{thmtools}
\usepackage{thm-restate}
\usepackage[bookmarks=false,breaklinks=false,pdfborder={0 0 1},
            colorlinks=true,allcolors=blue]{hyperref}
\usepackage[nameinlink]{cleveref}

\newtheorem{theorem}{Theorem}
\newtheorem{lemma}[theorem]{Lemma}
\newtheorem{proposition}[theorem]{Proposition}
\newtheorem{corollary}[theorem]{Corollary}
\newtheorem{observation}[theorem]{Observation}
\theoremstyle{definition}
\newtheorem{definition}[theorem]{Definition}
\newtheorem{remark}[theorem]{Remark}

\newcommand{\mcA}{\mathcal{A}}
\newcommand{\mcB}{\mathcal{B}}
\newcommand{\E}{\operatorname{\mathbb{E}}}
\renewcommand{\Pr}{\operatorname{Pr}}
\newcommand{\Ind}{\mathbf{1}}
\newcommand{\Z}{\mathbb{Z}}

\def\opt{\mathrm{OPT}}
\def\cost{\mathrm{Cost}}
\def\fwd{\mathrm{fwd}}
\def\medalg{\mathrm{med}}
\def\refalg{\mathrm{ref}}
\def\cmpalg{\mathrm{cmp}}
\def\blkalg{\mathrm{blk}}
\def\latealg{\mathrm{late}}
\newcommand{\nxt}{\operatorname{nxt}}
\newcommand{\lst}{\operatorname{lst}}
\newcommand{\thr}{\operatorname{thr}}
\newcommand{\md}{\operatorname{med}}

\usetikzlibrary{arrows.meta,calc,decorations.pathreplacing,positioning}
\tikzset{
  pg/.style   = {circle, draw, line width=0.7pt, minimum size=7.5mm,
                 inner sep=0pt, font=\small},
  gap/.style  = {pg, draw=red!65!black, fill=red!8, dashed, line width=0.9pt},
  here/.style = {pg, draw=blue!60!black, fill=blue!8, line width=0.9pt},
  hop/.style = {-{Stealth[length=2mm,width=1.6mm]}, line width=0.7pt},
}

\author{Ali Vakilian\thanks{Rutgers University. Email: \texttt{a.vakilian@rutgers.edu}. The author was supported in part by NSF CAREER Award CCF-2542925.}}
\title{Tight Lower Bounds for Distributional Paging}
\date{}

\begin{document}
\maketitle

\begin{abstract}
Distributional paging asks for a polynomial-time eviction rule that competes with the optimal \emph{online} policy $\opt$ when requests come from a known stochastic source. Two algorithms proposed by Lund, Phillips, and Reingold [FOCS, 1994] are canonical. Their randomized \emph{dominating distribution} algorithm, designed for Markov sources, draws the page to evict from any distribution over the cache obeying a family of pairwise linear inequalities; their deterministic \emph{median} algorithm, designed for sources whose inter-arrival times are independent across pages, discards the cached page with the largest lower median of its conditional next-request time. The original guarantees were $4$ and $5$; Pabbaraju and Vakilian [ICALP 2025] recently sharpened them to $2$ and $4$.

\smallskip

We show that neither sharpened constant admits further improvement, in its respective source model. For Markov paging we give two complementary factor-$2$ lower-bound constructions, each with an explicit polynomial-time dominating selector. A directed-cycle family has ratio $2k/(k+1)$, while a recurrent four-page family with cache size three has ratio $2-4\varepsilon/3$.

Our second shows that the median algorithm fails to be $c$-competitive whenever $c<4$ in the independent inter-arrival model. These are separate results for distinct stochastic models. Since each lower bound matches the corresponding upper bound, the two competitive ratios are $2$ and $4$ in their respective settings. The strongest bounds previously were $1.5907$ and $1.511$.

\smallskip
The two lower bounds use different mechanisms. For the dominating distribution algorithm, the directed cycle exposes the half-mass mechanism in its cleanest form: half of the eviction mass is forced onto the page returning last, while a bad selector places the remaining half on the page returning first. The four-page construction turns the same extreme-point choice into a recurrent cascade across two slowly switching blocks. 
For the median algorithm we build independent renewal processes that repeatedly mislead the median rule: the page with the latest median is nevertheless likely to be requested early, forcing a cascade of bad evictions, while a simple online policy can be made to pay arbitrarily close to one quarter as much.
\end{abstract}

\section{Introduction}
\label{sec:intro}

A paging system keeps $k$ of $n$ pages in a fast cache. Requests arrive one at a time; a request for a cached page is free, while a request for an absent page triggers a \emph{fault}, costs one unit, and obliges the system to throw out a resident page to make room for the newcomer. The design question is which page to throw out.

Classical competitive analysis, following Sleator and Tarjan~\cite{sleator1985amortized}, measures an online rule against the best \emph{offline} schedule on a worst-case sequence, and the resulting verdict is bleak: deterministic rules cannot beat a ratio of $k$~\cite{sleator1985amortized}, and randomized ones cannot beat $H_k=\Omega(\log k)$, as Fiat, Karp, Luby, McGeoch, Sleator, and Young~\cite{fiat1991competitive} showed. Actual workloads are of course not adversarial; they concentrate their requests, a phenomenon Denning~\cite{denning1968working} named locality of reference, and a long line of models has tried to capture it~\cite{borodin1991competitive, irani1992strongly, fiat1995randomized, koutsoupias2000beyond, becchetti2004modeling}.


\emph{Distributional} paging proceeds differently. It supposes the requests are produced by a source whose law the algorithm knows, and it swaps the offline yardstick for $\opt$, the best \emph{online} policy; one that knows the source but must still guess at the future. That swap is what puts constant factors within reach. Two sources are standard. Karlin, Phillips, and Raghavan~\cite{karlin1992markov} introduced \emph{Markov paging}, in which a known Markov chain emits the requests; they observed that the optimal limiting-average policy is obtained from a linear program with $n\binom nk$ state variables, while a finite-horizon optimum can be computed by a dynamic program with exponential dependence on $k$, which is why one asks instead for polynomial-time rules. Lund, Phillips, and Reingold~\cite{lund1999paging} studied the \emph{independent inter-arrival} source, where every page runs its own renewal process and distinct pages are independent.

The two source models are distinct. A superposition of independent renewal processes is generally not first-order Markov on the page labels, and a Markov request chain need not decompose into independent renewal processes. We therefore keep the models separate throughout. In particular, the latter makes no claim about the median algorithm on Markov sources. 

\medskip
Lund et al.~\cite{lund1999paging} proposed a different eviction rule for each of the two models: the dominating distribution algorithm in their more general distributional framework, specialized here to Markov paging, and the median algorithm for independent inter-arrival sources. For the Markov model, given pages $p,q$ and a page $s$ requested most recently, put
\begin{align}
    \label{eqn:alpha-def}
    \alpha(p<q \mid s) := \Pr\nolimits_M\left[\,p \text{ returns before } q \;\middle|\; s \text{ requested most recently}\,\right],
\end{align}
adopting $\alpha(p<p\mid s)=0$. Call a distribution $\mu$ on the cache $C$ \emph{dominating} when
\begin{align}
    \label{eqn:dominating}
    \E_{p \sim \mu}\left[\alpha(p<q \mid s)\right] \le \frac12 \qquad \text{for all } q \in C .
\end{align}
Equivalently, for every resident $q$, a page $p\sim\mu$ is requested before $q$ with probability at most $1/2$. The \emph{dominating distribution algorithm} evicts $p\sim\mu$ at each fault; a dominating $\mu$ is always available from a linear program of size $O(k)$~\cite{borodin2005online, roughgardennotes}. 

In the independent inter-arrival model, called the \emph{IP-paging model} by Lund et al.~\cite{lund1999paging}, the source specifies an inter-arrival distribution for every page, and successive inter-arrival times are independent across pages and renewal intervals. From this distribution and the elapsed time since a page’s most recent request, the algorithm obtains the conditional distribution of its residual waiting time. The \emph{median algorithm} computes a median of this residual-time distribution for every resident page and evicts a page with the largest such median.
Lund et al.~\cite{lund1999paging} proved approximation ratios of 4 for the dominating-distribution algorithm and 5 for the median algorithm, the latter under the additional assumption of independent inter-arrival times. The factor-$4$ bound stood for three decades, although a factor of $2$ had subsequently been suggested as plausible~\cite{roughgardennotes}. Pabbaraju and Vakilian~\cite{pabbaraju2025new} tightened the charging argument underlying both analyses and reduced the guarantees to $2$ and $4$, respectively. 
Our lower bounds show that these guarantees are tight: neither can be improved.

\paragraph{Our contribution.}
Our contribution is the two matching lower bounds. The first needs a word of framing. Since \eqref{eqn:dominating} is a system of inequalities, the dominating distribution algorithm is not one algorithm but a class, and on a given instance that class is usually large. It can even contain excellent members: on the very instances we build, one dominating $\mu$ is a point mass reproducing an optimal online policy. Thus these directed-cycle instances cannot give a lower bound that holds uniformly for every dominating selector, and any lower bound extracted from them must name a member. The upper bound guarantee of Pabbaraju and Vakilian~\cite{pabbaraju2025new} is proved for every dominating $\mu$ at once, so to show that its constant is best possible as a uniform guarantee, it suffices to produce one dominating selector whose ratio approaches $2$.

\begin{restatable}{theorem}{thmdomlb}
    \label{thm:dom-lb}
    There is an explicit dominating distribution algorithm, computable in polynomial time, that is not $c$-competitive against $\opt$ in the Markov paging model for any $c<2$.
\end{restatable}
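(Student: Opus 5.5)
The plan is to exhibit the directed-cycle family announced in the abstract: for each cache size $k$, a Markov chain on $n=k+1$ pages, together with an explicit dominating selector whose fault rate exceeds that of $\opt$ by a factor $2k/(k+1)$. Since $2k/(k+1)\to 2$, for any $c<2$ we can choose $k$ large enough that $2k/(k+1)>c$, which rules out $c$-competitiveness.

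\textbf{Step 1: the instance and $\opt$.} Take pages $0,1,\dots,k$ and the deterministic chain $i\mapsto i+1 \pmod{k+1}$. Because the chain is deterministic, the online optimum coincides with Belady's offline rule: at every fault it evicts the page whose next request is furthest away. If the fault is on page $i$, that page is $i-1$. The next fault then occurs exactly $k$ requests later, so $\opt$ incurs one fault per $k$ requests asymptotically, with at most $O(1)$ extra faults from the initial cache. No online policy can do better, since any $k+1$ consecutive requests touch $k+1$ distinct pages and therefore force at least one fault in every window of length $k$.

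\textbf{Step 2: the selector and its dominance.} In the deterministic cycle, $\alpha(p<q\mid s)$ is $1$ if $p$ comes before $q$ in cyclic order after $s$, and $0$ otherwise. At a fault on page $s=i$, the cache is $C=\{0,\dots,k\}\setminus\{j\}$ for some missing page $j$. Define $\mu$ as mass $1/2$ on the resident page requested first after $i$ and mass $1/2$ on the resident page requested last. Dominance \eqref{eqn:dominating} is then checked case by case:
\begin{itemize}[leftmargin=*]
\item For $q$ equal to the last-returning page, only the first-returning page precedes it, so the expectation is $1/2$.
\item For $q$ equal to the first-returning page, the mass on $q$ itself contributes $0$ (recall $\alpha(q<q\mid s)=0$), and the last-returning page does not precede $q$; the expectation is $0$.
\item For any other $q$, only the first-returning page precedes it, giving $1/2$.
\end{itemize}
The selector is clearly polynomial-time computable, since it only needs the cyclic order after $s$.

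\textbf{Step 3: the fault rate of the selector.} Immediately after the fault on $i$, the cache is missing exactly one page. With probability $1/2$ the selector evicts the last-returning page $i-1$, restoring the $\opt$ configuration; the next fault then comes after exactly $k$ requests. With probability $1/2$ it evicts $i+1$, and the very next request faults. Either way, the next fault recreates the same situation with $i$ shifted, so the fault times form a renewal process. Its inter-fault gap is $1$ or $k$ with probability $1/2$ each, giving mean $(k+1)/2$. By the renewal reward theorem, the long-run expected fault rate is $2/(k+1)$. The ratio against $\opt$'s rate $1/k$ is $2k/(k+1)$.

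\textbf{Step 4: the main obstacle.} The delicate part is formalizing ``not $c$-competitive'' against the paper's exact definition of the competitive ratio, which presumably allows an additive constant and measures expected cost over horizon $T$. Over horizon $T$, the selector's expected cost is $2T/(k+1)-O(1)$, while $\opt$'s cost is $T/k+O(1)$. Any additive constant is therefore swamped as $T\to\infty$, giving ratio $2k/(k+1)$ in the limit. A secondary point is confirming that the general definition \eqref{eqn:alpha-def} of $\alpha$ gives exactly the $0/1$ values used above for a periodic chain. One must also confirm that ties and the initial cache contents do not matter; the first fault already puts the process into the renewal regime, so they affect only the additive constant.
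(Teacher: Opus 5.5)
Your proposal is correct and is essentially the paper's directed-cycle proof. It uses the same instance with $n=k+1$, the same split $\tfrac12\delta_{s+1}+\tfrac12\delta_{s-1}$, the Belady eviction of $s-1$ as the witness for $\opt$, and inter-fault gaps of $1$ or $k$ with probability $\tfrac12$ each, giving ratio $2k/(k+1)$.

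One point needs tightening. A dominating distribution algorithm must be specified on every Markov instance, but you define yours only on cycles. The paper handles this with a global LP selector that maximizes $\mu(s+1 \bmod n)$ under the domination constraints, with lexicographic tie-breaking, and proves that its unique optimum on cycles is your split. Alternatively, you could fall back to any LP-feasible dominating distribution off the cycle family.
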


We prove \Cref{thm:dom-lb} twice. The first proof uses the forward-biased selector on directed cycles and obtains the exact ratio $2k/(k+1)$. The second uses a different explicit selector on four pages with cache size three and obtains $2-4\varepsilon/3$ for every $0<\varepsilon<1/2$. The four-page construction is stronger as a fixed-size lower bound and has a positive stationary fault rate on every fixed instance; the directed cycle remains the shorter and simpler proof.

Because the lower-median algorithm is deterministic and our construction has a unique maximizing page at every fault, neither a selector nor an eviction tie-breaking issue arises. Its lower bound applies to the independent inter-arrival model with known deterministic initial ages, rather than to Markov paging.

\begin{restatable}{theorem}{thmmedlb}
    \label{thm:med-lb}
    In the independent inter-arrival model, allowing known deterministic initial ages, the median algorithm is not $c$-competitive against $\opt$ for any $c<4$.
\end{restatable}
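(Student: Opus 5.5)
We will construct a family of instances in the independent inter-arrival model, parameterized by $\varepsilon>0$ and a scale $N$. On this family the lower-median algorithm's long-run fault count exceeds $(4-O(\varepsilon))$ times that of a simple online policy. Since $\opt$ is at most that policy's cost, this proves \Cref{thm:med-lb}.

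The building block is a page whose residual waiting time is \emph{bimodal}. With probability slightly below $1/2$, say $1/2-\varepsilon$, the page returns almost immediately, at time about $1$. With the remaining probability it returns very late, at time about $N$. The lower median of such a page is $N$, so the median rule ranks it as the page to evict first. Yet it is requested early with probability close to $1/2$. By contrast, a ``decoy'' page with a deterministic return time $T$ satisfying $1\ll T\ll N$ has median $T<N$, so the median rule keeps it, even though evicting it would cost at most one fault per long period.

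We will use cache size $k$ and $k+1$ pages. The initial ages are known and deterministic, which lets us phase the renewal processes so that at every fault exactly one page has the strictly largest median; this rules out tie-breaking issues. The renewal laws will be chosen so that each eviction of a bimodal page has probability about $1/2$ of being punished by an early return. That return forces another fault, and at that moment another freshly reset bimodal page again carries the largest median. The result is a geometric cascade of expected length $\sum_j 2^{-j}\cdot(\text{stuff})$.

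The target accounting is the following. Per period, the median algorithm incurs an expected $4-O(\varepsilon)$ faults, since the cascade has expected length $2$ and each step costs two faults, one for the misled eviction and one for the re-request. The reference online policy always evicts the decoy and pays about $1$ fault per period, because the decoy is requested only once per period.

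The steps, in order, are as follows.
\begin{enumerate}
\item Specify the inter-arrival distributions, which are two-point or near two-point laws, together with the initial ages.
\item Compute the conditional residual distributions and their lower medians at every reachable age, and verify that the argmax of the median is unique and is the intended bimodal page.
\item Analyze the median algorithm's fault process as a renewal-reward process, obtaining a fault rate of $(4-O(\varepsilon))/N'$ per unit time.
\item Bound the cost of the reference policy by $(1+O(\varepsilon))/N'$.
\item Let $\varepsilon\to0$ and $N\to\infty$.
\end{enumerate}

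The main obstacle is step 3 combined with the state consistency needed in step 2. The ratio must reach $4$ rather than $2$, which requires that every punished eviction \emph{re-creates} the same bad configuration. That is, after the early return, the page whose median is largest must again be a fresh bimodal page and not the decoy. Ages evolve deterministically between requests, so medians drift, and the decoy's median shrinks while others grow. I would first try a hierarchy of time scales $1\ll T_1\ll T_2\ll\cdots$ with independent fresh bimodal pages, so that each level of the cascade is insulated from the others. If the per-level loss factors multiply to $4$ only in the limit, I would take the number of levels to infinity alongside $\varepsilon\to0$.
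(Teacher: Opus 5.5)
\paragraph{The factor $4$ does not come out of your accounting.} You say each cascade step costs two faults, one for the misled eviction and one for the re-request, but an eviction is not a fault. With $k+1$ pages and a cache of size $k$ there is exactly one hole, and every fault is a request to it. The early re-request of one misled bimodal page is precisely the fault at which the next bimodal page is evicted. Each selected page therefore costs the median algorithm exactly one fault, so a cascade with early-return probability $1/2-\varepsilon$ costs $1/(1/2+\varepsilon)<2$ faults per period; this is $\E[A_j]=1/b$ in the paper. This side cannot be pushed further. The lower median sits on the late atom only when the early mass is strictly below $1/2$, so a cascade of pages the rule is fooled into evicting has mean length below $2$. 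Your reference evicts a deterministic decoy, which returns with certainty and so costs exactly $1$ per period. As designed, your construction gives a ratio of at most about $2$, not $4$.

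\paragraph{The missing idea is a second factor of $2$ on the comparison side.} The page the comparison policy keeps out of cache must have an \emph{early} lower median, so the median rule always prefers to evict a bimodal page instead. Yet it must actually be requested in a given period only with probability $1/2$. A deterministic decoy cannot do this, because its median equals its true return time. The paper's tokens exploit the lower-median convention. Conditional on being live, a token puts mass exactly $1/2$ on an atom in the checkpoint window $T_j$, which precedes the anchors' late window $L_j$, and otherwise passes on to $T_{j+1}$. The comparison policy evicts a token and pays only when it hits its checkpoint, expected cost $1/2$ per block. On such a hit it evicts a fresh checkpoint-ready token from the next pool. The one-pair cache invariant then gives expected block costs $1/b$ and $1/2$, hence ratio $2/b\to 4$.

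\paragraph{The drift problem in your step 2.} The paper does not resolve this with recurrent bimodal renewals on a hierarchy of time scales. It uses the programming lemma: a known initial age deletes the short atom, so each page's first request follows an arbitrary prescribed law, and afterwards the page is spent with median at most $D$. It also uses large pools of one-shot pages, and filter atoms in $F_j$ and $G_j$ that keep the medians of later-pool pages inside the current block.
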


The theorems live in different source models, Markov paging and independent renewals, respectively, and each meets the best known guarantee for its own.

\begin{corollary}
    \label{cor:exact}
    In their respective source models, the competitive ratio of the algorithm of \Cref{thm:dom-lb} equals $2$, and the competitive ratio of the median algorithm equals $4$ under the initialization convention of \Cref{sec:prelim}.
\end{corollary}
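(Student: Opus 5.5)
The corollary follows by pairing each lower bound with the matching upper bound of Pabbaraju and Vakilian~\cite{pabbaraju2025new}. The argument has two symmetric halves.

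\textbf{Markov paging.} Let $\mcA$ be the explicit polynomial-time dominating selector supplied by \Cref{thm:dom-lb}; for definiteness, the forward-biased selector on directed cycles. It is a dominating distribution algorithm: at every fault it evicts according to a distribution $\mu$ on the cache satisfying \eqref{eqn:dominating}. The guarantee of~\cite{pabbaraju2025new} holds uniformly for every dominating $\mu$, with no further structural assumption on how $\mu$ is chosen among the feasible ones. So $\mcA$ is $2$-competitive against $\opt$ on every Markov source, and its competitive ratio is at most $2$.

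\Cref{thm:dom-lb} says $\mcA$ is not $c$-competitive for any $c<2$. The cycle instances make this concrete: their ratio $2k/(k+1)$ tends to $2$ as $k\to\infty$. The infimum of admissible constants is therefore exactly $2$. One point needs care. The notion of "$c$-competitive" used in the upper bound must be the same as the one refuted in the lower bound: limiting average cost, or expected cost up to an additive constant, over the same class of Markov sources with known initial state. I will check that the paper fixes one definition in \Cref{sec:prelim} and that both results refer to it. If the upper bound is stated with an additive term, I need the lower bound to rule out every additive constant. The cycle instances do this, since costs grow linearly in the horizon while the ratio stays bounded away from $c$.

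\textbf{Median algorithm.} The reasoning is the same. In the independent inter-arrival model, \cite{pabbaraju2025new} shows the median algorithm is $4$-competitive. \Cref{thm:med-lb} shows it is not $c$-competitive for any $c<4$, so the ratio is exactly $4$.

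The main obstacle is the initialization convention. \Cref{thm:med-lb} allows known deterministic initial ages, whereas the original upper bound may have been phrased for sources started at a renewal epoch. I would first check that the charging argument of~\cite{pabbaraju2025new} uses only the conditional residual-time distributions given the current ages. It should then go through verbatim when the initial ages are arbitrary but known, because the median rule and $\opt$ both condition on the same information. This is what the phrase "under the initialization convention of \Cref{sec:prelim}" records, and I would state that the upper bound is invoked in that generality.

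Tie-breaking causes no difficulty. The upper bound holds for any choice of lower median and any maximizer, and the lower-bound construction has a unique maximizer at each fault. Combining the two halves gives the corollary.
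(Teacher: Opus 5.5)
Your proposal matches the paper's proof: you combine the uniform $2$-competitiveness of every dominating distribution algorithm and the $4$-competitiveness of the median rule from~\cite{pabbaraju2025new} with \Cref{thm:dom-lb,thm:med-lb}, and since the definition in \Cref{sec:prelim} has no additive term, your worry about additive constants does not arise. On the initial-age extension, the paper sharpens the point you flagged: conditional on the current age vector, independent renewal processes have \emph{independent} residual times, so the median rule still satisfies the per-fault precedence guarantee required by the charging lemma of~\cite{pabbaraju2025new}; the paper also notes that Lund et al.\ already state their theorem for arbitrary initial elapsed times.
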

\begin{proof}
    The algorithm of \Cref{thm:dom-lb} is a dominating distribution algorithm, hence $2$-competitive by~\cite[Theorem 1]{pabbaraju2025new}, while \Cref{thm:dom-lb} rules out anything smaller. The median algorithm is $4$-competitive by~\cite[Theorem 2]{pabbaraju2025new} under the initial-age extension, while \Cref{thm:med-lb} rules out anything smaller.
\end{proof}


\begin{table}[H]
    \centering
    \setlength{\tabcolsep}{5pt}
    \renewcommand{\arraystretch}{1.15}
    \begin{tabular}{@{}lllcc@{}}
    \toprule
    Source & Algorithm & Reference & Upper & Lower \\
    \midrule
    \multirow{3}{*}{Markov}
      & \multirow{3}{*}{\begin{tabular}{@{}l@{}}Dominating distribution\\(randomized)\end{tabular}}
      & Lund et al.~\cite{lund1999paging}               & $4$ & --- \\
      & & Pabbaraju and Vakilian~\cite{pabbaraju2025new} & $\mathbf{2}$ & $1.5907$ \\
      & & \textbf{This work} (\Cref{thm:dom-lb}) & --- & $\mathbf{2}$ \\
    \midrule
    \multirow{3}{*}{\begin{tabular}{@{}l@{}}Independent\\inter-arrival\end{tabular}}
      & \multirow{3}{*}{\begin{tabular}{@{}l@{}} Median\\(deterministic)\end{tabular}}
      & Lund et al.~\cite{lund1999paging}               & $5$ & $1.511$\footnotemark \\
      & & Pabbaraju and Vakilian~\cite{pabbaraju2025new} & $\mathbf{4}$ & --- \\
      & & \textbf{This work} (\Cref{thm:med-lb})        & --- & $\mathbf{4}$ \\
    \bottomrule
    \end{tabular}
    \caption{Competitive ratios against the optimal online policy $\opt$. The two rows concern different source models and are not cross-model comparisons.  
    }
    \label{table:summary}
\end{table}
\footnotetext{A proof of $1.4$ appears in~\cite{lund1999paging}; the stronger figure $1.511$ is reported there without proof.}

\subsection{Technical Overview}
\label{sec:techniques}

The Markov and renewal lower bounds address different obstacles, and the Markov result itself has two complementary witnesses. For the dominating distribution algorithm, the eviction rule is not uniquely specified: any point in a feasible polytope may be chosen. We must therefore define one fixed selector and construct instances on which that selector chooses a bad feasible point. The median algorithm has the opposite difficulty: it is deterministic, so the instance must force its unique choice to be bad repeatedly. The common idea is to make each decision rule explicit. On the lazy cycle, all relevant hitting probabilities are $0$ or $1$. 
On the four-page chain, every relevant hitting-order probability is one of $0$, $1$, $\varepsilon$, and $1-\varepsilon$. By symmetry, the joint request/cache process can then be consolidated into six states, whose stationary distribution gives the selector’s exact long-run fault rate.
In the interdependent renewal model's lower bound construction, separated time windows make the conditional medians and the realized request order easy to compare.

\paragraph{Fixing the order in which pages return.}
The quantities in \eqref{eqn:alpha-def} are hitting probabilities and can be difficult to express exactly. Pabbaraju and Vakilian~\cite{pabbaraju2025new} handle this difficulty by restricting their lower-bound search to $n=3$ pages and cache size $k=2$, where the feasible region can be written explicitly and optimized numerically. We instead choose instances on which every relevant hitting probability is trivial. On a deterministic directed cycle, after a request to $s$ the other pages return in the fixed cyclic order $s+1,s+2,\dots,s-1$. Working with the base cycle removes that irrelevant waiting-time layer; \Cref{prop:uniform-lazification} later makes the construction aperiodic without changing the hitting orders or the ratio. Consequently every $\alpha$ is $0$ or $1$, and the constraints in \eqref{eqn:dominating} can be solved directly.

Set $n = k + 1$, so exactly one page is missing from the cache; we call it the \emph{hole}. After a fault on $s$, let $e=s+1$ be the resident requested first and let $\ell=s-1$ be the resident requested last. The constraint in \eqref{eqn:dominating} for $q=\ell$ forces $\mu(\ell)\ge\frac12$. Once this holds, the remaining constraints impose no further restriction on how the other half of the mass is distributed. Our selector maximizes $\mu(e)$, so it assigns all remaining mass to $e$. Hence the unique selected distribution is $\mu=\frac12\delta_e+\frac12\delta_\ell$; see \Cref{lem:split}. The factor of $2$ comes from this split: at least half of the probability must be assigned to the best available page, but the other half may be assigned to the worst available page. The selector is one fixed algorithm maximizing $\mu(s+1\bmod n)$ from the page labels alone, and the cycle is numbered so that this label names the worst resident.

The fault rates follow immediately from the hole picture. Discarding $e$ leaves the hole one request ahead, whereas discarding $\ell$ leaves it $k$ requests ahead. Thus the selector's inter-fault spacings are independent and equal to $1$ or $k$, each with probability $1/2$. The reference policy always discards $\ell$. This is a farthest-in-future choice on every realized trajectory, so the policy is optimal both offline and online (\Cref{lem:ref-optimal}), and its inter-fault spacing is always $k$. The ratio of the two fault rates is therefore $k/((k+1)/2)$, which tends to $2$.

\smallskip
The four-page construction keeps the same bad-vertex-versus-late-page comparison but removes the growing-cache limit. Two two-page blocks alternate internally and switch only with probability $\varepsilon$. At an entry-page miss, the bad vertex again splits mass equally between the next and late pages. At an exit-page miss, the next page loses the hitting-order race with probability $\varepsilon$, so the same tight domination constraint permits the selector to put the slightly larger mass $1/[2(1-\varepsilon)]$ on it. The resulting miss can recreate another state of the same type. Lumping the request/cache process into six states gives stationary fault rates $\varepsilon(1-2\varepsilon/3)$ for the bad selector and $\varepsilon/2$ for the policy that always evicts the late page, hence ratio $2-4\varepsilon/3$.

\paragraph{Repeated faults under independent renewals.}
Directed cycles do not hurt the median algorithm: the page farthest ahead in the cycle also has the latest median next-request epoch, so the median rule makes the optimal eviction. Our second construction instead uses the independent renewal model defined in \Cref{sec:prelim}. It consists of $k+1$ pages and a cache of size $k$, so exactly one page is absent from the cache at any time. It creates a deliberate mismatch between the ordering of the conditional medians and the realized request order: a page can have the latest median and still be requested early with substantial probability.
The programming lemma (\Cref{lem:program}) lets us prescribe the first request of each renewal process. Block $j$ contains five ordered windows $S_j \prec G_j \prec T_j \prec F_j \prec L_j$. 
A live {\em anchor} page requests in $S_j$ with probability $a=\frac12-\gamma$ and in $L_j$ with probability $b=\frac12+\gamma$. Its lower median is in $L_j$, after every competing median, so the median policy repeatedly evicts anchors until the selected anchor takes its late branch. The number of anchor faults in a block is geometric with success parameter $b$ and hence has mean $1/b$.

The comparison policy evicts one {\em token} page instead. Conditional on still being active in block $j$, that token requests at the checkpoint $T_j$ with probability exactly $1/2$; this is the comparison policy's only possible fault in the block. A one-pair cache invariant (\Cref{lem:invariant}) gives expected block costs $1/b$ and $1/2$. Their ratio approaches $2/b = 4/(1 + 2\gamma)$, which can be made arbitrarily close to $4$. Finally, \Cref{lem:finite} replaces the idealized infinite pools by finite ones and converts the construction to a horizon measured by the number of requests.

\subsection{Related Work}
\label{sec:related-work}

Much work has sought to escape the $\Omega(\log k)$ randomized barrier~\cite{fiat1991competitive}. Access graphs~\cite{borodin1991competitive, irani1992strongly, fiat1995randomized} constrain which page may follow which; diffuse adversaries~\cite{koutsoupias2000beyond, young1998bounding} and adversarially biased inputs~\cite{young2000line} confine the adversary to distributions near a nominal one; working-set and locality parameterizations~\cite{denning1968working, torng1998unified, becchetti2004modeling, dorrigiv2005survey, angelopoulos2009paging} grade instances by how concentrated they are. Markov paging~\cite{karlin1992markov} belongs to this tradition and reads naturally as a randomized access graph. What sets it and the independent inter-arrival model apart is the yardstick: both compete with the best online policy, which is what admits constant factors in the first place. Irani~\cite{irani2005competitive} and Borodin and El-Yaniv~\cite{borodin2005online} survey the area.

\paragraph{Algorithms for distributional paging.}
Karlin, Phillips, and Raghavan~\cite{karlin1992markov} found that intuitive heuristics ``evicting on largest \emph{expected} return time, for one'' can be off by $\Omega(k)$, and gave the first constant-factor rule, though with an unwieldy constant. Lund, Phillips, and Reingold~\cite{lund1999paging} then contributed the dominating distribution and median algorithms, with ratios $4$ and $5$. A single charging argument underlies both, and it concedes a factor of $2$ for pages charged twice over, on top of whatever the rule itself costs. Pabbaraju and Vakilian~\cite{pabbaraju2025new} eliminated the first factor to obtain $2$ and $4$; their analysis tolerates approximate values of \eqref{eqn:alpha-def}, which matters when the chain must be learned from data~\cite{hao2018learning, wolfer2021statistical}. Our results complement theirs by showing that neither constant can be lowered in the corresponding source model.

\paragraph{Lower bounds.}
Progress here has lagged far behind. For the median algorithm the previously proved record was $1.4$, proved by Lund et al.~\cite{lund1999paging}, who report $1.511$ without proof. For the dominating distribution algorithm the record was the $1.5907$ that Pabbaraju and Vakilian~\cite{pabbaraju2025new} extracted by searching a three-page instance. \Cref{thm:dom-lb,thm:med-lb} raise these to the optimal $2$ and $4$. What the constructions add beyond the numbers is a reason: the mass split at the trailing page in one case, the compelled anchor cascade in the other, rather than a constant certified by search.

\section{Preliminaries}
\label{sec:prelim}

\paragraph{Paging and the yardstick.}
Write $V$ for the page set, $|V|=n$, and let requests $X_1,X_2,\dots$ issue from a known source. A policy holds $k<n$ pages; it pays a unit at step $t$ when $X_t$ is absent, and must then discard a resident to install $X_t$. For a request horizon $T\in\mathbb{N}$, meaning the first $T$ requests of the sequence, define
\begin{align}
    \label{eqn:objective}
    \E[\cost(\mcA,T)] := \E\left[\sum_{t=1}^T \Ind\left[\mcA \text{ faults at step } t\right]\right],
\end{align}
averaging over the source together with any coins the policy tosses. For each fixed horizon $T$, let $\opt$ (or $\opt_T$ when the dependence matters) minimize \eqref{eqn:objective}; it is an online policy, informed about the source and ignorant of the future. Say $\mcA$ is \emph{$c$-competitive} when $\E[\cost(\mcA,T)]\le c\cdot\E[\cost(\opt,T)]$ on every instance and horizon, and let its \emph{competitive ratio} be the least such $c$. Both of our proofs rest on the same elementary move, worth isolating once.

\begin{observation}
    \label{obs:online-witness}
    Let $\mcB$ be any online policy. Then $\E[\cost(\opt,T)]\le\E[\cost(\mcB,T)]$ for every $T$, since $\opt$ minimizes \eqref{eqn:objective} over online policies. So to certify that some $\mcA$ is not $c$-competitive it is enough to exhibit an instance, a horizon, and an online $\mcB$ with $\E[\cost(\mcA,T)]>c\cdot\E[\cost(\mcB,T)]$.
\end{observation}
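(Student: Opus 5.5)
The statement is the Observation: an online witness $\mcB$ with $\E[\cost(\mcA,T)]>c\cdot\E[\cost(\mcB,T)]$ certifies that $\mcA$ is not $c$-competitive. The plan is to unpack the definitions and make one comparison.

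First, fix the instance, i.e.\ the source law, the page set and the cache size $k$, and fix the horizon $T$. By definition, $\opt_T$ minimizes \eqref{eqn:objective} over all online policies that know the source and not the future. $\mcB$ is such a policy, so it is a feasible competitor, and minimality gives $\E[\cost(\opt,T)]\le\E[\cost(\mcB,T)]$.

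If $\mcB$ is randomized, its coin tosses are part of the averaging in \eqref{eqn:objective}. The inequality still holds, either because randomized online policies lie in the feasible class, or because the expectation is an average of deterministic online policies and one of them does at least as well. I would state that the minimum is taken over a class containing randomized online policies, so no derandomization step is needed.

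Second, suppose $\E[\cost(\mcA,T)]>c\cdot\E[\cost(\mcB,T)]$ for the exhibited instance and horizon. Since $c\ge 0$, multiplying the first inequality by $c$ gives
\begin{align*}
\E[\cost(\mcA,T)] > c\cdot\E[\cost(\mcB,T)] \ge c\cdot\E[\cost(\opt,T)].
\end{align*}
This violates the defining inequality of $c$-competitiveness on this instance and horizon. Hence $\mcA$ is not $c$-competitive, and its competitive ratio is at least $c$.

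There is no real obstacle; the only point needing care is bookkeeping. $\opt$ is defined per horizon, so the same $T$ must be used on both sides. The existence of the minimizer is not actually needed: if $\opt$ were only an infimum, the argument goes through unchanged with the infimum in place of $\opt$.
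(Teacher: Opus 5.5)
Your proposal is correct and follows the same reasoning the paper embeds in the observation itself. Minimality of $\opt$ over online policies gives $\E[\cost(\opt,T)]\le\E[\cost(\mcB,T)]$, and chaining this with $c\ge 0$ and the exhibited strict inequality violates $c$-competitiveness at that instance and horizon.
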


\paragraph{Markov sources and dominating distributions.}
A Markov paging instance~\cite{karlin1992markov} fixes a transition matrix $M\in[0,1]^{n\times n}$, known to the algorithm, whose entry $M_{i,j}$ gives the chance that $j$ follows $i$. Under the standard rational-input convention, the hitting probabilities $\alpha(p<q\mid s)$ needed in \eqref{eqn:alpha-def} are computable in polynomial time by finite-chain hitting-probability linear systems, and $\mu$ is dominating when \eqref{eqn:dominating} holds. We start the chain uniformly and give the algorithm and $\opt$ a common initial cache.

\begin{theorem}[Lund, Phillips, and Reingold~{\cite[Theorem 3.4]{lund1999paging}}]
    \label{thm:dom-exists}
    On every instance and at every fault a dominating distribution exists, and one may be found by a linear program with $O(k)$ variables and $O(k)$ constraints.
\end{theorem}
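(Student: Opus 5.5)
The statement to prove is \Cref{thm:dom-exists}: at every fault a dominating distribution exists, and one can be found by a small linear program. I would argue via a two-player zero-sum game together with a symmetry (pairing) argument. Fix the fault, with $s$ the page just requested and $C$ the current cache, $|C|=k$. Consider the $k\times k$ matrix $A$ with entries $A_{p,q}=\alpha(p<q\mid s)$ for $p,q\in C$. A distribution $\mu$ on $C$ is dominating exactly when $\max_{q\in C}(\mu^\top A)_q\le \tfrac12$. By von Neumann's minimax theorem,
\begin{align*}
\min_{\mu}\max_{q}(\mu^\top A)_q=\max_{\nu}\min_{p}(A\nu)_p ,
\end{align*}
where $\mu,\nu$ range over distributions on $C$. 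It therefore suffices to show the value is at most $\tfrac12$.

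The key step is a pairing identity. For distinct $p,q$, the events ``$p$ returns before $q$'' and ``$q$ returns before $p$'' are disjoint, so $A_{p,q}+A_{q,p}\le 1$. The sum is strictly less than $1$ only when neither page ever returns with positive probability. The diagonal entries vanish by the convention $\alpha(p<p\mid s)=0$. For any $\nu$ this gives
\begin{align*}
\nu^\top A\nu=\tfrac12\sum_{p,q}\nu_p\nu_q\,(A_{p,q}+A_{q,p})\le \tfrac12\Bigl(1-\sum_p\nu_p^2\Bigr)\le\tfrac12 .
\end{align*}
Since $\min_p (A\nu)_p\le \nu^\top A\nu$, the value of the game is at most $\tfrac12$. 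The minimizing $\mu$ is therefore dominating, which proves existence.

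For computability, $\mu$ is found by the linear program
\begin{align*}
\text{find }\mu\ge 0\ \text{ with }\ \sum_{p}\mu_p=1,\qquad \sum_p \mu_p A_{p,q}\le \tfrac12\ \ (q\in C).
\end{align*}
It has $k$ variables and $k+1$ constraints beyond nonnegativity, i.e.\ $O(k)$ of each. Equivalently, one can minimize a slack variable $z$ subject to $\sum_p\mu_pA_{p,q}\le z$. The coefficients $A_{p,q}$ are themselves obtained from the hitting-probability linear systems mentioned in the preliminaries: take the chain restarted at $s$ and absorbed at $\{p,q\}$. The LP has rational data, so it can be solved in polynomial time.

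The main obstacle is the inequality $A_{p,q}+A_{q,p}\le1$ and the handling of the diagonal. I expect the diagonal to be harmless because of the stated convention. Two further details need attention. One is the ``returns'' semantics: the request of $s$ itself must not count as a return, since $s$ was just installed and is not in the eviction candidate set in the usual formulation. The other is keeping the pairing valid in edge cases where some page is never requested again. In that case the inequality only becomes slacker, so the argument still goes through.
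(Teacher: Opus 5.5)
Your proof is correct and is essentially the standard argument behind the cited result; the paper gives no proof of its own and defers to Lund, Phillips, and Reingold. That argument uses minimax (equivalently LP duality) to reduce existence to bounding $\max_\nu\min_p (A\nu)_p$, and then plays $p\sim\nu$, using $A_{p,q}+A_{q,p}\le 1$ and $A_{p,p}=0$ to get $\nu^\top A\nu\le\tfrac12$, with the feasibility LP of $k$ variables and $O(k)$ constraints exactly as you wrote it.
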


\paragraph{Independent inter-arrival sources and the median rule.}
For the median result we use the following explicit independent-inter-arrival convention. Each page $v$ has a known positive gap distribution $X_v$, and its successive inter-arrival times are independent copies of $X_v$. The renewal processes of distinct pages are mutually independent. At the beginning of the instance, page $v$ may have a deterministic age $a_v$, known to the policy; equivalently, its first residual waiting time has distribution $X_v-a_v \ \big|\  X_v>a_v$.
After a request to $v$, a fresh independent copy of $X_v$ determines its next request. Merging the renewal epochs of all pages gives the request sequence. Policies observe both request identities and their global request times. Hence the current vector of page ages is part of the observed history, so the policy can update every conditional residual-time distribution.

No tie-breaking between simultaneous requests is needed in our construction. In \Cref{sec:timing}, we assign every page a distinct residue modulo an integer $D$ and choose every possible inter-arrival time to be a multiple of $D$. All requests to a page therefore remain in its assigned residue class, and two distinct pages can never request at the same time. 

This source model is not Markov paging: after independent renewal processes are merged, the law of the next requested page generally depends on the entire vector of current ages, not only on the most recently requested page. Moreover, our lower bound uses the deterministic initial ages above. It therefore does not, as written, imply the same lower bound for fresh age-zero initialization, stationary initialization, or Markov paging.

At a fault at global time $t$, let $W_v(t)$ be the conditional residual waiting time of a resident page $v$. Because these distributions may be discrete, the word ``median'' need not specify a unique number. We make the algorithm deterministic by using the lower $1/2$-quantile, referred to as the {\em lower median},
\begin{align}
    \label{eqn:lower-median}
    \md(Y) := \inf\left\{x : \Pr[Y \le x] \ge \tfrac12\right\}.
\end{align}
The median algorithm $\mcA_{\medalg}$ evicts a resident maximizing $\md(W_v(t))$. Equivalently, it maximizes the median next-request epoch $t+\md(W_v(t))$; the latter notation is more convenient in \Cref{sec:med-lb}, where medians are located in named time windows. The lower-median convention selects the earliest median when a discrete distribution has several medians. In our finite-support instances it is computed directly from the listed atoms. We additionally arrange that the maximizing page is unique at every fault, so no eviction tie-breaking is used. 

\section{A Tight Lower Bound for the Dominating Distribution Algorithm}
\label{sec:dom-lb}

This section establishes \Cref{thm:dom-lb} in two ways. Sections~\ref{sec:selector}--\ref{sec:dom-wrapup} retain the  directed-cycle proof: they define the forward-biased selector, compute its fault rate and the optimal rate, and obtain $2k/(k+1)$. Section~\ref{sec:four-page} then gives a fixed-size four-page refinement with ratio $2-4\varepsilon/3$. The common aperiodicity argument is provided at the end in \Cref{prop:uniform-lazification}.

\subsection{The Forward-Biased Selector}
\label{sec:selector}

An algorithm must choose which dominating $\mu$ it picks. Ours uses the chain only through the domination constraints; its bias among feasible distributions is determined by the fixed page labels.

\begin{definition}[Forward-biased selector]
    \label{def:selector}
    Number the pages $0,1,\dots,n-1$ once and for all. Let a request to $s$ fault, with $C$ the cache just beforehand, and put $e:=s+1 \bmod n$. The rule $\mcA_\fwd$ evicts a draw from an optimal solution of
    \begin{align}
        \label{eqn:selector-lp}
        \max\ \mu(e) \quad \text{subject to} \quad
        \sum_{p \in C}\mu(p)\,\alpha(p<q\mid s) \le \tfrac12 \ \ \text{for every } q \in C,
        \qquad \mu \in \Delta(C),
    \end{align}
    where $\Delta(C)$ is the simplex on $C$, and we extend $\mu(p):=0$ for every $p\notin C$. Among all optimal solutions, choose the lexicographically largest vector $(\mu(0),\mu(1),\dots,\mu(n-1))$.
\end{definition}

\begin{proposition}
    \label{prop:selector-ok}
    On every Markov paging instance $\mcA_\fwd$ is a dominating distribution algorithm, and each of its evictions is computable in time polynomial in $n$, $k$, and the input encoding length.
\end{proposition}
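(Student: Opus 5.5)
The claim has two parts: every distribution $\mcA_\fwd$ selects is dominating, and it can be computed in polynomial time. I will handle them in that order. Throughout I use that the feasible region of \eqref{eqn:selector-lp} is exactly the set of dominating distributions on $C$ for the faulting page $s$.

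\textbf{Well-definedness and domination.} Fix a fault on $s$ with cache $C$. The feasible set $P=\{\mu\in\Delta(C): \sum_p \mu(p)\alpha(p<q\mid s)\le \frac12\ \forall q\in C\}$ is the dominating polytope of \eqref{eqn:dominating}. By \Cref{thm:dom-exists} it is nonempty. It is also compact, being a closed subset of the simplex. Hence the linear objective $\mu(e)$ attains its maximum $z^*$, and the optimal face $P^*=P\cap\{\mu(e)=z^*\}$ is a nonempty compact polytope. The lexicographic maximum over $P^*$ exists and is unique, by an induction on coordinates. At each stage we maximize $\mu(i)$ over the current compact polytope and restrict to the optimal face. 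After $n$ stages every coordinate is pinned, so a single vector remains. That vector lies in $P$, so every eviction draw of $\mcA_\fwd$ comes from a dominating distribution, and $\mcA_\fwd$ is a dominating distribution algorithm. The extension $\mu(p)=0$ for $p\notin C$ fixes those coordinates and changes nothing. If $e\notin C$, the objective is identically $0$, $P^*=P$, and the argument is unchanged.

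\textbf{Polynomial-time computation.} This is the main point needing care. First, the coefficients $\alpha(p<q\mid s)$ are computed as the convention in \Cref{sec:prelim} allows. For each pair $(p,q)$, $\alpha(p<q\mid s)$ is the probability that the chain started at $s$ hits $p$ before $q$. It solves a linear system $h(x)=\sum_y M_{x,y}h(y)$ with boundary values $h(p)=1$ and $h(q)=0$, restricted to states from which $\{p,q\}$ is reachable; unreachable states get value $0$. This gives $O(k^2)$ systems of size $n$, solvable exactly in rational arithmetic by Gaussian elimination with polynomial bit growth. The case of $s$ itself being $p$ or $q$ must respect "returns", so I take the first-step convention $\alpha=\sum_y M_{s,y}h(y)$.

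Next, I solve at most $n+1$ linear programs in sequence. The first maximizes $\mu(e)$ over $P$. Then, for $i=0,\dots,n-1$, I maximize $\mu(i)$ subject to the previous equalities $\mu(e)=z^*$ and $\mu(j)=z_j$ for $j<i$. Each program has $k$ variables and $O(k)$ constraints with rational data of polynomial size. By the ellipsoid or interior-point method each is solvable exactly in polynomial time, returning a rational optimal value of polynomially bounded size, since optimal values are vertex values given by Cramer's rule. So adding the equality constraints keeps the encoding polynomial across the $n+1$ stages.

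The output $\mu$ is rational with polynomial bit size. Sampling $p\sim\mu$ is then polynomial given random bits, or trivial in the sense the paper uses, since we only need the distribution to be computable. The step I expect to need the most care is bounding this bit growth across the lexicographic stages. I would handle it by noting that the final $\mu$ is a vertex of $P$, so its entries have size polynomial in the input regardless of the staging. Alternatively, a single perturbed objective $\mu(e)+\sum_i \eta^{i+1}\mu(i)$ with a small enough rational $\eta$ of polynomial size reproduces the lexicographic choice in one LP.
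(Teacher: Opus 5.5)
Your proposal is correct and follows essentially the same route as the paper: nonemptiness from \Cref{thm:dom-exists} plus compactness, the lexicographic optimum obtained from a chain of linear programs with successively fixed coordinates, domination because every feasible point satisfies \eqref{eqn:dominating}, and polynomial time from standard LP algorithms. Your extra remarks are sound refinements of points the paper leaves implicit: computing the $\alpha$ values by hitting-probability systems, which the paper takes as given under its rational-input convention, and controlling bit growth by noting that each stage optimizes over a face of $P$, so every optimal value is a vertex coordinate of $P$.
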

\begin{proof}
By \Cref{thm:dom-exists}, the feasible region of
\eqref{eqn:selector-lp} is nonempty. It is also a closed subset of the
simplex $\Delta(C)$ and is therefore compact. Consequently, the objective
$\mu(e)$ attains its maximum.

The lexicographically largest maximizer is well defined. It can be obtained
by a sequence of linear programs. First solve \eqref{eqn:selector-lp} and
let $v_e$ be the optimal value of $\mu(e)$. Add the equality
$\mu(e)=v_e$. Then process the pages in their fixed numerical order. For
each coordinate not already fixed, maximize that coordinate subject to the
original constraints and all equalities introduced in the preceding
steps, and fix it at its optimal value. After at most $k$ additional
linear programs, every coordinate in $C$ has been fixed, yielding the
unique lexicographically largest optimum $\mu^\star$.

Every feasible solution of \eqref{eqn:selector-lp} satisfies
\eqref{eqn:dominating}. In particular, $\mu^\star$ is a dominating
distribution, so $\mcA_\fwd$ is a dominating-distribution algorithm.

Each of the linear programs has $k$ variables and $O(k)$ constraints,
including the nonnegativity, normalization, domination, and previously
fixed-coordinate constraints. Assuming that the transition probabilities
and the values $\alpha(p<q\mid s)$ are rational and have polynomial
encoding length, standard linear-programming algorithms solve every one
of these programs in time polynomial in $n$, $k$, and the total input encoding length. Since only $O(k)$ programs are
solved, $\mu^\star$ is computable in polynomial time. Its coordinates are
rational numbers of polynomial encoding length, so a page can be sampled
exactly from $\mu^\star$ in expected polynomial time.
\end{proof}

\Cref{def:selector} describes one algorithm, run unchanged on every input. Where the numbering bears no relation to the dynamics its objective is meaningless and $\mcA_\fwd$ may do perfectly well. The instances below are numbered to make it do as badly as possible.

\subsection{Directed Cycles}
\label{sec:cycle}

Given $k\ge2$, put $n=k+1$ and name the pages by $\Z_n$. We work directly with the deterministic directed cycle as follows,
\begin{figure}[H]
    \centering
    \begin{minipage}[c]{0.50\textwidth}
        \centering
        \def\figscale{0.7}
        \begin{tikzpicture}[scale=\figscale, transform shape]
            \def\N{6}
            \def\R{2}
            \foreach \i in {0,...,5} {
                \pgfmathsetmacro{\ang}{90-360*\i/\N}
                \node[pg] (p\i) at (\ang:\R) {$\i$};
            }
            \foreach \i
                [evaluate=\i as \j using {int(mod(\i+1,\N))}]
                in {0,...,5} {
                \draw[hop] (p\i) to[bend right=12] (p\j);
            }
        \end{tikzpicture}
    \end{minipage}\hfill
    \begin{minipage}[c]{0.46\textwidth}
        \centering
        \begin{equation*}
            \label{eqn:cycle}
            \begin{aligned}
                M_{i,\,i+1\bmod n} &= 1,\\
                M_{i,j} &= 0
                \qquad \text{for } j\neq i+1\bmod n.
            \end{aligned}
        \end{equation*}
    \end{minipage}
    \caption{The directed cycle
    with $n=6$.}
    \label{fig:cycle}
\end{figure}

The chain is irreducible, has period $n$, and has the uniform stationary law. We use the periodic base chain because it makes the proof entirely deterministic. If an aperiodic source is desired, \Cref{prop:uniform-lazification} adds uniform self-loops without changing any hitting-order probability or the competitive-ratio calculation. Let the cache begin as $\Z_n\setminus\{0\}$.

With $n=k+1$ the cache always lacks exactly one page, which we call the \emph{hole}. Faults happen exactly when the hole is requested, and servicing one relocates the hole to whichever page is discarded.
\begin{observation}
    \label{obs:order}
    Suppose $s$ was requested most recently. The next requests are $s+1$, then $s+2$, and so on. Hence the members of $\Z_n\setminus\{s\}$ reappear in the predetermined cyclic sequence
    \begin{align}
        \label{eqn:order}
        e = s+1, \quad s+2, \quad \dots, \quad \ell = s-1 \pmod n,
    \end{align}
    so that for distinct $p,q\in\Z_n\setminus\{s\}$,
    \begin{align}
        \label{eqn:alpha-cycle}
        \alpha(p<q\mid s) = \begin{cases} 1, & p \text{ comes before } q \text{ in \eqref{eqn:order}},\\ 0, & \text{otherwise}.
        \end{cases}
    \end{align}
    In particular $\alpha(e<q\mid s)=1$ for every $q\ne e$, while $\alpha(\ell<q\mid s)=0$ for every $q$.
\end{observation}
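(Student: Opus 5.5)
The statement is \Cref{obs:order}, so the plan is to read everything off the determinism of the cycle. First, since $M_{i,i+1\bmod n}=1$, an immediate induction on $t$ shows that if $s$ is requested at time $t_0$, then the request at time $t_0+j$ is $s+j \bmod n$ almost surely, for every $j\ge 1$. In particular, the first $n-1$ requests after $s$ are $s+1,\dots,s+n-1=s-1$. These are pairwise distinct and exhaust $\Z_n\setminus\{s\}$. Hence each $p\ne s$ first reappears at the deterministic time $t_0+r(p)$, where $r(p)\in\{1,\dots,n-1\}$ is the unique representative of $p-s \bmod n$. This gives the cyclic order \eqref{eqn:order}, with $e=s+1$ having $r=1$ and $\ell=s-1$ having $r=n-1$.

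Second, for distinct $p,q\ne s$, the event ``$p$ returns before $q$'' is the event $r(p)<r(q)$. It is deterministic, so its probability is $1$ or $0$ according to the order in \eqref{eqn:order}, which is exactly \eqref{eqn:alpha-cycle}. When $q=p$, we have $\alpha(p<p\mid s)=0$ by convention. The two special cases then follow directly. Since $r(e)=1$ is minimal, $\alpha(e<q\mid s)=1$ for every $q\ne e$, $q\ne s$. Since $r(\ell)=n-1$ is maximal, $\alpha(\ell<q\mid s)=0$ for every $q$, where the case $q=\ell$ is again covered by the convention.

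The only point needing care is the meaning of the conditioning ``$s$ requested most recently''. Because the chain is Markov, the future depends only on the current state $s$, regardless of earlier history. Pages that coincide with $s$, or the case $q=s$, never arise in \eqref{eqn:dominating} at a fault on $s$, because $s\notin C$. I would state this explicitly, so that $\alpha$ is only ever evaluated on $\Z_n\setminus\{s\}$. No step here is a genuine obstacle; the proof is a direct unwinding of the deterministic transition rule.
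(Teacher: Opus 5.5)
Your proposal is correct and takes the same route as the paper: it reads the fixed return order $s+1,s+2,\dots,s-1$ off the deterministic successor rule and concludes that every hitting-order probability is $0$ or $1$. Restricting explicitly to $p,q\neq s$ is a good clarification, because $\ell=s-1$ does return before $s$ itself, so the final claim about $\ell$ is meant only for $q$ in the cache $C=\Z_n\setminus\{s\}$.
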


Under the numbering in directed cycles, the successor $e=s+1$ selected in \Cref{def:selector} is the resident page requested next, and hence the most costly one to evict.

\begin{lemma}
    \label{lem:split}
    On directed cycles, at every fault on a page $s$, program \eqref{eqn:selector-lp} is solved uniquely by
    \begin{align}
        \label{eqn:split}
        \mu = \tfrac12\delta_e + \tfrac12\delta_\ell, \qquad e = s+1,\quad \ell = s-1 \pmod n .
    \end{align}
    So $\mcA_\fwd$ discards $e$ or $\ell$, each with probability $\tfrac12$.
\end{lemma}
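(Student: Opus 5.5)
The plan is to write the constraints of \eqref{eqn:selector-lp} explicitly using \Cref{obs:order}, and then to solve the resulting program by hand. At a fault on $s$ the hole is $s$, so $C=\Z_n\setminus\{s\}$, and $C$ is listed in the cyclic order \eqref{eqn:order} as $e=c_1,c_2,\dots,c_k=\ell$. By \eqref{eqn:alpha-cycle}, $\alpha(c_i<c_j\mid s)=1$ exactly when $i<j$. The constraint for $q=c_j$ therefore reads
\begin{equation*}
\sum_{i<j}\mu(c_i)\le \tfrac12 \qquad (j=1,\dots,k).
\end{equation*}
These prefix-sum constraints are nested, so the binding one is $j=k$. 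It says $\mu(c_1)+\dots+\mu(c_{k-1})\le\frac12$, or equivalently $\mu(\ell)\ge\frac12$. All other constraints follow from this one, because every shorter prefix sum is at most the $(k-1)$-prefix sum.

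The feasible region is thus exactly $\{\mu\in\Delta(C):\mu(\ell)\ge\frac12\}$. Since $\mu(e)\le 1-\mu(\ell)\le\frac12$ and $e\ne\ell$ (because $k\ge2$), the objective $\mu(e)$ is at most $\frac12$. This bound is attained, and equality forces both $\mu(\ell)=\frac12$ and $\mu(e)=\frac12$. The simplex constraint then leaves zero mass for every other page. So the maximizer is unique and equals \eqref{eqn:split}. Because the optimum is unique, the lexicographic tie-breaking in \Cref{def:selector} plays no role.

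I expect no real obstacle. The only points needing care are the following:
\begin{itemize}[nosep]
\item checking that $e$ and $\ell$ are distinct residents, which holds because $n=k+1\ge3$;
\item checking that the cache at each fault is exactly $\Z_n\setminus\{s\}$, which holds because faults occur precisely at the hole;
\item reading the constraint for $q=e$ as vacuous, since it has an empty prefix sum and $\alpha(p<e\mid s)=0$ for every $p\ne e$ by \eqref{eqn:alpha-cycle}, with $\alpha(e<e\mid s)=0$ by convention.
\end{itemize}
The final sentence of the lemma follows because $\mcA_\fwd$ samples its eviction from this $\mu$.
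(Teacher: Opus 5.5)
Your proof is correct and follows the paper's argument: the constraint indexed by $q=\ell$ forces $\mu(\ell)\ge\frac12$, hence $\mu(e)\le\frac12$, with equality only at $\frac12\delta_e+\frac12\delta_\ell$. The only difference is cosmetic. You note that the nested prefix-sum constraints reduce the feasible region exactly to $\{\mu\in\Delta(C):\mu(\ell)\ge\frac12\}$, which makes feasibility of the split immediate, whereas the paper verifies the constraints for $q=e$ and $q\ne e$ directly.
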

\begin{proof}
    A fault on $s$ means the hole is $s$, so $C=\Z_n\setminus\{s\}$; as $n\ge3$ we have $e,\ell\in C$ with $e\ne\ell$. Thus, \eqref{eqn:split} is dominating: fix $q\in C$ and read off \eqref{eqn:alpha-cycle}. For $q=e$ both $\alpha(e<e\mid s)$ and $\alpha(\ell<e\mid s)$ vanish, the first by convention and the second because $\ell$ is last in \eqref{eqn:order}, so the left side of \eqref{eqn:dominating} is $0$. For $q\ne e$ we get $\alpha(e<q\mid s)=1$, as $e$ leads \eqref{eqn:order}, and $\alpha(\ell<q\mid s)=0$, by convention if $q=\ell$ and because $\ell$ trails otherwise; the left side is $\tfrac12$. Either way \eqref{eqn:dominating} holds, at objective value $\tfrac12$.

    Next, let $\nu$ be dominating and test \eqref{eqn:dominating} at $q=\ell$. Every $p\in C$ other than $\ell$ precedes $\ell$ in \eqref{eqn:order}, so $\alpha(p<\ell\mid s)=1$, whereas $\alpha(\ell<\ell\mid s)=0$; the constraint therefore reads
    \begin{align}
        \label{eqn:half-on-ell}
        1-\nu(\ell) = \sum_{p \in C,\ p \ne \ell}\nu(p)
        = \sum_{p \in C}\nu(p)\,\alpha(p<\ell\mid s) \le \tfrac12 .
    \end{align}
    Thus $\nu(\ell)\ge\tfrac12$ and $\nu(e)\le\tfrac12$. Equality $\nu(e)=\tfrac12$ forces $\nu(\ell)=\tfrac12$ and exhausts all mass, so \eqref{eqn:split} is the unique optimum.
\end{proof}

\Cref{lem:split} contains the factor of $2$ in its simplest form. The constraint \eqref{eqn:half-on-ell}, indexed by the trailing page $\ell$, forces at least half of the mass onto the best available choice. The constraints leave the remaining half unrestricted, and our selector assigns it to the leading page $e$.

\subsection{Fault Rates and the Optimal Policy}
\label{sec:rates}

\Cref{fig:outcomes} shows the two possible inter-fault spacings.

\begin{figure}[H]
    \centering
    \begin{tikzpicture}
        \def\N{6}
        \def\R{1.6}
        \begin{scope}[xshift=-3.1cm]
            \foreach \i/\sty/\lab in {0/here/{$s$}, 1/gap/{$e$}, 2/pg/{},
                                      3/pg/{}, 4/pg/{}, 5/pg/{$\ell$}} {
                \pgfmathsetmacro{\ang}{90-360*\i/\N}
                \node[\sty] (q\i) at (\ang:\R) {\lab};
            }
            \foreach \i [evaluate=\i as \j using {int(mod(\i+1,\N))}] in {0,...,5} {
                \draw[hop, gray!40] (q\i) to[bend right=12] (q\j);
            }
            \draw[hop, red!70!black, line width=1.1pt] (q0) to[bend right=12] (q1);
            \node[align=center, font=\footnotesize] at (0,0) {discard $e$\\w.p.\ $\tfrac12$};
            \node[align=center, font=\scriptsize] at (0,-2.75) {$D=1$ request};
        \end{scope}
        \begin{scope}[xshift=3.1cm]
            \foreach \i/\sty/\lab in {0/here/{$s$}, 1/pg/{$e$}, 2/pg/{},
                                      3/pg/{}, 4/pg/{}, 5/gap/{$\ell$}} {
                \pgfmathsetmacro{\ang}{90-360*\i/\N}
                \node[\sty] (r\i) at (\ang:\R) {\lab};
            }
            \draw[hop, gray!40] (r5) to[bend right=12] (r0);
            \foreach \i [evaluate=\i as \j using {int(mod(\i+1,\N))}] in {0,...,4} {
                \draw[hop, red!70!black, line width=1.1pt] (r\i) to[bend right=12] (r\j);
            }
            \node[align=center, font=\footnotesize] at (0,0) {discard $\ell$\\w.p.\ $\tfrac12$};
            \node[align=center, font=\scriptsize] at (0,-2.75) {$D=k$ requests};
        \end{scope}
    \end{tikzpicture}
    \caption{What one eviction buys, after a fault on $s$ (blue). The hole is the dashed red circle, and the red arcs show the requests before the next fault. Leaving the hole at the leading page $e$ causes a fault one request later; leaving it at the trailing page $\ell$ causes a fault $k$ requests later.}
    \label{fig:outcomes}
\end{figure}

Let $\mcA_\refalg$ discard $\ell=s-1\bmod n$ at every fault on $s$.

\begin{lemma}
    \label{lem:ref-optimal}
    On directed cycles, $\E[\cost(\mcA_\refalg,T)] = \E[\cost(\opt,T)]$ for every $T$.
\end{lemma}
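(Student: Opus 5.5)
Since $\mcA_\refalg$ is itself an online policy, \Cref{obs:online-witness} gives $\E[\cost(\opt,T)]\le\E[\cost(\mcA_\refalg,T)]$ at once. The work is the reverse inequality. I will prove the stronger statement that on every realized request sequence, $\mcA_\refalg$ incurs no more faults in the first $T$ requests than \emph{any} policy, online or offline, started from the common cache. Because the directed cycle is deterministic, the request sequence $X_1,X_2,\dots$ is fully determined by the initial page. Uniform initialization only averages over $n$ deterministic sequences, so it suffices to fix one of them. Taking expectations then yields $\E[\cost(\mcA_\refalg,T)]\le\E[\cost(\opt,T)]$.

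The main step is to check that $\mcA_\refalg$ coincides with Belady's farthest-in-future rule on this sequence. At a fault on $s$, \Cref{obs:order} says the residents $C=\Z_n\setminus\{s\}$ are next requested in the order $s+1,s+2,\dots,s-1$. The page $\ell=s-1$ is therefore the unique resident whose next request is latest. Belady's rule is optimal for the offline problem on any fixed sequence and any horizon $T$: among all eviction schedules from a given initial cache, it minimizes the number of faults in every prefix. Hence $\cost(\mcA_\refalg,T)\le\cost(\mcB,T)$ pathwise for every policy $\mcB$. This includes $\opt_T$, even allowing randomized $\opt$, by conditioning on its coins. One caveat is that Belady's optimality is usually stated for demand paging. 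I will note that $\opt$ may be assumed to be demand paging without loss of generality, or simply cite the classical theorem in its prefix form.

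If a self-contained argument is preferred over citing Belady, I would use the hole picture directly, and I expect this to be the main (mild) obstacle. Under $\mcA_\refalg$, after the first fault the hole sits $k$ steps ahead, so faults occur exactly every $n=k+1$ requests. Any policy with a cache of size $k$ must fault at least once in every window of $n$ consecutive requests, because such a window requests all $k+1$ distinct pages. This lower bound must be aligned with $\mcA_\refalg$'s fault times, so I will partition the horizon into the segment up to $\mcA_\refalg$'s first fault and then consecutive blocks of length $n$. The first fault of $\mcA_\refalg$ is the first request to the initial hole $0$, and every policy faults there too, since all share the initial cache. Each subsequent block from one $\mcA_\refalg$ fault to just before the next contains $n$ distinct requests, so every policy faults at least once in it. A partial final block is handled by the standard Belady exchange on the remaining prefix, or by observing that $\mcA_\refalg$ has no fault in the truncated part. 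This counting gives the pathwise bound and completes the proof.
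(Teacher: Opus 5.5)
Your primary argument is the paper's proof: $\mcA_\refalg$ is online, so $\opt$ is no worse, and on every realized stream it evicts the resident $\ell=s-1$ whose next request is farthest in the future (\Cref{obs:order}), so Belady's offline optimality makes it pathwise no worse than any policy, including $\opt$. Your remarks on prefixes, randomized $\opt$, and demand paging only make explicit what the paper leaves implicit.

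The optional self-contained counting argument, however, is wrong as written. After a fault on $s$, the hole $s-1=s+k$ is requested exactly $k$ requests later, not $n=k+1$; this is the spacing used in \Cref{prop:opt-rate}, which gives rate $1/k$. So the block from one $\mcA_\refalg$ fault to just before the next contains only the $k$ distinct pages $s,\dots,s+k-1$. A cache of size $k$ can hold all of them (e.g.\ one whose hole is $s-1$ at that moment), so counting distinct pages forces no fault there.

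Your window lemma (one fault per $n$ consecutive requests) is true but too weak. On disjoint windows it yields only about $T/(k+1)$ faults, which is below $\mcA_\refalg$'s roughly $T/k$.

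The fix is to use the half-open blocks $(t_i,t_{i+1}]$ between consecutive $\mcA_\refalg$ fault times. Right after serving $X_{t_i}=s$, every cache contains $s$. Hence at most $k-1$ of the $k$ distinct pages $s+1,\dots,s+k$ requested in $(t_i,t_{i+1}]$ are resident. Since pages enter only at faults, every policy faults in that block. Add the common fault at $t_1$ (the first request to the initial hole $0$), and note that $\mcA_\refalg$ has no fault after its last $t_i\le T$. Together these give the pathwise bound.
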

\begin{proof}
    Fix a realized request stream and a fault on $s$. By \Cref{obs:order}, $\ell=s-1$ is the last resident to return, so discarding $\ell$ is a farthest-in-future eviction. Belady's rule~\cite{belady1966study} is offline optimal, hence $\mcA_\refalg$ matches the offline optimum on every trajectory. Since $\mcA_\refalg$ is also online, \Cref{obs:online-witness} gives equality with $\opt$ in expectation at every horizon.
\end{proof}

The following elementary bounded-renewal lemma converts the
inter-fault spacing distributions of the two policies into their
asymptotic expected fault rates.
\begin{lemma}[Bounded renewal rate]
    \label{lem:rate}
    Fix $k\ge2$ and consider a possibly randomized policy $\mcA$ on the
    directed-cycle instance. Let $J$ be the index of its
    first fault, and suppose that $1\le J\le k+1$. After the first fault,
    let $D_1,D_2,\ldots$ be the numbers of requests between consecutive
    faults. Suppose that they are i.i.d.\ copies of a random variable $D$
    satisfying $1\le D\le k$, and that $J$ is independent of
    $D_1,D_2,\ldots$. Then, uniformly over $T\ge1$,
    \begin{align}
        \E[\cost(\mcA,T)]
        =\frac{T}{\E[D]}+O_k(1).
    \end{align}
\end{lemma}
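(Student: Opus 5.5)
The plan is to treat the fault epochs as a delayed renewal process with bounded inter-renewal times and to compare the cost with $T/\E[D]$ through Wald's identity. Set $S_0:=J$ and $S_m:=J+D_1+\dots+D_m$, so the faults occur exactly at the epochs $S_0<S_1<S_2<\cdots$. Let $N(T):=\#\{m\ge0: S_m\le T\}$ be the number of faults among the first $T$ requests. Then $\cost(\mcA,T)=N(T)$, and the task is to show $\E[N(T)]=T/\E[D]+O_k(1)$ uniformly in $T$.

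I would first handle the case $T\ge J$ and introduce the stopping index $\tau:=N(T)$. This is the first $m\ge0$ with $S_m>T$, so $\tau$ is also the number of faults at or before $T$. With respect to the filtration generated by $J$ and $D_1,\dots,D_m$, the event $\{\tau\le m\}=\{S_m>T\}$ is measurable, so $\tau$ is a stopping time. It is bounded by $T+1$ because every $D_i\ge1$. The independence of $J$ from the i.i.d.\ sequence $(D_i)$ then lets Wald's identity apply, giving $\E[S_\tau]=\E[J]+\E[\tau]\,\E[D]$. Overshoot control comes next. Since $S_{\tau-1}\le T<S_\tau=S_{\tau-1}+D_\tau\le T+k$ when $\tau\ge1$, and $S_0=J\le k+1$ otherwise, we have $T<S_\tau\le T+k+1$ in all cases. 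Substituting gives
\begin{align*}
\frac{T-\E[J]}{\E[D]}<\E[\tau]\le\frac{T+k+1-\E[J]}{\E[D]}.
\end{align*}
Because $1\le J\le k+1$ and $1\le\E[D]\le k$, both sides equal $T/\E[D]$ up to an additive constant depending only on $k$ (at most $k+1$ in absolute value). Hence $\E[N(T)]=T/\E[D]+O_k(1)$.

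The case $T<J$ needs separate treatment only because the event $\{T<J\}$ can occur when $T\le k$. For such small $T$, both $N(T)\le T\le k$ and $T/\E[D]\le k$ are $O_k(1)$, so the claim is trivial. I would therefore simply split on $T\le k+1$ versus $T>k+1$; for $T>k+1$ the analysis above applies to every realization.

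The only delicate point is verifying the hypotheses of Wald's identity: that $\tau$ is a stopping time for a filtration to which $(D_i)$ remains i.i.d.\ and independent of the past, which is what the stated independence of $J$ from the $D_i$ buys. Integrability is immediate from boundedness, since $\tau\le T+1$ and $D\le k$. The rest is bookkeeping of constants.
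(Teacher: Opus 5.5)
Your proof is correct and follows essentially the paper's route: a Wald identity for the first-passage index of the fault epochs, combined with the bounded overshoot $T<S_\tau\le T+k+1$. The differences are only in packaging. You fold the delay $J$ into $S_0$ and cite Wald's identity, which is justified by the joint independence of $J$ from the $D_i$. The paper instead conditions on $J=j$, proves the identity by hand via the independence of $\{\tau\ge i\}$ from $D_i$ and the tail-sum formula, and then averages over $J$. Your separate case $T\le k+1$ is also unnecessary, since $\tau=0$ is already covered by your overshoot bound.
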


\begin{proof}
    Write $\bar d:=\E[D]$, and define
    \begin{align*}
        S_0:=0,\qquad
        S_r:=\sum_{i=1}^r D_i,\qquad
        R(m):=\max\{r\ge0:S_r\le m\}.
    \end{align*}
    Thus $R(m)$ counts the complete inter-fault spacings contained in
    $m$ requests after a fault. Put $\tau:=R(m)+1$. Since
    $1\le D_i\le k$, $m<S_\tau\le m+k$ and $\tau\le m+1$. 
    Moreover, $\{\tau\ge i\}=\{S_{i-1}\le m\}$ depends only on
    $D_1,\ldots,D_{i-1}$ and is independent of $D_i$. 
    Since $\tau\le m+1$, we may write $S_\tau = \sum_{i=1}^{\tau}D_i = \sum_{i=1}^{m+1}D_i\Ind[\tau\ge i]$.
    Taking expectations and using linearity gives
    \begin{align}
        \E[S_\tau]
        =\sum_{i=1}^{m+1}
          \E\!\left[D_i\Ind[\tau\ge i]\right].
        \label{eqn:stopped-sum-expansion}
    \end{align}
    Since for each $i$, the event $\{\tau\ge i\}$ is independent of $D_i$, $\E\!\left[D_i\Ind[\tau\ge i]\right] = \E[D_i]\Pr(\tau\ge i) = \bar d\,\Pr(\tau\ge i)$, where $\bar d = \E[D]$. 
    Substituting this identity into
    \eqref{eqn:stopped-sum-expansion}, and using the tail-sum formula
    for the positive integer-valued random variable $\tau$, yields $\E[S_\tau] =\bar d\sum_{i=1}^{m+1}\Pr(\tau\ge i) =\bar d\,\E[\tau]$.
    Combining $m<\E[S_\tau]\le m+k$ with $\E[S_\tau]=\bar d\,\E[\tau]$ gives ${m}/{\bar d} <\E[\tau] \le (m+k)/{\bar d}$.
    Finally, since $\tau=R(m)+1$,
    \begin{align}
        \frac{m}{\bar d}-1
        <\E[R(m)]
        \le\frac{m+k}{\bar d}-1.
        \label{eqn:renewal-count-bounds}
    \end{align}
    Because $1\le\bar d\le k$, the difference between
    $\E[R(m)]$ and $m/\bar d$ is bounded by a constant depending only
    on $k$, uniformly over $m$. Hence $\E[R(m)] ={m}/{\bar d}+O_k(1)$.
    Conditional on $J=j$, the number of faults among the first $T$
    requests is $0$ when $T<j$, and is $1+R(T-j)$ when $T\ge j$. Since $j\le k+1$, $\E[R(m)] ={m}/{\bar d}+O_k(1)$ implies,
    uniformly over $T$ and $j$, $\E[\cost(\mcA,T)\mid J=j] = {T}/{\bar d} + O_k(1)$. Averaging over $J$ proves the claim.
\end{proof}

\begin{proposition}
    \label{prop:fwd-rate}
    On the directed cycle instance, $\E[\cost(\mcA_\fwd,T)] = \frac{2}{k+1}T + O_k(1)$.
\end{proposition}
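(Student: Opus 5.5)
The plan is to apply \Cref{lem:rate} to $\mcA_\fwd$, with $D$ taking values $1$ and $k$, each with probability $\tfrac12$. Then $\E[D]=(k+1)/2$ and $T/\E[D]=\tfrac{2}{k+1}T$. So it suffices to check the hypotheses of \Cref{lem:rate}: the first-fault index $J$ lies in $[1,k+1]$, the subsequent spacings are i.i.d.\ with this law and lie in $[1,k]$, and $J$ is independent of the spacings.

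The first-fault index is handled by the start. The chain starts uniformly on $\Z_n$ with cache $\Z_n\setminus\{0\}$, so the hole is $0$. The deterministic cycle reaches page $0$ within at most $n=k+1$ requests; if $X_1=0$ then $J=1$. So $1\le J\le k+1$, and $J$ is a function only of the uniformly random start.

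For the spacings, consider any fault on a page $s$. By \Cref{lem:split}, $\mcA_\fwd$ evicts $e=s+1$ or $\ell=s-1$ with probability $\tfrac12$ each, using fresh coins. The hole then moves to the evicted page, and by \Cref{obs:order} the next requests are $s+1,s+2,\dots$ deterministically.
\begin{itemize}
\item If $e$ was evicted, the very next request hits the hole, so the spacing is $D=1$.
\item If $\ell=s-1$ was evicted, the hole is reached after $k$ requests (positions $s+1,\dots,s+k=s-1$), so $D=k$.
\end{itemize}
The configuration after each fault is again ``fault on the current page, hole there, cycle deterministic.'' So each spacing depends only on that fault's coin. The coins are independent of each other and of the starting state, which gives the i.i.d.\ and independence hypotheses, with $1\le D\le k$ as \Cref{lem:rate} requires.

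The main point needing care is this independence and memorylessness. The spacing law must be the same at every fault regardless of the page $s$. This holds because \Cref{lem:split} gives the same split $\tfrac12\delta_e+\tfrac12\delta_\ell$ at every fault, and because the cycle is vertex-transitive. With that, \Cref{lem:rate} yields $\E[\cost(\mcA_\fwd,T)]=\tfrac{2}{k+1}T+O_k(1)$.
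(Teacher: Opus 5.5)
Your proposal is correct and matches the paper's proof: \Cref{lem:split} leaves the hole at $e$ or $\ell$ with independent probability $\tfrac12$ each, which gives i.i.d.\ spacings $1$ or $k$ with $\E[D]=(k+1)/2$, and then \Cref{lem:rate} gives the rate. Your explicit check that $1\le J\le k+1$ and that $J$ depends only on the uniform start, independent of the eviction coins, fills in a hypothesis of \Cref{lem:rate} that the paper leaves implicit.
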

\begin{proof}
    After a fault, \Cref{lem:split} leaves the hole at $e$ or $\ell$, independently with probability $1/2$. Thus the next fault is respectively $1$ or $k$ requests later. The inter-fault spacings are therefore i.i.d. with
    \[
        \Pr[D=1]=\Pr[D=k]=\tfrac12,
        \qquad \E[D]=\frac{k+1}{2}.
    \]
    Applying \Cref{lem:rate} with $\E[D]=(k+1)/2$ gives bound on the expected cost of $\mcA_\fwd$.
\end{proof}

\begin{proposition}
    \label{prop:opt-rate}
    On the directed cycle instance, $\E[\cost(\opt,T)] = \frac{1}{k}T + O_k(1)$.
\end{proposition}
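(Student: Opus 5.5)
By \Cref{lem:ref-optimal}, $\E[\cost(\opt,T)]=\E[\cost(\mcA_\refalg,T)]$ for every $T$. It therefore suffices to compute the expected cost of $\mcA_\refalg$ and show that it equals $T/k+O_k(1)$. We will do this by verifying the hypotheses of \Cref{lem:rate} for $\mcA_\refalg$, with a deterministic spacing $D\equiv k$.

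First, we bound the first fault. The cache starts as $\Z_n\setminus\{0\}$, so the hole is page $0$. The chain starts uniformly at a page $X_1$, and requests then proceed deterministically around the cycle. Hence page $0$ is requested within the first $n=k+1$ requests, so $1\le J\le k+1$. The law of $J$ depends only on $X_1$. Since all later spacings will turn out to be deterministic, the independence of $J$ from $D_1,D_2,\dots$ holds trivially.

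Next, we identify the spacings. At a fault on $s$, $\mcA_\refalg$ discards $\ell=s-1$, which moves the hole to $s-1$. By \Cref{obs:order}, the subsequent requests are $s+1,s+2,\dots$. These are all resident until $s-1=s+k$ is requested, which is the $k$-th request after the fault. Thus each inter-fault spacing is exactly $k$, so the $D_i$ are i.i.d.\ (constant) with $1\le D\le k$ and $\E[D]=k$. Applying \Cref{lem:rate} gives $\E[\cost(\mcA_\refalg,T)]=T/k+O_k(1)$, and combining with \Cref{lem:ref-optimal} yields the claim.

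No step here is a real obstacle. The only point needing care is the initial segment before the first fault, since the uniform start makes $J$ random. It suffices that $J\le k+1$ and that the spacings do not depend on it, which is exactly what \Cref{lem:rate} requires.
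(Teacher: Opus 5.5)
Your proposal is correct and matches the paper's proof: reduce to $\mcA_\refalg$ via \Cref{lem:ref-optimal}, observe that every inter-fault spacing is exactly $k$ because the hole moves to $s-1\equiv s+k \pmod{k+1}$, and apply \Cref{lem:rate} with $D\equiv k$. Your explicit check that $1\le J\le k+1$ under the uniform start is left implicit in the paper, but it is exactly the hypothesis \Cref{lem:rate} requires.
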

\begin{proof}
    By \Cref{lem:ref-optimal} it suffices to analyze $\mcA_\refalg$. After every fault it leaves the hole at $\ell$, which is requested exactly $k$ requests later. Hence every inter-fault spacing equals $k$. Applying \Cref{lem:rate} with $D=k$ deterministically proves the cost analysis.
\end{proof}

\subsection{Proof of \texorpdfstring{\Cref{thm:dom-lb}}{Theorem 1}}
\label{sec:dom-wrapup}

\thmdomlb*

\begin{proof}
    \Cref{prop:selector-ok} makes $\mcA_\fwd$ a dominating, polynomial-time rule. Fix $k\ge2$ and run it on \eqref{eqn:cycle}. Dividing \Cref{prop:fwd-rate} by \Cref{prop:opt-rate}, the bounded additive terms are swamped by the linear growth:
    \begin{align}
        \label{eqn:limit}
        \lim_{T\to\infty}\frac{\E[\cost(\mcA_\fwd,T)]}{\E[\cost(\opt,T)]}
        = \frac{2/(k+1)}{1/k} = \frac{2k}{k+1} .
    \end{align}
    Given $c<2$, pick an integer $k>\frac{c}{2-c}$. Then \eqref{eqn:limit} exceeds $c$, so $\mcA_\fwd$ is not $c$-competitive. If aperiodicity is imposed as part of the Markov-source convention, apply the uniform lazification of \Cref{prop:uniform-lazification}; it preserves both the selector's domination feasibility and the ratio in \eqref{eqn:limit}.
\end{proof}

\subsection{A Fixed-Size Four-Page Construction}
\label{sec:four-page}
The directed-cycle proof approaches $2$ by increasing the cache size to infinity. We now give a second proof that keeps $n=4$ and $k=3$ fixed. The constant is again approached through a family of instances, now by sending a transition probability $\varepsilon$ to zero. The two proofs share the same local geometry: a tight domination constraint leaves a bad extreme point supported on a soon-to-be-requested page and a safe late page. The difference is how the bad choice reproduces itself. The cycle does so after every fault by rotation; the four-page chain does so recurrently through rare switches between two blocks.

\begin{theorem}[Four-page refinement]
    \label{thm:four-page-lb}
    There is a globally defined dominating distribution algorithm $\mcA_\blkalg$, computable in polynomial time, with competitive ratio exactly $2$. More precisely, for every $0<\varepsilon<1/2$ there is an irreducible Markov paging instance $I_\varepsilon$ on four pages with cache size three such that
    \begin{align}
        \label{eqn:block-main-ratio}
        \liminf_{T\to\infty}
        \frac{\E[\cost(\mcA_\blkalg,T)]}{\E[\cost(\opt,T)]}
        \ge 2-\frac{4}{3}\varepsilon .
    \end{align}
    Both the witness selector and a comparison policy have positive linear fault rates on every fixed $I_\varepsilon$.
\end{theorem}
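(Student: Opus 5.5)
We will build $I_\varepsilon$ from the sketch in the technical overview. The pages are $\{0,1,2,3\}$, grouped into two blocks $B_0=\{0,1\}$ and $B_1=\{2,3\}$. Inside a block the chain alternates deterministically, except that when leaving the exit page it switches blocks with probability $\varepsilon$. Concretely, $0\to1$ with probability $1$. From $1$ the chain goes to $0$ with probability $1-\varepsilon$ and to $2$ with probability $\varepsilon$. Symmetrically, $2\to3$ with probability $1$, and from $3$ the chain goes to $2$ with probability $1-\varepsilon$ and to $0$ with probability $\varepsilon$. The chain is irreducible, and the relevant hitting-order probabilities are $0$, $1$, $\varepsilon$ or $1-\varepsilon$.

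The selector $\mcA_\blkalg$ is defined like \Cref{def:selector}: it maximizes the mass on a label-determined "next" page, then takes the lexicographically largest optimum. The label map is chosen so that on $I_\varepsilon$ the targeted page is the resident requested soonest. \Cref{prop:selector-ok} carries over verbatim, giving domination and polynomial time.

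The first step is to tabulate, for each faulting page $s$ and each cache $C=V\setminus\{s\}$ that can arise, the values $\alpha(p<q\mid s)$ and to solve the selector LP exactly, as in \Cref{lem:split}. There are two types of fault.

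At an entry-page fault (on $0$ or $2$), the next page $e$ beats every other resident surely, and the late page $\ell$ loses surely. The constraint at $q=\ell$ forces $\mu(\ell)\ge\tfrac12$, so the optimum is $\tfrac12\delta_e+\tfrac12\delta_\ell$.

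At an exit-page fault (on $1$ or $3$), the next page beats $\ell$ only with probability $1-\varepsilon$. The tight constraint becomes $\mu(e)(1-\varepsilon)+\sum_{\text{others}}\mu\le\tfrac12$, which yields $\mu(e)=1/[2(1-\varepsilon)]$ with the rest on $\ell$. I must check that the remaining constraints (at $q=e$ and at the middle page) stay slack, and that the lexicographic tie-break selects this extreme point.

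The comparison policy $\mcA_\latealg$ always evicts the page that returns last in the relevant sense. I do not need it to be optimal: by \Cref{obs:online-witness} it only upper-bounds $\opt$, which is the direction the ratio requires.

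Next I lump the joint process (current page, hole position) into six states using the block-swap symmetry $0\leftrightarrow2$, $1\leftrightarrow3$. I write the transition kernel of this lumped chain under $\mcA_\blkalg$, and separately under $\mcA_\latealg$, and solve for the stationary distributions. Multiplying by the fault indicators gives the stationary fault rates, which I expect to be $\varepsilon(1-2\varepsilon/3)$ and $\varepsilon/2$ respectively.

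To turn these rates into expected costs, I replace \Cref{lem:rate} with the ergodic theorem for a finite irreducible Markov chain on the lumped state space: the expected number of faults up to time $T$ is the stationary rate times $T$, plus $O(1)$. If the lumped chain is periodic, I would either take Cesàro averages (which suffice for a liminf) or apply \Cref{prop:uniform-lazification}. The stated ratio then follows from
\[
\frac{\varepsilon(1-2\varepsilon/3)}{\varepsilon/2}=2-\tfrac43\varepsilon.
\]
Exactness at $2$ comes from combining this with the upper bound of \cite[Theorem 1]{pabbaraju2025new} and letting $\varepsilon\to0$.

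I expect the main obstacle to be the exit-page LP. I need to verify that the mass $1/[2(1-\varepsilon)]$ on the next page is feasible against every constraint, not just the one at $\ell$, and that it is the selector's actual optimum. I also need to confirm that the resulting miss reproduces a state of the same type, so that the six-state chain closes and its stationary computation yields exactly the claimed rate. If the chain does not close, I would adjust which page plays the "middle" role, or refine the lumping, before doing any algebra.
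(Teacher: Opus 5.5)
Your proposal is correct and is essentially the paper's proof: the same instance, the same extreme point $\beta_r\mathbf e_{\nxt(r)}+(1-\beta_r)\mathbf e_{\lst(r)}$ with $\beta_r=1/(2a_r)$, the same six-state lumping giving $\varepsilon(1-2\varepsilon/3)$ against $\varepsilon/2$, a Ces\`aro argument, and the cited upper bound for exactness. That extreme point is the unique LP optimum, so your label-based LP selector (with a fixed pairing map such as $s\mapsto s\oplus1$) agrees with the paper's $\mcA_{\blkalg}$ on $I_\varepsilon$, and it is arguably a cleaner global definition.

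Two small fixes are needed. First, at that vertex the constraint indexed by $\thr(r)$ is tight ($a_r\mu_n=\tfrac12$), not slack; feasibility is all you need. Second, under $\mcA_{\latealg}$ the lumped chain is not irreducible: its unique closed class is $\{E_n,E_\ell,X_\ell\}$, and $E_t,X_t,X_n$ are transient. So you should invoke the Ces\`aro theorem for a finite chain with a single recurrent class rather than the irreducible version.
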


\paragraph{The instance.}
Fix a rational $\varepsilon\in(0,\tfrac12)$, let $V=\{0,1,2,3\}$, and split $V$ into the two \emph{blocks} $\{0,1\}$ and $\{2,3\}$. Pages $0$ and $2$ are \emph{entry} pages and pages $1$ and $3$ are \emph{exit} pages. The nonzero transition probabilities are
\begin{align}
    \label{eqn:blocks}
    M_{0,1}=M_{2,3}=1, \qquad
    M_{1,0}=M_{3,2}=1-\varepsilon, \qquad
    M_{1,2}=M_{3,0}=\varepsilon;
\end{align} 
see \Cref{fig:blocks}.
Thus requests alternate inside the current block until an exit page switches to the other block's entry page, an event of probability $\varepsilon$. The positive directed cycle $0\to1\to2\to3\to0$ makes the chain irreducible. \Cref{prop:uniform-lazification} makes it aperiodic without changing the ratio.

\begin{figure}[H]
    \centering
    \begin{tikzpicture}[scale=1.0, transform shape]
        \node[pg] (p0) at (0,1.2)   {$0$};
        \node[pg] (p1) at (0,-1.2)  {$1$};
        \node[pg] (p2) at (4.8,-1.2){$2$};
        \node[pg] (p3) at (4.8,1.2) {$3$};

        \draw[hop] (p0) to[bend right=26]
            node[midway, left,  font=\scriptsize, inner sep=1.5pt] {$1$} (p1);
        \draw[hop] (p1) to[bend right=26]
            node[midway, right, font=\scriptsize, inner sep=1.5pt] {$1-\varepsilon$} (p0);
        \draw[hop] (p2) to[bend right=26]
            node[midway, right, font=\scriptsize, inner sep=1.5pt] {$1$} (p3);
        \draw[hop] (p3) to[bend right=26]
            node[midway, left,  font=\scriptsize, inner sep=1.5pt] {$1-\varepsilon$} (p2);

        \draw[hop, red!65!black] (p1) --
            node[midway, below, font=\scriptsize, inner sep=2.5pt] {$\varepsilon$} (p2);
        \draw[hop, red!65!black] (p3) --
            node[midway, above, font=\scriptsize, inner sep=2.5pt] {$\varepsilon$} (p0);

        \node[font=\scriptsize, anchor=south] at (0,1.66)    {entry};
        \node[font=\scriptsize, anchor=north] at (0,-1.66)   {exit};
        \node[font=\scriptsize, anchor=south] at (4.8,1.66)  {exit};
        \node[font=\scriptsize, anchor=north] at (4.8,-1.66) {entry};
        \node[font=\scriptsize] at (0,2.34)   {block $1$};
        \node[font=\scriptsize] at (4.8,2.34) {block $2$};
    \end{tikzpicture}
    \caption{The two-block chain \eqref{eqn:blocks}.}
    \label{fig:blocks}
\end{figure}
Since $|V|=k+1$, exactly one page is missing after every request. At a miss on $r$, the candidates are $C_r=V\setminus\{r\}$. Define
\begin{align}
    \label{eqn:roles}
    \renewcommand{\arraystretch}{1.1}
    \begin{array}{c|cccc}
     r&0&1&2&3\\ \hline
     \nxt(r)&1&0&3&2\\
     \lst(r)&3&3&1&1\\
     \thr(r)&2&2&0&0
    \end{array}
    \qquad\qquad
    a_r := \begin{cases}
        1, & r \text{ an entry page},\\
        1-\varepsilon, & r \text{ an exit page}.
    \end{cases}
\end{align}
Here $\nxt(r)$ is the other page in $r$'s block, while $\thr(r)$ and $\lst(r)$ are respectively the entry and exit pages of the other block.
\begin{lemma}[Hitting-order table]
    \label{lem:block-alpha}
    For every $r\in V$,
    \begin{align*}
        \alpha(\nxt(r)<\thr(r)\mid r)
        &=\alpha(\nxt(r)<\lst(r)\mid r)=a_r,\\
        \alpha(\thr(r)<\nxt(r)\mid r)
        &=\alpha(\lst(r)<\nxt(r)\mid r)=1-a_r,\\
        \alpha(\thr(r)<\lst(r)\mid r)&=1,
        \qquad\alpha(\lst(r)<\thr(r)\mid r)=0.
    \end{align*}
\end{lemma}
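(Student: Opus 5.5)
The plan is to compute the three hitting-order quantities directly from the transition structure \eqref{eqn:blocks}, using two symmetries to reduce the work. The first symmetry is the page relabeling $0\mapsto2\mapsto0$, $1\mapsto3\mapsto1$, which swaps the two blocks, preserves $M$, and commutes with $\nxt$, $\lst$, $\thr$ and $a_r$. With it, it suffices to treat $r\in\{0,1\}$. The second observation is structural. For distinct $p,q\ne r$, the events ``$p$ returns before $q$'' and ``$q$ returns before $p$'' are complementary up to the event that neither is ever hit. Since the chain is finite and irreducible, both pages are hit almost surely, so $\alpha(p<q\mid r)+\alpha(q<p\mid r)=1$. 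This means the second line of the lemma follows from the first, and the right-hand identity of the last line follows from the left-hand one.

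Next I would establish the last line, $\alpha(\thr(r)<\lst(r)\mid r)=1$. The argument is that the only edges entering the other block are $1\to2$ and $3\to0$, which land on that block's entry page. Indeed, $\thr(r)$ is the entry page of the block not containing $r$, and $\lst(r)$ is its exit page. Starting inside $r$'s block, any path to $\lst(r)$ must first enter the other block. That entry occurs through $\thr(r)$, because the exit page is reachable only from the entry page ($M_{0,1}=M_{2,3}=1$ are the only edges into $1$ and $3$). Hence $\thr(r)$ is hit strictly before $\lst(r)$ on every trajectory.

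Finally comes the first line. For $r=0$ (an entry page), the next request is deterministically $1=\nxt(0)$. So $\nxt(r)$ is hit at step one, before either page of the other block, giving $a_0=1$. For $r=1$ (an exit page), the next request is $0=\nxt(1)$ with probability $1-\varepsilon$, and $2=\thr(1)$ with probability $\varepsilon$. In the first case $\nxt$ precedes both $\thr$ and $\lst$. In the second case $\thr$ is hit immediately, and $\lst$ is hit next or at least before any return to block one. The reason is that from $2$ the chain goes to $3=\lst(1)$, and leaving block two requires passing through $3$. So $\nxt$ loses to both, and $\alpha(\nxt<\thr\mid 1)=\alpha(\nxt<\lst\mid 1)=1-\varepsilon=a_1$. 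The main care point is this exit-page case: one must check that in the $\varepsilon$-branch, $\lst(r)$ really precedes the return to $\nxt(r)$. That holds because the only edge back into block one is $3\to0$, and $3=\lst(1)$. With this verified, the symmetry extends everything to $r=2,3$.
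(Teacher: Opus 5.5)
Your proposal is correct and follows the paper's route: you use the block-swapping involution $(0\ 2)(1\ 3)$ to reduce to $r\in\{0,1\}$, then read the hitting orders off the first one or two transitions from $0$ and from $1$, using that an exit page can be entered only from its own block's entry page. Your explicit complementarity remark $\alpha(p<q\mid r)+\alpha(q<p\mid r)=1$ for distinct $p,q\ne r$, which holds by irreducibility, is a bookkeeping step the paper leaves implicit; it derives the second line and the zero identity in the third line from the remaining ones.
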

\begin{proof}
    The involution $(0\ 2)(1\ 3)$ is an automorphism, so it is enough to consider $r=0$ and $r=1$. From $r=0$, the next request is $1=\nxt(0)$, so it precedes pages $2$ and $3$ surely. Moreover, page $3$ can first be entered only from page $2$, and hence $2=\thr(0)$ precedes $3=\lst(0)$ surely.

    From $r=1$, the next request is $0=\nxt(1)$ with probability $1-\varepsilon$, in which case it precedes both pages in the other block. With probability $\varepsilon$, the next request is $2=\thr(1)$ and is followed deterministically by $3=\lst(1)$; both then precede page $0$. Again $2$ precedes $3$ surely. These cases give the table.
\end{proof}
At a miss on $r$, abbreviate the masses on $\nxt(r),\lst(r),\thr(r)$ by $\mu_n,\mu_\ell,\mu_t$.

\begin{lemma}
    \label{lem:block-polytope}
    For every miss state $r$, the domination constraints are equivalent to
    \begin{align}
        \label{eqn:block-polytope}
        \mu_n,\mu_\ell,\mu_t&\ge0, &
        \mu_n+\mu_\ell+\mu_t&=1, &
        a_r\mu_n+\mu_t&\le\frac12.
    \end{align}
    The vertices are $\mathbf e_{\lst(r)}$, $\tfrac12\mathbf e_{\lst(r)}+\tfrac12\mathbf e_{\thr(r)}$, and $\mu_r^*:=\beta_r\mathbf e_{\nxt(r)}+(1-\beta_r)\mathbf e_{\lst(r)}$ where $\beta_r:=\frac{1}{2a_r}$.  
\end{lemma}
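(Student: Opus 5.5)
The plan is to write out the three domination constraints of \eqref{eqn:dominating} using the entries of \Cref{lem:block-alpha}, and then show that two of them are implied by the third together with membership in the simplex. At a miss on $r$ the cache is $C_r=\{\nxt(r),\lst(r),\thr(r)\}$, and by convention $\alpha(p<p\mid r)=0$. The three constraints, indexed by $q$, read as follows.

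For $q=\nxt(r)$ the constraint is $(1-a_r)\mu_\ell+(1-a_r)\mu_t\le\tfrac12$. For $q=\lst(r)$ it is $a_r\mu_n+\mu_t\le\tfrac12$, using $\alpha(\thr<\lst\mid r)=1$. For $q=\thr(r)$ it is $a_r\mu_n+0\cdot\mu_\ell\le\tfrac12$. The $q=\thr(r)$ inequality follows from the $q=\lst(r)$ inequality because $\mu_t\ge0$. The $q=\nxt(r)$ inequality holds on the whole simplex: its left side is at most $1-a_r\le\varepsilon<\tfrac12$, since $a_r\in\{1,1-\varepsilon\}$ and $\varepsilon<\tfrac12$. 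This gives the equivalence with \eqref{eqn:block-polytope}.

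To find the vertices, I will view the region as the simplex cut by one half-space, so it is a polygon in the two-dimensional plane $\mu_n+\mu_\ell+\mu_t=1$. Its vertices are of two kinds. The first kind are simplex vertices that satisfy the cut. Here $\mathbf e_{\lst(r)}$ gives $0\le\tfrac12$ and qualifies. The vertex $\mathbf e_{\thr(r)}$ gives $1>\tfrac12$ and fails. The vertex $\mathbf e_{\nxt(r)}$ gives $a_r\ge1-\varepsilon>\tfrac12$ and also fails. The second kind are points where the line $a_r\mu_n+\mu_t=\tfrac12$ meets an edge of the simplex.

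I will check the three edges in turn:
\begin{itemize}
\item On the edge $\mu_n=0$ the line gives $\mu_t=\tfrac12$, which is the point $\tfrac12\mathbf e_{\lst}+\tfrac12\mathbf e_{\thr}$.
\item On the edge $\mu_t=0$ it gives $\mu_n=1/(2a_r)=\beta_r$, which lies in $(0,1)$ because $a_r>\tfrac12$. This is the point $\mu_r^*$.
\item On the edge $\mu_\ell=0$ it gives $a_r\mu_n+1-\mu_n=\tfrac12$, that is $\mu_n=1/(2(1-a_r))$. This is at least $1$ when $a_r=1-\varepsilon$, since $\varepsilon<\tfrac12$, and it has no solution when $a_r=1$. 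In both cases the intersection is not a point strictly inside that edge.
\end{itemize}

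The feasible region is therefore a triangle with exactly the three stated vertices. I would close by noting that a two-dimensional polygon has its vertices precisely at such boundary intersections, so this list is complete. The only delicate point is keeping the $\varepsilon<\tfrac12$ inequalities straight; all the arithmetic is routine.
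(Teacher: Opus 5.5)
Your proof is correct and follows the same route as the paper. You substitute the hitting-order table into the three constraints indexed by $\nxt(r),\lst(r),\thr(r)$, drop the first because $1-a_r\le\varepsilon<\tfrac12$ and the third because $\mu_t\ge0$, and then read off the vertices of the simplex cut by $a_r\mu_n+\mu_t\le\tfrac12$; your edge-by-edge enumeration simply spells out what the paper compresses into the remark that $a_r>\tfrac12$ leaves exactly the three displayed vertices.
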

\begin{proof}
    Substituting \Cref{lem:block-alpha} into the constraints indexed by $q=\nxt(r),\lst(r),\thr(r)$ gives, respectively,
    \[
        (\mu_\ell+\mu_t)(1-a_r)\le\tfrac12,
        \qquad a_r\mu_n+\mu_t\le\tfrac12,
        \qquad a_r\mu_n\le\tfrac12.
    \]
    The first inequality is automatic because $1-a_r\le\varepsilon<1/2$, and the third follows from the second. This proves \eqref{eqn:block-polytope}. Since $a_r>1/2$, intersecting the probability triangle with $a_r\mu_n+\mu_t\le1/2$ leaves exactly the displayed three vertices.
\end{proof}

We compare two online paging
policies here. The \emph{block selector} $\mcA_\blkalg$ is the dominating
selector whose fault rate we lower-bound, while the \emph{late policy}
$\mcA_\latealg$ is a comparison policy with a smaller fault rate.
Whenever a request to page $r$ causes a miss, the block selector
$\mcA_\blkalg$ evicts a resident page according to the distinguished
extreme point $\mu_r^*$ from \Cref{lem:block-polytope}. Here $r$
denotes the page requested at the current miss, so it varies over time.
There is one distribution $\mu_r^*$ for each of the four possible
current pages. The name ``block selector'' refers to the two-block
structure of the Markov chain. By
\Cref{prop:selector-ok,lem:block-polytope}, $\mcA_\blkalg$ is a
polynomial-time dominating selector.
More explicitly, $\mcA_\blkalg$ evicts the page requested immediately next with probability
\begin{align}
    \label{eqn:block-betas}
    \beta_E&=\frac12
    &&\text{when the miss occurs on an entry page},\\
    \beta_X&=\frac{1}{2(1-\varepsilon)}
    &&\text{when the miss occurs on an exit page},
\end{align}
and otherwise evicts $\lst(r)$.

The late policy $\mcA_\latealg$ always evicts $\lst(r)$, the resident page whose next request occurs latest given that the current request is to $r$. Thus $\mcA_\latealg$ avoids the immediately-next page whenever possible. The name ``late policy'' refers to this latest-next-request eviction rule.

\begin{figure}[H]
    \centering
    \begin{tikzpicture}[
        scale=.95, transform shape,
        pg/.append style={draw=black,text=black},
        here/.append style={draw=blue!60!black,text=black},
        gap/.append style={draw=red!65!black,text=black},
        hop/.append style={draw=black},
        every node/.append style={text=black}]
        \def\R{1.55}
        \begin{scope}[xshift=-3.25cm]
            \node[font=\footnotesize] at (0,2.18) {entry-page miss};
            \node[here] (er) at (90:\R) {$r$};
            \node[gap]  (en) at (0:\R) {$\nxt$};
            \node[pg]   (et) at (-90:\R) {$\thr$};
            \node[gap]  (el) at (180:\R) {$\lst$};
            \draw[hop,red!70!black,line width=1.05pt] (er) to[bend right=12]
                node[above right,font=\scriptsize] {$1$} (en);
            \draw[hop,red!70!black,line width=1.05pt] (en) to[bend right=12] (et);
            \draw[hop,red!70!black,line width=1.05pt] (et) to[bend right=12] (el);
            \node[align=center,font=\scriptsize] at (0,-2.22)
                {evict $\nxt$ w.p. $\beta_E=\tfrac12$; evict $\lst$ otherwise};
        \end{scope}

        \begin{scope}[xshift=5.25cm]
            \node[font=\footnotesize] at (0,2.18) {exit-page miss};
            \node[here] (xr) at (90:\R) {$r$};
            \node[gap]  (xn) at (0:\R) {$\nxt$};
            \node[pg]   (xt) at (-90:\R) {$\thr$};
            \node[gap]  (xl) at (180:\R) {$\lst$};
            \draw[hop,red!70!black,line width=1.05pt] (xr) to[bend right=12]
                node[above right,font=\scriptsize] {$1-\varepsilon$} (xn);
            \draw[hop] (xr) to
                node[right,font=\scriptsize,fill=white,inner sep=1pt] {$\varepsilon$} (xt);
            \draw[hop] (xt) to[bend right=12]
                node[below left,font=\scriptsize] {$1$} (xl);
            \node[align=center,font=\scriptsize] at (0,-2.22)
                {evict $\nxt$ w.p. $\beta_X=\tfrac{1}{2(1-\varepsilon)}$; evict $\lst$ otherwise};
        \end{scope}
    \end{tikzpicture}
    \caption{The local extreme-point choices in the circular hole notation of \Cref{fig:outcomes}. The blue node is the current request and the dashed red nodes are the two possible new holes. 
    }
    \label{fig:block-vertex}
\end{figure}

\begin{proposition}
    \label{prop:block-rates}
    On $I_\varepsilon$, under the common initialization convention,
    \begin{align}
        \E[\cost(\mcA_\blkalg,T)]
        =\varepsilon\left(1-\frac23\varepsilon\right)T+o(T),\text{and } \quad
        \E[\cost(\mcA_\latealg,T)]
        =\frac{\varepsilon}{2}T+o(T).
    \end{align}
\end{proposition}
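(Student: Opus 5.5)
The plan is to realize each policy, together with the request chain, as a finite Markov chain and read off its long-run fault rate from a stationary distribution. Because $|V|=k+1$, the cache is determined by its hole $h$. So for any policy whose eviction depends only on the current miss page $r$ (both $\mcA_\blkalg$ and $\mcA_\latealg$ do, via $\mu_r^*$ or $\mathbf e_{\lst(r)}$), the pair $Z_t=(X_t,h_t)$ with $h_t\neq X_t$ is a time-homogeneous Markov chain on at most $12$ states. Its transitions work as follows:
\begin{itemize}
\item first draw $X_{t+1}$ from $M$;
\item if $X_{t+1}\ne h_t$, keep $h_{t+1}=h_t$;
\item if $X_{t+1}=h_t$, a fault occurs and $h_{t+1}$ is drawn from the policy's eviction law at $r=X_{t+1}$.
\end{itemize}
The expected cost up to $T$ is $\sum_{t<T}\Pr[Z_t\in F]$, where $F=\{(x,h):M_{x,h}>0\}$ weighted by $M_{x,h}$. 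This equals $\sum_{t<T}\E[f(Z_t)]$ with $f(x,h)=M_{x,h}$.

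For a finite chain, the Cesàro averages $\frac1T\sum_{t<T}P^t$ converge to the projection onto stationary laws, regardless of periodicity or transient states. Hence $\E[\cost(\mcA,T)]=T\sum_z\pi(z)f(z)+o(T)$, where $\pi$ is the stationary law reached from the common initial state. I would check that the relevant recurrent class is unique, so $\pi$ is well defined. Next I would lump states by the automorphism $(0\ 2)(1\ 3)$ of \Cref{lem:block-alpha}. Both policies are equivariant under it, since they are defined through $\nxt,\lst,\thr$. This leaves six orbit classes indexed by (current page entry/exit) $\times$ (hole is $\nxt$, $\thr$, or $\lst$ of the current page, in the appropriate relative sense).

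For $\mcA_\latealg$ I would skip the linear algebra and argue by regeneration. An entry miss on $r$ puts the hole at $\lst(r)$, the exit page of the other block. Nothing faults until the chain switches blocks, and then it faults exactly once, on that exit page. That exit miss moves the hole to the exit page of the block just left, which again faults only after the next switch. Inside a block, an exit miss on page $1$ with hole $3$ leads to hits until $3\to0$, then $0$ is a hit and $1$ misses. So, apart from $O(1)$ initial faults, the policy faults exactly once per block switch. Switches occur from exit pages with probability $\varepsilon$, and the exit pages carry stationary mass $1/2$ of the request chain. The switch rate is therefore $\varepsilon/2$, which gives $\frac{\varepsilon}2T+o(T)$.

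For $\mcA_\blkalg$ I would write the six-state lumped transition matrix explicitly, using \eqref{eqn:block-betas}:
\begin{itemize}
\item an entry miss sends the hole to $\nxt$ w.p.\ $\frac12$ and to $\lst$ otherwise;
\item an exit miss sends the hole to $\nxt$ w.p.\ $\frac1{2(1-\varepsilon)}$ and to $\lst$ otherwise.
\end{itemize}
Hole at $\nxt$ of an entry page produces a forced miss on the next (exit) request. Hole at $\nxt$ of an exit page produces a miss w.p.\ $1-\varepsilon$; otherwise the chain switches and that page becomes the $\lst$-type hole, and so on. I would then solve $\pi P=\pi$ symbolically and evaluate $\sum\pi f$, aiming for $\varepsilon(1-\frac23\varepsilon)$. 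As a cross-check, I would also compute it by the reward-renewal ratio between consecutive block switches: the expected number of faults per switch epoch divided by the expected epoch length $2/\varepsilon$.

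I expect the main obstacle to be this bookkeeping for $\mcA_\blkalg$. The orbit classes have to be set up so that the lumping is exact (strong lumpability under the involution). The cascade also has to be traced correctly: an exit miss that evicts $\nxt$ immediately recreates an exit-type or entry-type miss, and these cascades interleave with the rare switches. Everything there reduces to a $6\times6$ rational linear system in $\varepsilon$, and I would solve it by eliminating the transient-within-block states first. The $o(T)$ error terms and the independence from initialization then follow from the finite-chain Cesàro limit.
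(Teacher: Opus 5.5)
\textbf{Verdict: same approach as the paper, but with a local gap in the $\mcA_\blkalg$ computation.}

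Your framework is the paper's: the joint request/hole chain, lumping by the involution $(0\ 2)(1\ 3)$ into the six classes $E_n,E_\ell,E_t,X_n,X_\ell,X_t$, and the finite-chain Ces\`aro limit. Your argument for $\mcA_\latealg$ is correct. It is a transparent reading of the paper's balance equations on the recurrent class $\{E_n,E_\ell,X_\ell\}$: there $E_n$ is entered only by a switch out of $X_\ell$, it faults on the next request, and switches occur at rate $\varepsilon/2$. One sentence has $1$ and $3$ swapped: it is the exit miss on $3$, leaving hole $1$, after which $3\to0$ is followed by a miss on $1$.

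\textbf{The gap.} The stationary computation for $\mcA_\blkalg$ is the entire content of the proposition, and you only sketch it. Worse, the one nontrivial lumped transition you spell out is wrong.
\begin{itemize}
\item In $X_n$ the current page is an exit page $s$ and the hole is $\nxt(s)$, the entry page of the current block. A block switch moves to the other block's entry page $s'$. Relative to $s'$ the hole is $\thr(s')$, not $\lst(s')$, so this branch goes to $E_t$.
\item This matters because the loop $E_t\to X_t\to E_t$ carries stationary mass and produces faults: from $X_t$, a switch (probability $\varepsilon$) requests the missing page and causes an entry-page miss.
\item If you route the switch out of $X_n$ to $E_\ell$ as you wrote, $\{E_t,X_t\}$ becomes transient. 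The stationary rate then comes out as $3\varepsilon(1-\varepsilon)/(3-\varepsilon)$, which differs from $\varepsilon(1-\frac23\varepsilon)$ by $-2\varepsilon^3/(3(3-\varepsilon))$.
\end{itemize}

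\textbf{The correct table.}
\begin{itemize}
\item $E_n\to\beta_X X_n+(1-\beta_X)X_\ell$, always with a fault.
\item $E_\ell\to X_\ell$ and $E_t\to X_t$, without faults.
\item $X_n\to(1-\varepsilon)[\beta_E E_n+(1-\beta_E)E_\ell]+\varepsilon E_t$, with fault probability $1-\varepsilon$.
\item $X_\ell\to(1-\varepsilon)E_\ell+\varepsilon E_n$, without a fault.
\item $X_t\to(1-\varepsilon)E_t+\varepsilon[\beta_E E_n+(1-\beta_E)E_\ell]$, with fault probability $\varepsilon$.
\end{itemize}

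\textbf{Solving it.}
\begin{itemize}
\item The balance equations give $\pi(X_n)=\pi(E_t)=\pi(X_t)=\beta_X\pi(E_n)$ and $\varepsilon\,\pi(E_\ell)=\pi(E_n)\bigl(1-\beta_E\beta_X-\varepsilon(1-\beta_X)\bigr)$.
\item Normalization reads $2\pi(E_n)(1+\beta_X)+2\pi(E_\ell)=1$, and the fault rate is $\pi(E_n)(1+\beta_X)$.
\item With $\beta_E=\frac12$ and $\beta_X=\frac{1}{2(1-\varepsilon)}$, the denominator is $\varepsilon(1+\beta_X)+1-\beta_E\beta_X-\varepsilon(1-\beta_X)=\frac{3}{4(1-\varepsilon)}$, so the rate is $\varepsilon(3-2\varepsilon)/3$.
\item This six-state chain is irreducible, so your Ces\`aro step then finishes the proof.
\end{itemize}

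Separately, your switch-epoch cross-check is not a plain renewal-reward argument. The hole type at a switch is random and depends on the past, so the cross-check would need the embedded chain of hole types at switch times.
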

The proof groups the joint request/cache chain into six symmetry states and solves its stationary balance equations. The full transition verification and calculation are given in \Cref{app:block-rates}.

\begin{proof}[Proof of \Cref{thm:four-page-lb}]
    Since $\mcA_\latealg$ is an online policy using the same initial cache, \Cref{obs:online-witness} gives $\E[\cost(\opt,T)]\le\E[\cost(\mcA_\latealg,T)]$ for every $T$. Therefore \Cref{prop:block-rates} implies
    \begin{align*}
        \liminf_{T\to\infty}
        \frac{\E[\cost(\mcA_\blkalg,T)]}{\E[\cost(\opt,T)]}
        \ge
        \lim_{T\to\infty}
        \frac{\E[\cost(\mcA_\blkalg,T)]}{\E[\cost(\mcA_\latealg,T)]}
        =\frac{\varepsilon(1-2\varepsilon/3)}{\varepsilon/2}
        =2-\frac43\varepsilon.
    \end{align*}
    Given $c<2$, choose $0<\varepsilon<\min\{1/2,3(2-c)/4\}$. Then the ratio exceeds $c$, so some finite horizon violates $c$-competitiveness. Conversely, $\mcA_\blkalg$ is a dominating distribution algorithm and is therefore $2$-competitive by~\cite[Theorem 1]{pabbaraju2025new}. Its competitive ratio is exactly $2$.
\end{proof}

The two Markov constructions now admit a precise comparison. They have the same limiting lower-bound constant. The four-page construction is stronger with respect to instance size, since it works with fixed $n=4$ and $k=3$ and derives the gap from recurrent stationary behavior. The directed cycle is stronger with respect to simplicity and selector naturalness: its hitting orders are deterministic, its reference policy is exactly optimal, and the bad selector is the uniform forward-biased LP rule of \Cref{def:selector}, rather than a rule that recognizes a special family.

\subsection{Aperiodicity by Uniform Lazification}

\begin{proposition}[Uniform lazification]
    \label{prop:uniform-lazification}
    Let $M$ be either base chain used in the two Markov lower bounds, and let $M_\lambda:=\lambda I+(1-\lambda)M$ for $0<\lambda<1$. Then every hitting-order probability $\alpha(p<q\mid s)$ and every corresponding domination polytope is unchanged. For the witness and comparison policies used above, every limiting fault rate is multiplied by $1-\lambda$, so their ratio is unchanged. If $M$ is irreducible, then $M_\lambda$ is irreducible and aperiodic.
\end{proposition}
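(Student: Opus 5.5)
We prove the four claims in the order stated: invariance of the hitting-order probabilities, invariance of the domination polytopes, the scaling of the fault rates, and irreducibility with aperiodicity.

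\emph{Hitting orders.} Realize $M_\lambda$ as a time change of $M$. At each step, an independent coin with bias $\lambda$ decides whether the chain repeats the current page or takes an $M$-step. The embedded sequence of distinct moves (the jump chain, ignoring repeats) then has exactly the law of the $M$-chain started from the same state. A repeated request to $s$ is a request to $s$ itself, and $s\notin\{p,q\}$ in every $\alpha(p<q\mid s)$ we use. So the event ``$p$ is requested before $q$'' depends only on the order of first visits in the embedded chain, and that order is unchanged. The one subtlety is what ``returns'' means when $p=s$; this case only arises for the convention $\alpha(p<p\mid s)=0$ and for cache pages, which never equal $s$ at a fault. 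Hence every $\alpha$ is preserved. The constraints \eqref{eqn:dominating} and \eqref{eqn:selector-lp} depend on the chain only through these $\alpha$ values, so every polytope, every LP optimum, and therefore $\mcA_\fwd$ and $\mcA_\blkalg$ are unchanged as maps from (fault state, cache) to eviction distributions.

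\emph{Fault rates.} Each self-loop is a request to the page just served, which is now in the cache, so it costs nothing and changes no cache state. Each witness and comparison policy acts only at faults, via the distributions above, and these distributions are unchanged. Therefore the joint request/cache process of each policy, observed at non-lazy steps, has the same law as under $M$. The number of real steps among the first $T$ lazy steps is $\mathrm{Bin}(T,1-\lambda)$ and is independent of the embedded process. If the embedded expected cost is $\rho m+o(m)$, then conditioning on the number $N_T$ of real steps gives $\E[\rho N_T+o(N_T)]=\rho(1-\lambda)T+o(T)$. For the cycle, \Cref{lem:rate} gives $O_k(1)$ error terms, which makes this immediate. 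For the four-page chain, \Cref{prop:block-rates} gives $o(m)$ error terms, and the error is controlled using $N_T\le T$ together with concentration.

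The main obstacle is \Cref{lem:ref-optimal} on the lazy cycle, which the ratio claim in \Cref{thm:dom-lb} relies on. It is enough to show that $\mcA_\refalg$ is still Belady-optimal on every lazy trajectory, because laziness does not change the order of distinct future requests. Under this reading, the farthest-in-future argument goes through verbatim. Alternatively, one can rely only on \Cref{obs:online-witness} for an upper bound on $\opt$, which suffices for a lower bound on the ratio. In both cases the limiting ratios $2k/(k+1)$ and $2-4\varepsilon/3$ are unchanged.

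\emph{Irreducibility and aperiodicity.} $M_\lambda$ has the same off-diagonal support as $M$, so $M_\lambda$ is irreducible whenever $M$ is. Every diagonal entry of $M_\lambda$ is at least $\lambda>0$, so every state has a self-loop, which gives period $1$.
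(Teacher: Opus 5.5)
Your proof is correct and follows the paper's argument: couple $M_\lambda$ with $M$ through independent lazy coins, observe that self-loops are cache hits that change neither the embedded request/cache process nor the order of first visits to pages other than $s$, scale the fault count by the $(1-\lambda)$ fraction of genuine transitions, and get aperiodicity from the positive diagonal. Your extra care fills in points the paper's proof leaves implicit: you restrict the $\alpha$-invariance to $p,q\ne s$, where it genuinely can fail (for example, $\alpha(s<q\mid s)$ becomes $\lambda$ on the lazy cycle); you control the $o(m)$ error under the random time change; and you check that $\mcA_\refalg$ remains Belady-optimal.
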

\begin{proof}
    Couple $M_\lambda$ by independently declaring each transition a self-loop with probability $\lambda$ and otherwise taking one transition of $M$. Removing the self-loops leaves exactly an $M$-trajectory. A self-loop repeats the current request, causes a cache hit, and leaves the cache state unchanged. It also cannot change which of two distinct pages is visited first. Thus the $\alpha$-values, the feasible dominating distributions, and the embedded request/cache process at non-self-loop transitions are the same.

    Because the self-loop probability is uniform across pages, the number of genuine $M$-transitions among the first $T$ requests is $(1-\lambda)T+o(T)$ in expectation. Hence each long-run fault rate is multiplied by $1-\lambda$, while any ratio of such rates is preserved. Positive self-loops make every state have period one, and the positive off-diagonal edges of $M$ preserve irreducibility. Applying the operation to the directed cycle and to the four-page chain gives aperiodic versions of both lower-bound families.
\end{proof}

\section{A Tight Lower Bound for the Median Algorithm in Independent Renewal Model}
\label{sec:med-lb}

We now prove \Cref{thm:med-lb}. We first explain why the directed cycle used in \Cref{sec:cycle} cannot also yield a median lower bound. After a fault at page $s$, the resident page $s+j$ is requested exactly $j$ requests later for $j=1,\ldots,k$. Its next-request time is deterministic and hence equals its median. Therefore the median rule evicts $s+k=s-1\pmod{k+1}$, exactly the optimal choice from \Cref{lem:ref-optimal}. The cycle is unsuitable for a second reason: its page-request processes are coupled and do not form independent renewals.
\subsection{Programming the First Request}
\label{sec:program}

The model gives each page an i.i.d. gap law and an initial age, whereas the construction needs a prescribed distribution for the page's first request. The following lemma supplies it and makes the page harmless after that first request.
Throughout, a \emph{gap} is the time between two consecutive requests of a
\emph{single} page.
\begin{lemma}[Programming a first request]
    \label{lem:program}
    Let $\mu$ be a finitely supported distribution on positive times, fix $0<d<A$ and $0<\delta<\frac12$, and define a positive gap variable $X$ by
    \begin{align}
        \label{eqn:program}
        \Pr[X=d]=1-\delta, \qquad
        \Pr[X=A+y]=\delta\,\mu(y) \quad (y\in\operatorname{supp}\mu).
    \end{align}
    Equivalently, $X$ is supported on $\{d\}\cup\{A+y:y\in\operatorname{supp}\mu\}$, each value being a distance from the page's previous request: the short atom $d$, or a long atom $A+y$ placed at offset $y$ beyond the current time. Then,
    \begin{itemize}
        \item A renewal process with gap law $X$ and known initial age $A$ has first residual waiting time distributed as $\mu$,\footnote{Here \emph{initial age $A$} means the process's previous request occurred at time $-A$, so at time $0$ it has already waited for duration $A$ without a request.} and
        \item After its first request, whenever the current post-first gap has age $u<d$, its residual lower median is at most $d-u\le d$. 
    \end{itemize}
\end{lemma}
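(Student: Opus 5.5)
Both claims follow by direct computation from the definitions of initial age and lower median; I will handle them in order.

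For the first bullet, note that initial age $A$ means the first residual waiting time $W$ has law $X-A \mid X>A$. By \eqref{eqn:program}, the only atom of $X$ not exceeding $A$ is $d$, because $d<A$ and every long atom $A+y$ has $y>0$ ($\mu$ is supported on positive times). Hence $\Pr[X>A]=\delta\sum_y\mu(y)=\delta$. For each $y\in\operatorname{supp}\mu$ this gives
\[
\Pr[W=y]=\frac{\Pr[X=A+y]}{\Pr[X>A]}=\frac{\delta\mu(y)}{\delta}=\mu(y),
\]
which is exactly $\mu$. I will also record two facts that follow: the first residual time is strictly positive, and $W$ takes no other values.

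For the second bullet, fix a post-first gap. After the first request, the page's gaps are fresh i.i.d.\ copies of $X$, independent of everything else. Suppose its current age is $u<d$; then no request has yet occurred in this gap. Since every atom is at least $d>u$, conditioning on $X>u$ changes nothing, so the residual is $W_u=X-u$ with the unconditioned law of $X$. Consequently $\Pr[W_u\le d-u]\ge\Pr[X=d]=1-\delta>\tfrac12$, using $\delta<\tfrac12$. By \eqref{eqn:lower-median}, $\md(W_u)\le d-u\le d$.

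The argument is elementary, and I expect no serious obstacle. The one point to state carefully is that the age is measured from the most recent request of that page and that conditioning on survival past age $u<d$ is vacuous; this is what makes the short atom keep probability $1-\delta>\tfrac12$. I will also remark that the same bound holds in later gaps at ages below $d$, since each gap is a fresh copy of $X$.
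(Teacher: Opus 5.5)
Your proposal is correct and follows essentially the same route as the paper's proof. Conditioning on $X>A$ deletes the short atom and renormalizes the long atoms to $\mu$; for age $u<d$ the conditioning on $X>u$ is vacuous, so the short atom keeps mass $1-\delta>\tfrac12$ at $d-u$, which bounds the lower median by $d-u\le d$.
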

\begin{proof}
    First note that the formula indeed defines a probability distribution without any further restriction on the support values: since every $y>0$ and $A>d$, each long atom $A+y$ lies strictly above $A$ and is therefore distinct from the short atom $d$, while the long-atom masses sum to $\delta\sum_y\mu(y)=\delta$.
    Because $d<A$ and every support value $y$ is positive, the event $\{X>A\}$ consists exactly of the long atoms $A+y$ and has probability $\delta$ (\Cref{fig:program} visualizes these two operations). Hence, for every $y\in\operatorname{supp}\mu$,
    \[
        \Pr[X-A=y\mid X>A]
        =\frac{\delta\mu(y)}{\delta}
        =\mu(y),
    \]
    which proves the first claim. After the first request, suppose the current gap has age $u<d$. Every support point of $X$ exceeds $u$, so conditioning on $X>u$ leaves probability $1-\delta>1/2$ on the short gap $X=d$. The residual waiting time therefore has an atom of mass greater than $1/2$ at $d-u$, and its lower median is at most $d-u\le d$.
\end{proof}

\begin{figure}[H]
\centering
\begin{tikzpicture}[font=\small, x=1cm, y=1cm, >=Stealth,
  nn/.style={color=green!55!black, font=\scriptsize, align=center}]

  \node[anchor=east] at (-0.15,1.6) {gap law $X$:};

  \fill[green!55!black!10] (0.6,1.40) rectangle (3.4,1.80);

  \draw[line width=0.7pt] (0.6,1.6) -- (8.35,1.6);
  \draw[->, line width=0.7pt] (9.25,1.6) -- (11.8,1.6);

  \draw[green!55!black, line width=0.8pt] (0.6,1.76) -- (0.6,1.44);

  \fill (1.8,1.6) circle (2.2pt);
  \node[above=3pt] at (1.8,1.6) {$d$};
  \node[below=3pt] at (1.8,1.6) {$1-\delta$};

  \draw[dashed, line width=0.7pt] (3.4,0.62) -- (3.4,2.08);
  \node[above=1pt] at (3.4,2.08) {initial age $A$};
  \node[nn, anchor=south] at (3.4,2.46) {``now'' (time $0$)};

  \foreach \x/\lab/\mass in {
      5.2/{A+y_1}/{\delta\mu(y_1)},
      7.2/{A+y_2}/{\delta\mu(y_2)},
      10.2/{A+y_m}/{\delta\mu(y_m)}} {
    \fill (\x,1.6) circle (2.2pt);
    \node[above=3pt] at (\x,1.6) {$\lab$};
    \node[below=3pt] at (\x,1.6) {$\mass$};
  }
  \node at (8.8,1.58) {\LARGE $\cdots$};



  \draw[->, rounded corners=5pt, line width=0.75pt]
      (11.0,1.22) -- (11.0,0.08) -- (6.0,0.08) -- (6.0,-0.42);
  \node[align=center, fill=white, inner sep=2pt, font=\scriptsize]
      at (8.7,0.08) {condition on $X>A$ (total mass $\delta$),\\then subtract $A$};

  \node[anchor=east, align=right] at (-0.15,-0.9)
      {first request\\$X-A\mid X>A$:};

  \draw[line width=0.7pt] (0.6,-0.9) -- (4.8,-0.9);
  \draw[->, line width=0.7pt] (5.7,-0.9) -- (8.2,-0.9);

  \draw[green!55!black, line width=0.8pt] (0.6,-0.74) -- (0.6,-1.06);
  \node[nn, anchor=north] at (0.6,-1.22) {``now''};

  \foreach \x/\lab/\mass in {
      2.4/{y_1}/{\mu(y_1)},
      4.4/{y_2}/{\mu(y_2)},
      7.4/{y_m}/{\mu(y_m)}} {
    \fill (\x,-0.9) circle (2.2pt);
    \node[above=3pt] at (\x,-0.9) {$\lab$};
    \node[below=3pt] at (\x,-0.9) {$\mass$};
  }
  \node at (5.26,-0.92) {\LARGE $\cdots$};

\end{tikzpicture}
\caption{Conditioning on the known initial age removes the short atom at $d$ and renormalizes the long atoms; subtracting $A$ leaves exactly the prescribed first-request law $\mu$. For the first request, the gap $X$ is measured from the page's previous (fictitious) request at age $0$; the dashed line marks age $A$, the current time (``time $0$''), so a page enters already $A$ into its gap.}
\label{fig:program}
\end{figure}
We call a page \emph{spent} immediately after its programmed first request. In the finite construction on the ``good'' event, which we describe in~\Cref{lem:finite}, every post-first gap equals $d$. Thus a spent page always has residual lower median at most $d$ before the horizon, so at global time $t$ its median next-request epoch is at most $t+d$. We place all construction windows later than this bound, which prevents spent pages from becoming eviction candidates.

\subsection{Pages, Residue Classes, and Windows}
\label{sec:timing}

This subsection fixes everything about \emph{when} requests may occur; the probability laws that use these times follow in \Cref{sec:anchor-token}.

\paragraph{The pages.}
Fix a block count $H\ge1$ and an integer $L\ge1$. The instance uses a single deterministic \emph{starter page} $z$, \emph{anchor pools} $P_1,\ldots,P_{H+1}$, and \emph{token pools} $Q_1,\ldots,Q_{H+1}$, of sizes
\begin{align}
    \label{eqn:pools}
    |P_r| =2L2^{r-1},
    \qquad
    |Q_r| =L2^{r-1}
    \qquad (1\le r\le H+1).
\end{align}
Pools $P_j,Q_j$ with $j\le H$ drive block $j$, while $P_{H+1}$ and $Q_{H+1}$ only supply boundary pages at the very end of the construction; the roles of all these pages are defined in \Cref{sec:anchor-token}, and the sizes \eqref{eqn:pools} matter only for the finite analysis in \Cref{sec:finite}. In total there are $2L(2^{H+1}-1)$ anchors and $L(2^{H+1}-1)$ tokens, hence
$N := 3L(2^{H+1}-1)+1$
pages including the starter.

\paragraph{Residue classes.}
Set $D:=N+1$; there are exactly $N=D-1$ nonzero residues modulo $D$, one per page. Assign the pages distinct nonzero residues $r_v\pmod D$. Every page other than $z$ 
is programmed by \Cref{lem:program}: choose its first-request law $\mu_v$ on times $y\equiv r_v\pmod D$, choose $A_v>D$ with $A_v\equiv-r_v\pmod D$, and use $d=D$. Then its first request has residue $r_v$ (by the first part in \Cref{lem:program} it occurs at a residual atom $y\equiv r_v$), and every later gap is divisible by $D$. 
Indeed, the short gap equals $D$, and every long gap $A_v+y$ with $y\in\operatorname{supp}\mu_v$ satisfies $A_v+y\equiv-r_v+r_v\equiv0\pmod D$.
Give $z$ deterministic gap $D$ and initial age $D-r_z$, so it first requests at $t_0:=r_z$ and thereafter every $D$ time units. Thus every page remains in its own residue class and distinct pages never request simultaneously. The deterministic starter also supplies the periodic requests used later to convert the time construction to a request-count horizon.

From now on, the construction uses $D$ as the short post-first gap; that is, the parameter $d$ from \Cref{lem:program} is fixed to $D$.
\Cref{fig:residues} depicts this invariant. 
\begin{figure}[H]
\centering
\begin{tikzpicture}[font=\small, x=0.95cm, y=1cm, >=Stealth]
  \node[anchor=east] at (0.45,1.45) {page $u$:};
  \node[anchor=east] at (0.45,0.15) {page $v$:};
  \draw[->, line width=0.7pt] (0.65,1.45) -- (12.6,1.45);
  \draw[->, line width=0.7pt] (0.65,0.15) -- (12.6,0.15);

  \foreach \x/\lab in {1.55/{r_u},5.35/{r_u+D},9.15/{r_u+2D}} {
    \fill[blue!65!black] (\x,1.45) circle (2.2pt);
    \node[above=3pt, text=blue!65!black] at (\x,1.45) {$\lab$};
  }
  \foreach \x/\lab in {2.75/{r_v},6.55/{r_v+D},10.35/{r_v+2D}} {
    \fill[red!65!black] (\x,0.15) circle (2.2pt);
    \node[below=3pt, text=red!65!black] at (\x,0.15) {$\lab$};
  }

  \draw[<->, line width=0.65pt] (1.55,0.92) --
      node[midway, fill=white, inner sep=1pt] {$D$} (5.35,0.92);
  \draw[<->, line width=0.65pt] (2.75,0.68) --
      node[midway, fill=white, inner sep=1pt] {$D$} (6.55,0.68);

  \draw[dashed, black!45] (1.55,1.28) -- (1.55,-0.1);
  \draw[dashed, black!45] (2.75,1.28) -- (2.75,-0.1);
  \node[align=center, font=\scriptsize] at (6.6,-0.95)
      {later gaps are multiples of $D$, so each page stays on its own residue-class track};
  \node[align=center, font=\scriptsize] at (11.2,1.02)
      {$r_u\not\equiv r_v\pmod D$\; $\Rightarrow$\; tracks never meet};
\end{tikzpicture}
\caption{Residue-class separation. The first-request law of each page is supported on one residue class, and every later gap is divisible by $D$. Thus page $u$ always requests at times congruent to $r_u$, page $v$ always requests at times congruent to $r_v$, and distinct residues prevent ties.}
\label{fig:residues}
\end{figure}

\begin{remark}
It is tempting to read the residual identity $X-A_v\sim\mu_v$ as ``$A_v$ becomes $0$ after the first request.'' It does not. The law $\mu_v$ is the \emph{first-request} law only, and it is never redrawn; later gaps are drawn from the gap law $X$ of \Cref{lem:program}, and $A_v$ keeps its fixed value throughout, entering only as the offset inside the long atoms. 
What does reset to $0$ at every request, including the first, is the \emph{renewal age} $u$ (the time since the most recent request); this is the $u$ in the bound $d-u\le d$ of \Cref{lem:program}. In particular the initial age $A_v>D$ is used exactly once to delete the short atom from the first gap, so that the first request follows $\mu_v$; from then on no conditioning is applied.
\end{remark}

\paragraph{Blocks and windows.}
Time is organized into $H$ consecutive \emph{blocks}, and each block $j\in[H]$ contains five disjoint \emph{windows}, in the order
\begin{align}
    \label{eqn:windows}
    S_j \prec G_j \prec T_j \prec F_j \prec L_j,
\end{align}
and $L_j\prec S_{j+1}$ whenever $j<H$. There is no window $S_{H+1}$; after $L_H$ come only the post-horizon atoms. The five windows will contain, respectively, early anchor requests, token filters, a token checkpoint, anchor filters, and late anchor requests; the laws that place atoms there are defined in \Cref{sec:anchor-token}, and \Cref{fig:block} gives a block-level view. For concreteness, every window is an integer interval of length $W:=(N+1)D$, consecutive windows are separated by an empty interval of length $2D$, and the first window is chosen so that $t_0+D<\min S_1$ (recall from the residue-class paragraph that $t_0=r_z$ is the deterministic first-request time of the starter $z$), where $\min S_1$ denotes the first integer time in $S_1$. The exact endpoints are otherwise irrelevant.

\begin{figure}[H]
\centering
\begin{tikzpicture}[
  font=\small, >=Stealth,
  blk/.style={draw, rounded corners=2pt, minimum width=1.95cm,
              minimum height=0.86cm, align=center, font=\scriptsize},
  bdry/.style={draw, dashed, rounded corners=2pt, minimum width=2.35cm,
               minimum height=0.86cm, align=center, font=\scriptsize},
  win/.style={draw, rounded corners=1.5pt, minimum width=1.65cm,
              minimum height=0.82cm, align=center, font=\scriptsize}
]
  \node[font=\bfseries, anchor=east] at (0,2.55) {$H$ blocks:};
  \node[blk] (b1) at (1.30,2.55) {block $1$\\$P_1,Q_1$};
  \node[blk] (b2) at (3.85,2.55) {block $2$\\$P_2,Q_2$};
  \node at (5.70,2.55) {$\cdots$};
  \node[blk] (bH) at (7.55,2.55) {block $H$\\$P_H,Q_H$};
  \node[bdry] (bb) at (10.65,2.55) {boundary only\\$P_{H+1},Q_{H+1}$};
  \draw[->, line width=0.65pt] (b1.east) -- (b2.west);
  \draw[->, line width=0.65pt] (b2.east) -- (5.25,2.55);
  \draw[->, line width=0.65pt] (6.15,2.55) -- (bH.west);
  \draw[->, line width=0.65pt] (bH.east) -- (bb.west);
  \node[align=center, font=\scriptsize] at (5.85,1.82)
      {$P_j$ terminates in block $j$; a $Q_r$ token first reaches a checkpoint in block $r$ and may survive later.};

  \node[font=\bfseries, anchor=east] at (0,0.55) {block $j$:};
  \node[win, fill=red!6]  (S) at (1.15,0.55) {$S_j$\\early anchors};
  \node[win, fill=blue!4] (G) at (3.35,0.55) {$G_j$\\token filters};
  \node[win, fill=blue!8] (T) at (5.55,0.55) {$T_j$\\checkpoint};
  \node[win, fill=gray!8] (F) at (7.75,0.55) {$F_j$\\anchor filters};
  \node[win, fill=red!12] (L) at (9.95,0.55) {$L_j$\\late anchors};
  \draw[->, line width=0.55pt] (S.east) -- (G.west);
  \draw[->, line width=0.55pt] (G.east) -- (T.west);
  \draw[->, line width=0.55pt] (T.east) -- (F.west);
  \draw[->, line width=0.55pt] (F.east) -- (L.west);
  \node[font=\scriptsize] at (5.55,1.18)
      {during $S_j$: median locations are $G_j\prec T_j\prec F_j\prec L_j$};

  \node[font=\scriptsize] at (5.55,-0.22)
      {each box has length $W=(N+1)D$; each arrow represents an empty gap of length $2D$};
\end{tikzpicture}
\caption{Global layout and one-block scale (schematic, not to scale). There are $H$ blocks and no block $H+1$; $P_{H+1},Q_{H+1}$ are boundary pools. Every window has length $W=(N+1)D$, and consecutive windows are separated by an empty interval of length $2D$. }
\label{fig:block}
\end{figure}

\paragraph{Placing atoms inside a window.}
Cut the window into consecutive \emph{slots} of length $D$; since $W=(N+1)D$,
there is room for $N$ of them. A slot is a run of $D$ consecutive integers (representing time units) and
therefore a complete residue system modulo $D$: it contains exactly one time
congruent to any prescribed residue. So whenever a window must contain atoms of
pages $v_1,\ldots,v_m$ in a prescribed order, where $m\le N$, place the $i$th atom
at the unique time of slot $i$ congruent to $r_{v_i}\pmod D$. Every page keeps its
residue class, and the prescribed order is realized because the slots are consecutive. 
\Cref{fig:placement} depicts one window.
No window is ever asked to hold more than $N$ atoms, since $N$ exceeds the total
number of pages other than the starter.

\begin{figure}[H]
\centering
\begingroup
\begin{tikzpicture}[font=\small, x=1cm, >=Stealth,
  lbl/.style={font=\scriptsize, align=center}]

  \foreach \s in {0.3,2.9,7.3}
    \fill[red!55!black!10] (\s,-0.15) rectangle (\s+1.3,0.15);
  \foreach \s in {1.6,4.2}
    \fill[blue!55!black!10] (\s,-0.15) rectangle (\s+1.3,0.15);

  \draw[->, line width=0.6pt] (-0.25,0) -- (11.0,0);
  \foreach \x in {0.3,1.6,2.9,4.2,5.5,7.3,8.6}
    \draw[line width=0.45pt] (\x,0.22) -- (\x,-0.22);

  \fill (1.45,0) circle (2.2pt); \node[above=2.5pt, font=\scriptsize] at (1.45,0) {$v_1$};
  \fill (1.75,0) circle (2.2pt); \node[above=2.5pt, font=\scriptsize] at (1.78,0) {$v_2$};
  \fill (3.55,0) circle (2.2pt); \node[above=2.5pt, font=\scriptsize] at (3.55,0) {$v_3$};
  \fill (5.05,0) circle (2.2pt); \node[above=2.5pt, font=\scriptsize] at (5.05,0) {$v_4$};
  \fill (8.05,0) circle (2.2pt); \node[above=2.5pt, font=\scriptsize] at (8.05,0) {$v_m$};
  \node[font=\scriptsize] at (6.4,0.02) {$\cdots$};

  \draw[<->, line width=0.5pt] (2.9,0.72) -- (4.2,0.72)
      node[midway, above, font=\scriptsize, inner sep=1.5pt] {$D$};
  \draw[dotted, line width=0.4pt] (2.9,0.24) -- (2.9,0.68);
  \draw[dotted, line width=0.4pt] (4.2,0.24) -- (4.2,0.68);

  \node[lbl] at (0.95,-0.5) {slot $1$};
  \node[lbl] at (2.25,-0.5) {slot $2$};
  \node[lbl] at (3.55,-0.5) {slot $3$};
  \node[lbl] at (4.85,-0.5) {slot $4$};
  \node[lbl] at (7.95,-0.5) {slot $m$};
  \node[lbl, anchor=west] at (8.75,-0.5) {unused slots};

\end{tikzpicture}
\endgroup
\caption{A window is cut into consecutive slots of length $D$. Each slot is a run of $D$ consecutive integers, hence a complete residue system modulo $D$, so it contains exactly one time congruent to $r_{v_i}$: put the $i$th atom there. This realizes any prescribed order $v_1,\ldots,v_m$ with $m\le N$ and keeps every page in its own residue class. 
}
\label{fig:placement}
\end{figure}

\subsection{The Anchor and Token Laws}
\label{sec:anchor-token} 
With the timetable of \Cref{sec:timing} in place, we now define the first-request laws. An \emph{anchor} is a page in a pool $P_r$ whose terminal first request is either early, in $S_r$, or late, in $L_r$; these are the pages the median rule is forced to evict. A \emph{token} is a page in a pool $Q_r$ whose first request progresses through token-filter and checkpoint times. Its role in the comparison policy is defined in \Cref{sec:med-dynamics}. An \emph{atom} is only a support point of the random first-request time $Y_v$: each page draws one atom, and that atom is when its first request occurs. A page is \emph{live} until that first request and \emph{spent} afterward. Thus ``anchor'' and ``token'' name page roles, while ``atom'' names a possible request time.

Fix a slack $\gamma\in(0,\tfrac12)$, and write
\begin{align}
    \label{eqn:ab}
    a = \tfrac12-\gamma, \qquad b = \tfrac12+\gamma, \qquad a+b=1.
\end{align}
All probabilities below describe independent first-request times $Y_v$ produced by \Cref{lem:program}, with atoms placed inside their windows, each in its page's residue class, by the scheme of \Cref{sec:timing}. 

\paragraph{Anchors.}
For $1\le r\le H$, an anchor $p\in P_r$ is programmed with the first-request law below.
These probabilities are exactly the prescribed first-request law $\mu_p$ that is fed into \Cref{lem:program}. A simpler equivalent description is sequential: conditional on having survived the previous filters (i.e., $F_1, \ldots, F_{h-1}$), the anchor requests at the current filter $F_h$ with probability $1/2$ and otherwise continues; after surviving the first $r-1$ filters, it chooses the early atom in $S_r$ with probability $a$ and the late atom in $L_r$ with probability $b$. Multiplying these conditional probabilities gives the following unconditional masses.
\begin{align}
    \Pr[Y_p=f_{h,p}] &= 2^{-h} && (1\le h<r), \label{eqn:anchor-filter}\\
    \Pr[Y_p=s_p] &= 2^{-(r-1)}a, \qquad
    \Pr[Y_p=\ell_p] = 2^{-(r-1)}b, \label{eqn:anchor-terminal}
\end{align}
where $f_{h,p}\in F_h$, $s_p\in S_r$, and $\ell_p\in L_r$. Indeed, the filter masses sum to $1-2^{-(r-1)}$, and the terminal pair has total mass $2^{-(r-1)}$, so the law has total mass one. The filters are what keep the construction local in time: in every block $j<r$, a live anchor of $P_r$ has its next unresolved atom in $F_j$, so its median stays inside the current block, below the late window $L_j$ of the block's own anchors (\Cref{lem:ladder}). Without them, an anchor of a later pool would carry the largest median already in block $j$ and the cascade would leave $P_j$. 
Index the anchors in $P_r$ as $p_1,p_2,\ldots$ in the order in which the median rule should expose them, and place their terminal atoms so that
\begin{align}
    \label{eqn:ranks}
    s_{p_1} < s_{p_2} < \cdots,
    \qquad
    \ell_{p_1} > \ell_{p_2} > \cdots .
\end{align}
Both orders are realized by the placement scheme of \Cref{sec:timing}: list the early atoms inside $S_r$ in increasing index order, and the late atoms inside $L_r$ in decreasing index order. 
Concretely, $p_1$ is considered before $p_2$: its early atom satisfies $s_{p_1}<s_{p_2}$, but its late atom satisfies $\ell_{p_1}>\ell_{p_2}$. Hence, if $p_1$ requests early, the median rule next exposes $p_2$; if $p_1$ instead returns late, then $p_2$ has already requested, either in $S_r$ or earlier in $L_r$. The same statement holds for every consecutive pair $p_i,p_{i+1}$.

\paragraph{Boundary anchors.}
A final pool $P_{H+1}$ places mass $2^{-h}$ at a filter atom in $F_h$ for every $h\le H$, and places its remaining mass $2^{-H}$ beyond the horizon. Index this pool as $p_1,p_2,\ldots$ as well, and give smaller-index boundary anchors later post-horizon atoms. This pool supplies the anchor evicted at the final late fault.

\paragraph{Tokens.}
For $1\le r\le H+1$, a token $q\in Q_r$ uses the same half-survival mechanism: conditional on not having requested earlier, it requests at the current filter $G_h$ (for $h<r$) or checkpoint $T_h$ (for $h\ge r$) with probability $1/2$, and otherwise continues. Equivalently, it has
\begin{align}
    \Pr[Y_q=g_{h,q}] &= 2^{-h} && (1\le h<r), \label{eqn:token-filter}\\
    \Pr[Y_q=t_{h,q}] &= 2^{-h} && (r\le h\le H), \label{eqn:token-checkpoint}
\end{align}
where $g_{h,q}\in G_h$ and $t_{h,q}\in T_h$; the remaining mass $2^{-H}$ is placed beyond the horizon. These probabilities also sum to one. The token filters play the same role as the anchor filters: they pin the next unresolved atom of every live token inside the current block, in $G_j$ or $T_j$, so token medians never rise above anchor medians (\Cref{lem:ladder}). For technical reasons explained where the comparison policy is defined (\Cref{sec:med-dynamics}), the token filters must strictly precede the checkpoint within each block.
Call a live token \emph{checkpoint-ready in block $j$} when its next unresolved atom lies in $T_j$. Conditional on the history that made it checkpoint-ready, it requests at $T_j$ with probability $1/2$; if it does not, its next unresolved atom lies in $T_{j+1}$, or at its post-horizon tail when $j=H$.

\paragraph{Ordering the post-horizon atoms.} Place the post-horizon atoms of $P_{H+1}$ strictly after the post-horizon atoms of every token pool, that is, after the tails of every pool $Q_1,\ldots,Q_{H+1}$. The ordinary anchor pools $P_1,\ldots,P_H$ have no post-horizon atoms: all their mass is on filters and their early/late terminal pair. 
A post-horizon atom is an ordinary support point and does carry probability mass ($2^{-H}$ per page), but its \emph{location} is ours to choose, exactly as for atoms inside windows. In \Cref{sec:finite} we cut the run off at a finite request-count horizon and place every post-horizon atom so late that the horizon is provably reached first. The ordering of these atoms therefore influences only median \emph{comparisons}, through \Cref{lem:locate}, and never the realized request sequence.

\subsection{Ordering the Medians}
\label{sec:ladder}

The next two lemmas locate the median next-request epoch of every page and then compare the locations using \eqref{eqn:windows}.

\begin{lemma}
\label{lem:locate}
     Fix any global time $t$ strictly before the programmed first request of a page $p$; equivalently, the page is live at time $t$. Call the smallest atom of $p$'s law exceeding $t$ its \emph{next unresolved atom}.
    \begin{enumerate}[label=(\alph*), itemsep=1pt, topsep=2pt]
        \item If it is an anchor of $P_r$ that has declined all filters $F_1,\ldots,F_{r-1}$, its lower-median next-request epoch is its late atom $\ell_p\in L_r$.
        \item In every other case, its lower-median next-request epoch is its next unresolved atom; concretely, this is its atom in the earliest of its designated windows not yet past $t$, as it has at most one atom per window.
    \end{enumerate}
\end{lemma}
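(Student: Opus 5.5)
The plan is a direct computation of the conditional law of the first-request time. Fix a live page $p$ at time $t$. By \Cref{lem:program} its first residual waiting time has law $\mu_p$, so the conditional law given the observed history is $\mu_p$ restricted to atoms exceeding $t$ and renormalized. Independence across pages means other pages' requests carry no information about $p$. Every law in \Cref{sec:anchor-token} has the same half-survival structure: at each designated window in chronological order, the page requests there with conditional probability $1/2$ given survival. The only exception is the terminal pair of an ordinary anchor, split as $a$ versus $b$. The post-horizon tail is the last atom. Hence the conditional law is determined by which designated windows have already passed. One also has to note that spent pages and the time between windows do not matter: no atom lies in the gaps, so ``not yet past $t$'' is unambiguous.

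The case analysis then reads off the lower median \eqref{eqn:lower-median} by cumulative mass. For case (b), the next unresolved atom $x$ is either a filter atom (anchor filter $F_h$ with $h<r$, token filter $G_h$, or the boundary anchor's $F_h$) or a token checkpoint $T_h$. In each of these, the renormalized law places conditional mass exactly $1/2$ on $x$. So $\Pr[Y\le x\mid Y>t]\ge 1/2$, while every $y<x$ has cumulative mass $0<1/2$. The lower median is therefore $x$, and this holds even with the tie at exactly $1/2$, since the lower quantile uses $\ge$. One must check that the last remaining atom, namely the post-horizon tail (for tokens after $T_H$ and for boundary anchors after $F_H$), also fits: it then has conditional mass $1$, so it is again the lower median. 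For an ordinary anchor of $P_r$ that has not yet declined all filters, the next unresolved atom is some $F_h$ with $h<r$, which also falls under (b).

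For case (a), an anchor of $P_r$ that has passed $F_1,\dots,F_{r-1}$ has conditional law placing $a$ on $s_p$ and $b$ on $\ell_p$ if $t<s_p$. Since $a=\tfrac12-\gamma<\tfrac12$, the cumulative mass at $s_p$ is below $1/2$, and it reaches $1$ at $\ell_p$, so the lower median is $\ell_p$. If instead $s_p\le t$, then because $p$ is live it has not requested at $s_p$. The conditional law is then a point mass at $\ell_p$, with the same conclusion. Finally, the ``one atom per window'' remark follows from the construction: each designated window receives at most one atom of $p$, and windows are disjoint and ordered by \eqref{eqn:windows}. So the earliest designated window not past $t$ contains exactly the next unresolved atom.

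The main obstacle is bookkeeping rather than any deep step. I expect the delicate point to be verifying that the conditional mass on the next atom is exactly $1/2$ in every non-terminal case, including the transition from filter windows to checkpoint windows for tokens and the post-horizon tails. This amounts to checking that the unconditional masses in \eqref{eqn:anchor-filter}--\eqref{eqn:token-checkpoint} are $2^{-h}$ against a surviving mass of $2^{-(h-1)}$. I would handle it uniformly through the sequential description rather than atom by atom.
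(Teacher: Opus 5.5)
Your proposal is correct and follows essentially the same route as the paper. You condition the programmed first-request law on survival past $t$ and note that each filter or checkpoint atom then carries conditional mass exactly $\tfrac12$, and a lone post-horizon tail carries mass $1$, so the lower median is the next atom; for an anchor past its filters, the $a<\tfrac12<b$ split, or a point mass once $s_p$ has passed, puts the lower median at $\ell_p$. Your explicit derivation of the conditional law from \Cref{lem:program} and from cross-page independence fills in a step the paper leaves implicit.
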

\begin{proof}
    Let $x$ be the page's next unresolved atom. At a filter or checkpoint, the conditional mass at $x$ is exactly $1/2$; if $x$ is the sole post-horizon tail, its conditional mass is $1$. In either case the conditional CDF first reaches $1/2$ at $x$, so the lower median is $x$. For an ordinary anchor page $p$ that has passed all filters, if time $t$ is before its early atom, the remaining conditional masses are $a<1/2$ at $s_p$ and $b>1/2$ at $\ell_p$, so the lower median is $\ell_p$. If $s_p$ has already passed without a request, the late atom $\ell_p$ is the only remaining atom, and the same conclusion holds.
\end{proof}

Throughout the rest of this section we use the following fact about spent pages: at every time considered, each spent page's current post-first gap has age $u<D$, so by the second part of \Cref{lem:program} its residual lower median is at most $D-u\le D$, and its median next-request epoch at global time $t$ is at most $t+D$. This holds because every realized post-first gap equals the short gap $D$: in the finite instance this is exactly the exclusion of the bad event $\mathcal E_3$ in \Cref{lem:finite}, valid before the horizon on its good event, and the idealized experiment adopts it by convention (spent pages request every $D$ time units; the idealized experiment's conventions are collected in \Cref{rem:reservoir}). 
The deterministic starter satisfies the same $t+D$ bound, since its gap is exactly $D$. 

\begin{lemma}
    \label{lem:ladder} The deterministic starter requests at $t_0$ and is chosen so that $t_0+D<\min S_1$; every other page's first-request atoms lie in $S_1$ or later. Let $t$ be either $t_0$ (for the initial eviction) or a request time in $S_j$ (for the cascade in block $j$). The resident pages' median next-request epochs are ordered as follows: 
    \begin{center}
    \begin{tabular}{@{}ll@{}}
    \toprule
    resident page & median next-request epoch \\
    \midrule
    spent page or deterministic starter & at most $t+D$ \\
    token whose next atom is a filter & in $G_j$ \\
    token whose next atom is a checkpoint & in $T_j$ \\
    anchor in $P_r$ with $r>j$, including $P_{H+1}$ & in $F_j$ \\
    live anchor in $P_j$ & in $L_j$ \\
    \bottomrule
    \end{tabular}
    \end{center}
    The first row covers the starter and every spent page; the remaining rows classify the live pages by their next unresolved atom and, together, cover all of them. These locations are strictly increasing. 
\end{lemma}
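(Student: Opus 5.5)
The proof will go row by row: for each class of resident page, compute its median next-request epoch using \Cref{lem:locate} and the spent-page fact, then check that the resulting window locations are strictly increasing using \eqref{eqn:windows} and the inter-window gaps.

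First, the starter and spent pages. The starter has deterministic gap $D$, so its next request is at most $t+D$. A spent page has a current post-first gap of age $u<D$ (the convention recalled just before the lemma), so \Cref{lem:program} gives a residual lower median of at most $D$, again placing the epoch at most $t+D$. These pages are the only ones not covered by \Cref{lem:locate}, since every other page is live at $t$.

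Second, the live pages at a time $t\in S_j$ (or $t=t_0$, treated as $j=1$ with nothing resolved). I will identify the next unresolved atom of each and apply \Cref{lem:locate}.
\begin{itemize}
\item \emph{Tokens.} A live token $q\in Q_r$ has already passed all atoms in windows of blocks before $j$. If $j<r$, its next atom is $g_{j,q}\in G_j$; if $j\ge r$, it is $t_{j,q}\in T_j$. Case (b) of \Cref{lem:locate} then gives its median in $G_j$ or $T_j$ respectively.
\item \emph{Anchors in later pools.} An anchor of $P_r$ with $r>j$, including $P_{H+1}$, has its next atom at its filter in $F_j$. It has no atom in $S_j$, since its early atom is in $S_r$. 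Case (b) applies, so its median is in $F_j$.
\item \emph{Live anchors in $P_j$.} Such an anchor has survived $F_1,\dots,F_{j-1}$, so case (a) puts its median at $\ell_p\in L_j$. This holds whether or not $s_p$ has already passed.
\item \emph{Anchors in earlier pools.} Anchors of $P_r$ with $r<j$ all had their terminal atoms in block $r$, so they are spent. Tokens and anchors have no atoms before $S_1$, which handles $t=t_0$ as well.
\end{itemize}
This shows that the rows of the table cover all resident pages.

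Finally, I will check strict monotonicity. The chain $G_j\prec T_j\prec F_j\prec L_j$ follows from \eqref{eqn:windows}. For the first row, I need $t+D<\min G_j$. Since $t\in S_j$, this follows because consecutive windows are separated by an empty gap of length $2D>D$. For $t=t_0$, it follows from $t_0+D<\min S_1<\min G_1$.

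The main obstacle is the bookkeeping in the second step: making sure no live page carries an atom in $S_j$ that would make its next unresolved atom lie inside the current window. Such an atom could fall before or after $t$. I will handle this by observing that only $P_j$ anchors have atoms in $S_j$, and case (a) of \Cref{lem:locate} covers them regardless of where $t$ falls relative to $s_p$.
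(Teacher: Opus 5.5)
Your proposal is correct and follows the paper's proof. You bound the starter and spent pages by $t+D$ via the spent-page convention and \Cref{lem:program}, read off each live page's median from \Cref{lem:locate} (case (a) for live $P_j$ anchors), and obtain strict monotonicity from the $2D$ gap after $S_j$ together with \eqref{eqn:windows}. Your explicit check that only $P_j$ anchors have atoms in $S_j$, and that anchors of earlier pools are already spent, usefully spells out the coverage claim that the paper leaves implicit.
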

\begin{proof}
    For $t$ in $S_j$, the table follows from \Cref{lem:program,lem:locate}. A spent page has residual lower median at most $D$, by the spent-page convention stated before the lemma, hence median epoch at most $t+D$; the deterministic starter has the same bound. Every other page has the median epoch specified by its next atom, except a live $P_j$ anchor, whose median is its late atom. The empty gap after $S_j$ has length $2D$, so $t+D<\min G_j$ for every $t\in S_j$. Equation~\eqref{eqn:windows} then gives the strict ordering. The same argument applies at $t_0$, because $t_0+D<\min S_1<\min G_1$. 
\end{proof}

\begin{corollary}
    \label{cor:evict}
    Assume the anchor pool named below has a live survivor,\footnote{This is automatic in the idealized experiment and holds in the finite instance on the good event of \Cref{lem:finite}.} then
    \begin{enumerate}[label=(\alph*), itemsep=1pt, topsep=2pt]
        \item At the starter fault at $t_0$ and at every fault of $\mcA_{\medalg}$ during $S_j$, the median policy evicts the smallest-index live anchor of $P_j$ (with $j=1$ at the starter fault).
        \item If the selected anchor, say $p_m$, takes its late branch, then at its request time $t_\ell:=\ell_{p_m}\in L_j$
        every anchor of $P_j$ is spent, and the median policy evicts the smallest-index surviving anchor of $P_{j+1}$, or of $P_{H+1}$ when $j=H$.
    \end{enumerate}
\end{corollary}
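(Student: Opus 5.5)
The plan is to read both parts of \Cref{cor:evict} directly off the ordering in \Cref{lem:ladder}. The median policy evicts the resident with the largest median next-request epoch, and the ladder places the classes in strictly increasing windows. So at each fault it suffices to identify which class is the top nonempty one among residents, and then to break ties within that class using the placement \eqref{eqn:ranks}.

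For part (a), fix $t=t_0$ or a fault time $t\in S_j$. By hypothesis $P_j$ has a live survivor. I first need that this survivor is resident; since $n=k+1$ only one page is absent, so I will check that the hole is never a live $P_j$ anchor at such a time. The hole is the requested page at the previous fault, or the page evicted there. The inductive structure should ensure that every evicted anchor is exactly the one whose request triggers the next fault, so the hole is always the page about to fault or a spent page. I will state this as an induction on faults, as a light version of the later invariant.

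Given a resident live $P_j$ anchor, \Cref{lem:ladder} says its median lies in $L_j$, while every other resident has median in $G_j$, $T_j$, $F_j$, or at most $t+D$. Hence the maximizer lies among the live $P_j$ anchors. Each of them has declined all filters, so by \Cref{lem:locate}(a) its median is $\ell_p$. By \eqref{eqn:ranks} the ordering $\ell_{p_1}>\ell_{p_2}>\cdots$ makes the smallest-index live anchor the unique maximizer. The same argument applies at $t_0$ with $j=1$, because every page is then live and has declined no filters, or is the starter.

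For part (b), suppose $p_m$ takes its late branch, so it requests at $t_\ell=\ell_{p_m}\in L_j$ and faults, since it was evicted. I would argue in three steps.
\begin{itemize}
\item Anchors $p_i$ with $i<m$ were exposed earlier and returned early, which is why the cascade moved on to $p_m$; hence they are spent.
\item Anchors $p_i$ with $i>m$ have $s_{p_i}<\ell_{p_m}$ and $\ell_{p_i}<\ell_{p_m}$ by \eqref{eqn:ranks}, so by time $t_\ell$ each has already issued its terminal request and is spent. The cascade stops at $p_m$, so no later-index anchor was the selected one.
\item Hence all of $P_j$ is spent at $t_\ell$.
\end{itemize}
I would then redo the ladder comparison at a time $t\in L_j$ rather than $S_j$. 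Spent pages and the starter have medians at most $t+D$. Tokens have their next atom in $G_{j+1}$ or $T_{j+1}$, because the block-$j$ windows $G_j$ and $T_j$ are past. A live anchor of $P_r$ with $r>j+1$ has its next atom in $F_{j+1}$. A live anchor of $P_{j+1}$ has declined $F_1,\dots,F_j$, so by \Cref{lem:locate}(a) its median is its late atom in $L_{j+1}$. The gap of $2D$ after $L_j$ keeps $t+D$ below $\min S_{j+1}$. So the maximum is attained by the smallest-index live anchor of $P_{j+1}$.

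When $j=H$, every remaining live anchor is in $P_{H+1}$, with its median at its post-horizon atom. Tokens also sit at post-horizon tails, but the post-horizon atoms of $P_{H+1}$ were placed after all token tails. Smaller-index boundary anchors have later atoms, so the smallest-index survivor of $P_{H+1}$ is selected.

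The main obstacle I expect is the residency and hole-tracking claim in part (a): showing that no live $P_j$ anchor is the missing page, and that the exposed anchors indeed all requested early, in the right order. I would handle both with a short induction over faults inside block $j$, using that each eviction of $p_i$ is followed either by the fault at $s_{p_i}$, which re-exposes $p_{i+1}$, or by the late request at $\ell_{p_i}$. Uniqueness of the maximizer, and hence determinism, comes from the strict orders in \eqref{eqn:ranks} and the residue-class placement.
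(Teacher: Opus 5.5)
Your proposal is correct and follows essentially the same route as the paper. Part (a) is read off \Cref{lem:ladder} with ties broken by \eqref{eqn:ranks}. Part (b) first shows $P_j$ is exhausted, then redoes the ladder comparison one block later, with the post-horizon ordering handling $j=H$.

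Two small remarks. First, the residency induction you flag as the main obstacle is unnecessary: the cache size is $k=N-1$, so at a fault on $r$ the eviction candidates are exactly all other pages, and $r$ itself is spent at that moment, so every live page is resident. Second, in (b) you should add that anchors of $P_j$ which requested at one of $F_1,\ldots,F_{j-1}$ were spent before block $j$, since your bullets only account for early or terminal requests.
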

\begin{proof}
    (a) At any such $t$, \Cref{lem:ladder} places the live anchors of $P_j$ strictly above every other resident; among them, \eqref{eqn:ranks} gives the smallest-index live anchor the largest late atom, so it is the unique maximum-median page and is evicted.

    (b) We compare the medians at the {\em late fault} $t_\ell$ in $L_j$ directly, via the following two steps. \emph{First, the pool $P_j$ is exhausted at $t_\ell$.} Any anchor that failed one of $F_1,\ldots,F_{j-1}$ was already spent before block $j$. Among the anchors that passed those filters, each smaller-index anchor than $p_m$ was selected earlier and requested early in $S_j$. Each larger-index survivor either requests early in $S_j$ or has a late atom $\ell_{p_i}<\ell_{p_m}$. Thus every page of $P_j$ is spent by time $t_\ell$. 

    \emph{Second, the eviction at $t_\ell$.} Assume first that $j<H$. At this time, each surviving token has its next atom either at a filter in $G_{j+1}$ or at a checkpoint in $T_{j+1}$. Each anchor in a pool later than $P_{j+1}$ has its next filter in $F_{j+1}$, while a surviving $P_{j+1}$ anchor has declined its last filter and has median in $L_{j+1}$. Because the empty gap from $L_j$ to $S_{j+1}$ has length $2D$, every spent page and the deterministic starter have median epoch before $S_{j+1}$, hence before all four displayed candidate groups. The next-block ordering is therefore $t_\ell+D \prec G_{j+1}\prec T_{j+1}\prec F_{j+1}\prec L_{j+1}$, and the smallest-index surviving anchor of $P_{j+1}$ is the unique maximum, hence is evicted. For $j=H$: every surviving token has passed all its checkpoints by $t_\ell\in L_H$, so its median is its post-horizon tail, and every surviving $P_{H+1}$ anchor has declined all its filters, so its median is its post-horizon atom. Our construction places every $P_{H+1}$ atom after every token tail and gives smaller-index boundary anchors later atoms, while spent pages and the starter again have median epoch at most $t_\ell+D$, which precedes all tails. Hence the smallest-index surviving boundary anchor is the unique maximum and is evicted.
\end{proof}

\subsection{The Median and Comparison Policies}
\label{sec:med-dynamics}

The ``inexhaustible pools'' used below are an abstract \emph{idealized} experiment, defined precisely in \Cref{rem:reservoir}, not a paging instance with infinitely many simultaneously running renewal processes. For each pool, reveal an ordered sequence of independent candidates only as the dynamics needs them; a revealed anchor has the conditional early/late law $(a,b)$ after surviving its filters, and a revealed token candidate has conditional checkpoint probability $1/2$. The finite construction is the actual paging instance. On the good event in \Cref{lem:finite}, its first sufficiently many surviving candidates have exactly the same revealed outcomes as this reservoir experiment. This interpretation avoids any local-finiteness issue from literally merging infinitely many renewal processes.

\begin{definition}[Idealized experiment]\label{rem:reservoir}
For definiteness, the idealized experiment makes the following assumptions; the finite
instance matches each of them on the good event of \Cref{lem:finite}, which is the
content of the coupling in its proof.
\begin{enumerate}[label=(\roman*), itemsep=1pt, topsep=2pt]
    \item It runs on the timetable of \Cref{sec:timing}: the same windows, residue
    classes, atom placements, and deterministic starter.
    \item Every post-first gap equals $D$, so a spent page requests deterministically
    every $D$ time units. This is the convention invoked for spent pages before
    \Cref{lem:ladder}; in the finite instance it is enforced, before the horizon, by
    excluding long post-first gaps; this is precisely the bad event $\mathcal E_3$ in the proof of \Cref{lem:finite}.
    \item Pools are inexhaustible: whenever the dynamics needs the next live anchor of
    $P_j$, a filter-surviving replacement token of $Q_{j+1}$, or a surviving boundary
    anchor of $P_{H+1}$, one exists. This is exactly the survivor hypothesis of
    \Cref{cor:evict}.
    \item The experiment ends at the final late fault of block $H$; every post-horizon
    atom lies strictly after all times considered and is never requested.
    \item All revealed candidate outcomes are independent, with the conditional laws
    stated above.
\end{enumerate}
\end{definition}

We compare the median policy $\mcA_{\medalg}$ with an online comparison policy $\mcA_{\cmpalg}$, whose evictions are defined here. In the finite construction, every anchor and token starts in the cache, while the deterministic starter page $z$ starts outside, so the cache contains all but one page. There are $N=3L(2^{H+1}-1)+1$ pages in total, and the cache size is $k=N-1$. By construction, $z$ requests first at $t_0$ and then every $D$ time units. The first request is therefore to $z$, and these two policies fault. By \Cref{cor:evict}(a), $\mcA_{\medalg}$ evicts the smallest-index live anchor $p\in P_1$. The comparison policy $\mcA_{\cmpalg}$ instead evicts a fixed token $q\in Q_1$. If $C$ is the common part of the two caches after this fault, then
\begin{align}
    \label{eqn:gap-state}
    \mcA_{\cmpalg}=C\cup\{p\},
    \qquad
    \mcA_{\medalg}=C\cup\{q\}.
\end{align}
We call $p$ the \emph{missing anchor} of the median policy and $q$ the \emph{active token} of the comparison policy.   \Cref{fig:pair-state} previews the resulting one-pair cache invariant and separates the two possible outcomes at the checkpoint $T_j$ ({\em whether the active token requests there or it does not}) before the formal proof. 

\paragraph{The comparison policy $\mcA_\cmpalg$.} The policy is fully specified by three clauses: at the starter fault it evicts a fixed token of $Q_1$ (which is checkpoint-ready in block $1$, since $Q_1$ tokens have no filters); at each of its checkpoint faults (that is, whenever its active token requests at a checkpoint $T_j$), it installs the requested token and evicts the first token of $Q_{j+1}$ that has survived all filters through $G_j$ (survival is determined by the observed request history, so the rule is online, and the evicted token is checkpoint-ready in block $j+1$); and if no such token exists (which happens only on the bad event excluded in \Cref{lem:finite}) it follows a fixed fallback rule, say, evict the smallest-index token currently in its cache, or the smallest-index page if no token remains; the choice is immaterial, since \Cref{lem:finite} bounds costs on the bad event crudely. It never evicts any other page, and we will see in \Cref{lem:invariant} that it never faults except at $t_0$ and at checkpoints.

This definition is also where the separation $G_j\prec T_j$ inside each block, promised in \Cref{sec:anchor-token}, earns its keep, in two ways. First, at its checkpoint fault in $T_j$ the policy must name a token of $Q_{j+1}$ that has survived all filters through $G_j$; because every $G_j$ atom precedes every $T_j$ atom, that survival is fully resolved by the observed history before the fault, keeping the rule online. Second, the replacement token it evicts has no atom before $T_{j+1}$, so it cannot request and force an unintended second comparison fault later in the same block. If the token filters shared the checkpoint window, both properties could fail.

\begin{figure}[H]
\centering
\begin{tikzpicture}[
  font=\small,
  >=Stealth,
  state/.style={draw, rounded corners=2pt, align=center,
                minimum width=1.75cm, minimum height=0.7cm, fill=gray!4},
  lab/.style={align=center, font=\tiny, text=black!60, fill=white, inner sep=1pt},
  scap/.style={align=center, font=\scriptsize}
]
  \node[state] (start) at (0,0) {${\color{red!65!black}p}\mid{\color{blue!60!black}q}$};
  \node[scap] at (0,-0.7) {block $j$ starts};

  \node[state] (late) at (5.2,0) {${\color{red!65!black}p^\star}\mid{\color{blue!60!black}q}$};
  \node[scap, anchor=south] at (5.2,0.55)
    {first selected\\late anchor};

  \node[state] (upper) at (13.5,1.25) {${\color{red!65!black}p^+}\mid{\color{blue!60!black}q}$};
  \node[scap] at (13.5,0.5) {next-block pair};

  \node[state] (lower) at (9.5,-1.55) {${\color{red!65!black}p^\star}\mid{\color{blue!60!black}q'}$};
  \node[scap] at (9.5,-2.27) {after checkpoint};

  \node[state] (lowerright) at (13.5,-1.55) {${\color{red!65!black}p^+}\mid{\color{blue!60!black}q'}$};
  \node[scap] at (13.5,-2.27) {block $j+1$ pair};

  \draw[->, line width=0.75pt] (start) --
      node[lab, above=5pt, text width=2.9cm]
      {early requests in $S_j$;\\$\mcA_{\medalg}$ pays once for each}
      (late);

  \draw[->, line width=0.75pt, rounded corners=7pt]
      (late.east) -- (7.3,0) -- (7.3,1.25) -- (upper.west);
  \node[lab, anchor=south, text width=4.4cm] at (10.1,1.35)
      {no request by ${\color{blue!60!black}q}$ at $T_j$;\quad $\mcA_\medalg$ pays $1$ for ${\color{red!65!black}p^\star}$ in $L_j$};

  \draw[->, line width=0.75pt, rounded corners=7pt]
      (late.south) -- (5.2,-1.55) -- (lower.west);
  \node[lab, anchor=south, text width=2.7cm] at (7.15,-1.35)
      {${\color{blue!60!black}q}$ requests at $T_j$;\\$\mcA_\cmpalg$ pays $1$};

  \draw[->, line width=0.75pt] (lower) --
      node[lab, above=5pt, text width=2.1cm]
      {$\mcA_\medalg$ pays $1$\\for ${\color{red!65!black}p^\star}$ in $L_j$}
      (lowerright);

  \node[anchor=west, align=left, font=\scriptsize, text width=14.6cm] at (0.5,-2.9)
      {Notation: {\em left of $\mid$ is the {\color{red!65!black}anchor} present only in $\mcA_{\cmpalg}$; right of $\mid$ is the {\color{blue!60!black}token} present only in $\mcA_{\medalg}$.}};
\end{tikzpicture}
\caption{The one-pair invariant as a state machine. Early anchor requests change only the missing anchor. A token checkpoint changes only the active token. The selected anchor's late return advances the anchor discrepancy to the next block. Hence the caches always differ by one anchor/token pair, while $\mcA_\medalg$ pays once per selected anchor and $\mcA_\cmpalg$ pays at most once per block.}
\label{fig:pair-state}
\end{figure}

\begin{lemma}[One-pair invariant]
    \label{lem:invariant}
    Suppose block $j$ begins with caches of the form \eqref{eqn:gap-state}, where $p$ is the smallest-index live anchor of $P_j$ and the next possible request of $q$ is the checkpoint $T_j$ (i.e., $q$ is checkpoint-ready in block $j$). Throughout the block, the two caches differ by exactly one missing-anchor/active-token pair. The median policy $\mcA_\medalg$ faults once for each anchor it selects, whereas the comparison policy $\mcA_\cmpalg$ faults in the block if and only if its active token $q$ requests at $T_j$, in which case it faults once. 
    
    Moreover, if $j<H$, then block $j+1$ again begins with caches of the form \eqref{eqn:gap-state}: the missing anchor is the smallest-index live anchor of $P_{j+1}$, and the active token is checkpoint-ready in block $j+1$.
\end{lemma}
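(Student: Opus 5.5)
The proof is a chronological walk through block $j$, window by window ($S_j$, $G_j$, $T_j$, $F_j$, $L_j$). At each realized request we check which policies fault and how the pair $(p,q)$ in \eqref{eqn:gap-state} changes. The invariant to maintain is that the caches are $C\cup\{p\}$ for $\mcA_\cmpalg$ and $C\cup\{q\}$ for $\mcA_\medalg$. Here $p$ is a live anchor of $P_j$ (the missing anchor) and $q$ is the active token. Throughout, a request to any page of $C$ is a hit for both policies. So only requests to $p$ or $q$ can cause a fault, and any fault must be analyzed through \Cref{cor:evict} (for $\mcA_\medalg$) and the three clauses defining $\mcA_\cmpalg$.

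\emph{Window $S_j$.} The active token $q$ has no atom in $S_j$, since it is checkpoint-ready. So the only relevant requests are early requests of the missing anchor $p$. Such a request is a hit for $\mcA_\cmpalg$ and a fault for $\mcA_\medalg$. By \Cref{cor:evict}(a), $\mcA_\medalg$ then evicts the next smallest-index live anchor $p'$ of $P_j$. The new state is again of the form \eqref{eqn:gap-state} with $p$ replaced by $p'$, and $q$ is unchanged. This is exactly one fault per selected anchor. The order \eqref{eqn:ranks} guarantees that $p'$'s early atom has not yet passed, because $s_{p'}>s_p$. Hence the cascade continues inside $S_j$ until some selected anchor $p^\star$ takes its late branch.

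\emph{Windows $G_j$, $T_j$, $F_j$.} Requests in $G_j$ and $F_j$ come from filter atoms of pages lying in $C$, so both policies hit. The only exception would be $q$ or $p^\star$ having an atom there. But $q$'s next atom is in $T_j$, and $p^\star$ has declined all its filters, so neither does. At $T_j$ there are two cases. If $q$ does not request, nothing changes. If $q$ requests, this is a hit for $\mcA_\medalg$ and a fault for $\mcA_\cmpalg$. $\mcA_\cmpalg$ then evicts the first filter-surviving token $q'$ of $Q_{j+1}$, which exists by \Cref{rem:reservoir}(iii). This $q'$ lies in $C$ and has no atom before $T_{j+1}$. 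The new pair is $(p^\star,q')$. Other tokens requesting at $T_j$ lie in $C$ and cause hits.

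\emph{Window $L_j$ and the transition to block $j+1$.} At $\ell_{p^\star}$, the anchor $p^\star$ requests. This is a hit for $\mcA_\cmpalg$ and a fault for $\mcA_\medalg$. By \Cref{cor:evict}(b), $\mcA_\medalg$ evicts the smallest-index survivor $p^+$ of $P_{j+1}$. Other late atoms of $P_j$ belong to pages in $C$, so they are hits for both policies. Counting faults, $\mcA_\medalg$ faulted once per selected anchor (the early ones plus the late $p^\star$). $\mcA_\cmpalg$ faulted exactly when $q$ requested at $T_j$, and then only once, because the replacement token $q'$ cannot request before $T_{j+1}$. After the late fault, the active token is either the unchanged $q$, whose next atom is now in $T_{j+1}$ by the token law, or the new $q'$. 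In both cases the token is checkpoint-ready in block $j+1$, which gives the "moreover" clause.

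\emph{Main obstacle.} The delicate step is ensuring that no stray request breaks the invariant. Spent pages request every $D$, the starter requests periodically, and live pages of other pools request at their filters. I would handle all of these by the single observation that they are members of $C$, hence hits for both policies. The subtle point is that a page evicted from neither cache must never be one of the two discrepancy pages, so the argument has to track exactly which page is missing from each cache at every moment.
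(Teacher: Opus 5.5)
Your proposal is correct and follows the paper's proof essentially step for step. The steps are the early-branch cascade in $S_j$ via \Cref{cor:evict}(a), the optional token swap $q\to q'$ at the checkpoint $T_j$, the single late median fault at $\ell_{p^\star}$ via \Cref{cor:evict}(b), and the observation that every other request hits a page of the common part $C$. Your window-by-window check that neither discrepancy page has an atom in $G_j$ or $F_j$ only makes explicit what the paper compresses into one sentence.
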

\begin{proof}
    First suppose the missing anchor $p$ requests early in $S_j$. It is present only in $\mcA_\cmpalg$'s cache, so only $\mcA_\medalg$ faults. By \Cref{cor:evict}(a), $\mcA_\medalg$ evicts the next smallest-index live anchor $p'$ of $P_j$. The caches again have the form \eqref{eqn:gap-state}, with discrepancy $p'$ versus the same token $q$. This argument repeats for every selected anchor taking the early branch.

    Eventually the selected anchor $p^*$ takes its late branch. 
    Consider the checkpoint first: if $q$ requests at $T_j$, only $\mcA_{\cmpalg}$ faults; it installs $q$ and evicts a surviving token $q'\in Q_{j+1}$, so the active token becomes $q'$ while the missing anchor is still $p^*$. Whether or not this happened, exactly one median fault follows: at the request to $p^*$ in $L_j$, only $\mcA_\medalg$ faults, and by \Cref{cor:evict}(b) it evicts an anchor $p^{+}$ of $P_{j+1}$ (or $P_{H+1}$ in the last block). The discrepancy is then $p^{+}$ versus the current active token, $q$ or $q'$. Every other request is to a page held by both policies and changes neither cache discrepancy nor cost. \Cref{fig:pair-state} depicts these transitions. 

    Finally, the closing configuration restores the hypothesis for block $j+1$ when $j<H$: the missing anchor $p^{+}$ is the smallest-index live anchor of $P_{j+1}$ by \Cref{cor:evict}(b), and the active token is checkpoint-ready in block $j+1$---it is either $q$, which declined $T_j$ and therefore has its next unresolved atom in $T_{j+1}$, or the replacement $q'\in Q_{j+1}$, which survived all filters through $G_j$ and has its first checkpoint atom in $T_{j+1}$.
\end{proof}

\begin{proposition}
    \label{prop:ideal-costs}
    In the abstract idealized experiment, the expected costs through the end of block $H$ are
    \begin{align}
        \label{eqn:ideal-costs}
        \E[\cost(\mcA_{\medalg})] = 1 + \frac Hb,
        \qquad
        \E[\cost(\mcA_{\cmpalg})] = 1 + \frac H2 .
    \end{align}
\end{proposition}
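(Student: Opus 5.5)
The plan is to decompose both costs into the starter fault at $t_0$ plus per-block contributions, and then apply \Cref{lem:invariant} block by block, using induction on $j$ to supply its hypothesis. At $t_0$ both policies fault exactly once. This is the first request, to the starter $z$, which is absent from both initial caches. The resulting configuration is \eqref{eqn:gap-state}, with $p$ the smallest-index live anchor of $P_1$ (by \Cref{cor:evict}(a)) and $q\in Q_1$ checkpoint-ready in block $1$. These are exactly the hypotheses of \Cref{lem:invariant} for $j=1$. Its ``moreover'' clause restores the hypotheses for block $j+1$ whenever $j<H$, so by induction every block $1,\dots,H$ starts in the invariant form. Every fault after $t_0$ occurs inside some block, since between $L_j$ and $S_{j+1}$ only spent pages and the starter request, and those are in both caches. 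By definition (iv), the experiment ends at the final late fault of block $H$. Hence each cost equals $1$ plus the sum of the $H$ block costs, and linearity of expectation reduces the claim to computing one expected block cost for each policy.

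For the median policy, \Cref{lem:invariant} says $\mcA_\medalg$ faults once for each anchor it selects in block $j$. The selected anchors are exposed one at a time. Each one, having survived its filters, independently takes the early branch with probability $a$ and the late branch with probability $b$; this is item (v) of \Cref{rem:reservoir} together with the sequential description of \eqref{eqn:anchor-terminal}. Selection continues exactly until the first late branch, and item (iii) guarantees that fresh survivors never run out. The number of median faults in block $j$ is therefore geometric with success parameter $b$, so its mean is $1/b$. This count includes the late fault itself, which by \Cref{cor:evict}(b) is the single fault that evicts the next block's anchor.

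For the comparison policy, \Cref{lem:invariant} says $\mcA_\cmpalg$ faults in block $j$ exactly when its checkpoint-ready active token requests at $T_j$. I will argue that this happens with conditional probability exactly $1/2$, whatever the past. The active token is checkpoint-ready, and by the half-survival description of \eqref{eqn:token-checkpoint}, conditional on its history, it requests at its $T_j$ atom with probability $1/2$. This holds for both the original token $q$ and a freshly chosen replacement $q'$. The replacement is selected using only information about $G$-filters, which is independent of its remaining checkpoint coin by item (v). The expected block cost is therefore $1/2$. Summing over the $H$ blocks gives $1+H/b$ and $1+H/2$.

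The step I expect to be the main obstacle is justifying the independence claims under adaptive selection. Which anchor or token becomes ``active'' depends on earlier outcomes, so I must check that each newly exposed candidate's terminal or checkpoint coin is still fresh at the moment it is exposed. My approach is to phrase the reservoir as i.i.d.\ candidate sequences whose coins are revealed only on exposure. The selection rules look only at already-resolved atoms (filters that precede $S_j$ or $T_j$), so each exposed coin is a stopping-time-indexed element of an i.i.d.\ sequence and retains its law. This yields the geometric and Bernoulli$(1/2)$ conclusions. A secondary point is confirming that the late fault of block $H$ is counted within block $H$ and that nothing is counted after it. This follows from \Cref{cor:evict}(b) with $P_{H+1}$ together with definition (iv).
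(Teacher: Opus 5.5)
Your proposal is correct and follows essentially the same route as the paper. Both arguments use the common starter fault and apply \Cref{lem:invariant} block by block, obtaining a geometric count of selected anchors with mean $1/b$ for $\mcA_\medalg$ and a checkpoint indicator with conditional mean $1/2$ for $\mcA_\cmpalg$, then sum by linearity; your explicit induction over blocks and your stopping-time argument for fresh coins spell out details the paper leaves to \Cref{lem:invariant} and to item (v) of \Cref{rem:reservoir}.
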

\begin{proof}
    Here an anchor counts as \emph{selected in block $j$} if it is the missing anchor at some moment of block $j$; the anchor evicted at the previous block's late fault (or at the starter fault when $j=1$) is the first selection of block $j$, and each selected anchor's single first request faults the median policy during the block, by \Cref{lem:invariant}.
    Conditioned on reaching its terminal pair, every selected anchor independently takes the early branch with probability $a$ and the late branch with probability $b$. Therefore the number $A_j$ of anchors selected in block $j$ satisfies
    \[
        \Pr[A_j=m]=a^{m-1}b,
        \qquad m=1,2,\ldots,
        \qquad
        \E[A_j]=\frac{1}{b}.
    \]
    Conditional on being active at the beginning of block $j$, the token $q$ places exactly half of its remaining probability at $T_j$. Thus its checkpoint-fault indicator $B_j$ has conditional expectation $1/2$, and hence $\E[B_j]=1/2$. Independence among the $B_j$ is not needed. By \Cref{lem:invariant}, the two costs are the common starter fault plus $\sum_{j=1}^H A_j$ and $\sum_{j=1}^H B_j$, respectively. Linearity of expectation proves \eqref{eqn:ideal-costs}.
\end{proof}

\subsection{Finite Instances and a Request-Count Horizon}
\label{sec:finite}

Recall from \Cref{sec:timing} the integer $L$, the pool sizes \eqref{eqn:pools}, namely $|P_r|=2L2^{r-1}$ and $|Q_r|=L2^{r-1}$, and the page total $N=3L(2^{H+1}-1)+1$.
A member of $P_r$ survives its first $r-1$ filters with probability $2^{-(r-1)}$, so the expected number of terminal-pair survivors is exactly $|P_r|\cdot2^{-(r-1)}=2L2^{r-1}\cdot2^{-(r-1)}=2L$. A member of $Q_r$ survives to its first checkpoint $T_r$, or to its post-horizon tail when $r=H+1$, with probability $2^{-(r-1)}$, so the expected number of such survivors is exactly $|Q_r|\cdot2^{-(r-1)}=L$. For fixed $H$, the number of pages is polynomial in $L$.

Throughout, a \emph{request horizon} $T$ is a number of requests: the run is truncated after the $T$th request, and $\cost(\pi,T)$ counts faults among those requests, as in \eqref{eqn:objective}. 

\begin{lemma}
    \label{lem:finite}
    For fixed $H$ and $\gamma$, the atoms, programming parameters, and a finite request horizon $T_L$ can be chosen so that
    \begin{align}
        \E[\cost(\mcA_{\medalg},T_L)] &= 1 + \frac Hb + o_L(1), \\
        \E[\cost(\mcA_{\cmpalg},T_L)] &= 1 + \frac H2 + o_L(1). \label{eqn:finite-costs}
    \end{align}
\end{lemma}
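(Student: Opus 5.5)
We will prove \Cref{lem:finite} by coupling the finite instance with the idealized experiment of \Cref{rem:reservoir}. The coupling should agree exactly on a good event $\mathcal G$ whose failure probability is $o_L(1)$. On the bad event we only need a crude bound: deterministic in $L$ and growing at most polynomially in $L$, so that it can be absorbed.

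\textbf{Fixing the parameters.} For fixed $H$ and $\gamma$ we choose them in the following order.
\begin{itemize}
\item Choose $\delta=\delta_L$ in \Cref{lem:program} so small that $N\cdot T'\cdot\delta_L=o_L(1)$. Here $T'$ bounds the number of post-first gaps drawn before the last window ends; it is at most $N$ times the number of $D$-periods up to $\max L_H$. All quantities involved are polynomial in $L$ for fixed $H$, so $\delta_L$ can be taken polynomially small.
\item Choose the request horizon $T_L$ to be the number of starter requests up to the end of $L_H$, plus a margin. Because the starter requests deterministically every $D$ time units, and every page requests only in its residue class, the request count is a monotone function of time. This lets us convert a time cutoff into a request-count horizon.
\item Place all post-horizon atoms, and the offsets $A_v$, beyond the time at which request $T_L$ must have occurred. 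This uses the fact that between consecutive starter requests, each page requests at most once.
\end{itemize}

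\textbf{The bad events.} There are three.
\begin{itemize}
\item $\mathcal E_1$: some anchor pool $P_j$, $j\le H$, runs out of terminal-pair survivors. This happens if fewer than $L$ members survive the filters, or if the geometric cascade in block $j$ exceeds the survivor count. The survivor count is $\mathrm{Bin}(|P_j|,2^{-(j-1)})$ with mean $2L$, so Chernoff gives at least $L$ survivors with probability $1-e^{-\Omega(L)}$. The cascade length $A_j$ is geometric with $\Pr[A_j>L]=a^L$. Both bounds are $o_L(1)$.
\item $\mathcal E_2$: some pool $Q_{j+1}$ or $P_{H+1}$ has no survivor when one is needed. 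Each such pool has mean $L$ survivors, so this has probability $e^{-\Omega(L)}$.
\item $\mathcal E_3$: a long post-first gap occurs before the horizon. By the union bound and the choice of $\delta_L$, this has probability $o_L(1)$.
\end{itemize}

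\textbf{The coupling on $\mathcal G=(\mathcal E_1\cup\mathcal E_2\cup\mathcal E_3)^c$.} Reveal pool members in index order. The surviving candidates in each pool are i.i.d. with the conditional laws $(a,b)$ and $1/2$. Identify them with the reservoir candidates of \Cref{rem:reservoir}. On $\mathcal G$, items (i)--(v) of \Cref{rem:reservoir} hold, so \Cref{lem:ladder}, \Cref{cor:evict} and \Cref{lem:invariant} apply verbatim. Costs up to $T_L$ therefore equal the idealized costs: the horizon falls after the final late fault, and no further faults occur before it, because spent pages and the starter are resident in both caches. Concretely, \begin{align*}\cost(\pi,T_L)\Ind_{\mathcal G}=\cost^{\mathrm{ideal}}(\pi)\Ind_{\mathcal G}\quad\text{for }\pi\in\{\medalg,\cmpalg\}.\end{align*}

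\textbf{Handling the bad event.} We need \begin{align*}\E[\cost^{\mathrm{ideal}}\Ind_{\mathcal G^c}]+\E[\cost(\pi,T_L)\Ind_{\mathcal G^c}]=o_L(1).\end{align*}
\begin{itemize}
\item For the finite instance, $\cost(\pi,T_L)\le T_L$, which is polynomial in $L$. This term is only harmless if $\Pr[\mathcal G^c]\cdot T_L=o_L(1)$. That holds for $\mathcal E_1$ and $\mathcal E_2$, which are exponentially small, and for $\mathcal E_3$ after shrinking $\delta_L$ by an extra factor of $T_L$.
\item For the idealized cost, the cascade part of $\mathcal E_1$ is the delicate piece, since $\Pr[A_j>L]=a^L$ is not exponentially small relative to everything. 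We handle it directly with $\E[A_j\Ind_{A_j>L}]=O(La^L)$, which is $o_L(1)$.
\end{itemize}

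The main obstacle is the circularity in this bookkeeping: $T_L$ and $N$ depend on $L$, and $\delta_L$ and the atom placements depend on $T_L$. We resolve it by fixing $L$ first, then $D$, the windows and $T_L$, and finally choosing $\delta_L$ and the $A_v$ last. Neither of these last two choices affects window positions, since $A_v$ enters only the long atoms, which we place beyond the horizon.
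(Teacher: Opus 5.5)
Your proposal is correct and follows the paper's proof closely. It uses the same three bad events; you merely file the overflow $A_j>L$ under $\mathcal E_1$ and the token/boundary shortfall under $\mathcal E_2$. It also uses the same Chernoff and $a^L$ estimates, a starter-based request horizon with the post-horizon atoms placed beyond it, and $\delta_L$ shrunk until $T_L^2\delta_L\to0$. Finally, it has exact agreement of the two costs on $\mathcal G$, and the bound $\E[\cost(\pi,T_L)\Ind_{\mathcal G^c}]\le T_L\Pr[\mathcal G^c]$ for the finite term.

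The one step done differently is the idealized median cost on $\mathcal G^c$, which is unbounded. The paper applies Cauchy--Schwarz with the explicit second moment $Ha/b^2+(1+H/b)^2$, so that term needs only $\Pr[\mathcal G^c]\to0$. You truncate at $L$ instead. Your route works and avoids the second-moment computation, but it should be written out in full:
\begin{itemize}
\item Off $M:=\{\max_jA_j>L\}$, the idealized median cost is at most $1+HL$, and $\Pr[\mathcal G^c]=o(1/T_L)$ absorbs this.
\item On $M$ you need the cross terms $A_i\Ind_M$ with $i\neq j$, not just $\E[A_j\Ind[A_j>L]]$. The pointwise bound $A_i\Ind_M\le A_i\Ind[A_i>L]+L\Ind_M$ gives $\E[(1+\sum_iA_i)\Ind_M]=O(H^2La^L)$.
\end{itemize}

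One point needs tightening. Taking $T_L$ to be ``the number of starter requests up to the end of $L_H$, plus a margin'' is correct only if the margin is roughly $N$ times that count. On the good event every spent page requests every $D$ time units, so the total request count by the end of $L_H$ is of order $N\tau_L/D$, not $\tau_L/D$. Choose $T_L>N(1+\lceil\tau_L/D\rceil)$, with $\tau_L$ the end of $L_H$, as the paper does.

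The two timing facts also play the opposite roles from the ones you assign. The fact that each page requests at most once between consecutive starter requests is what forces the $T_L$th request to come after $L_H$. The starter's deterministic period is what bounds the time of the $T_L$th request from above, and so lets you place the post-horizon atoms beyond it.

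Pushing the $A_v$ beyond the horizon is harmless but unnecessary. A long post-first gap still inflates a spent page's median, so $\mathcal E_3$ must still be excluded, as you do.
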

\begin{proof}
    \emph{Step 1: the bad events.} Let $\mathcal E_1,\mathcal E_2,\mathcal E_3$ be the following bad events.
    \begin{enumerate}[label=(\roman*), leftmargin=2em, itemsep=2pt]
        \item[$\mathcal E_1$:] Some anchor pool has fewer than $L$ survivors after its filters, or some token pool has no survivor reaching its first checkpoint (or, for $Q_{H+1}$, its post-horizon tail).
        \item[$\mathcal E_2$:] In the idealized experiment, $A_j>L$ for some block $j$; that is, the block would require more than $L$ selected anchors.
        \item[$\mathcal E_3$:] Before the chosen request horizon, some programmed page draws a post-first gap different from the short gap $D$.
    \end{enumerate}
    
    \smallskip
    \emph{Step 2: bounding $\Pr[\mathcal E_1]$ and $\Pr[\mathcal E_2]$.} The relevant survivor counts in $\mathcal E_1$ are binomial with means $2L$ for anchors and $L$ for tokens. Explicitly, the number of terminal-pair survivors of $P_r$ is $S_r\sim\mathrm{Bin}(2L2^{r-1},2^{-(r-1)})$ with $\E[S_r]=2L$; the multiplicative Chernoff lower tail $\Pr[S_r\le(1-\beta)\E[S_r]]\le e^{-\beta^2\E[S_r]/2}$ with $\beta=1/2$ gives $\Pr[S_r<L]\le e^{-L/4}$. The probability that no token of $Q_r$ reaches its first checkpoint (or, for $Q_{H+1}$, its tail) is $(1-2^{-(r-1)})^{L2^{r-1}}\le e^{-L}$, by $1-x\le e^{-x}$. A union bound over the $2(H+1)$ pools gives $\Pr[\mathcal E_1]\le 2(H+1)e^{-L/4}$. 
    In each block, selecting more than $L$ anchors requires the first $L$ selected anchors all to take the early branch, so $\Pr[\mathcal E_2]\le Ha^L$.

    \smallskip
    \emph{Step 3: the request-count horizon.} We next choose a horizon measured by requests. All intended first-request atoms through $L_H$ can be placed at polynomially bounded times; let $\tau_L$ be the integer {\em time} just after $L_H$. 
    Every programmed page has at most one first request and every later gap is at least $D$; the starter requests exactly every $D$ time units. Hence $B_L:=N (1+\lceil{\tau_L}/{D}\rceil)$ is an upper bound on the total number of requests by time $\tau_L$. Place every post-horizon atom after $\tau_L+2(B_L+1)D$ and set $T_L=B_L+1$. Because at most $B_L$ requests occur by $\tau_L$, the $T_L$th request occurs after all intended atoms through $L_H$. Conversely, the deterministic starter alone makes at least $B_L+1$ requests during any interval of length $(B_L+1)D$, so the $T_L$th request occurs no later than $\tau_L+(B_L+1)D$, strictly before every post-horizon atom.

    \smallskip
    \emph{Step 4: bounding $\Pr[\mathcal E_3]$.} Use the same programming probability $\delta_L=o(T_L^{-2})$ for every programmed page. The deterministic starter contributes no bad gap, and at most $T_L$ post-first gaps of programmed pages are sampled before the horizon, so a union bound gives $\Pr[\mathcal E_3]\le T_L\delta_L=o(1/T_L)$.

    \smallskip
    \emph{Step 5: coupling to the idealized experiment.}  Couple the finite instance to it by ordering each finite pool by rank and matching its successive filter survivors to the successive candidates revealed from the corresponding idealized pool. Match the early/late terminal branch of each anchor and the complete first-checkpoint-or-tail outcome of each token. On the event $\mathcal G_L :=(\mathcal E_1\cup\mathcal E_2\cup\mathcal E_3)^c$,
    every anchor and replacement token selected by either policy belongs to the finite pools, and every post-first gap seen before the horizon is the short gap $D$. Hence the finite-instance and idealized costs agree on $\mathcal G_L$. In detail: on $\mathcal G_L$ the matched outcomes and uniformly short gaps make the two request sequences identical through the final late fault, so, by induction over requests, both policies evict and fault identically in the two experiments. After that fault neither experiment faults again before the horizon: with cache size $k=N-1$, each cache misses exactly one page ($\mcA_\medalg$'s cache misses the boundary anchor, $\mcA_\cmpalg$'s misses the active token) and neither of these two pages has a remaining atom before the horizon, while every other page is in both caches. Finally, the late fault precedes $\tau_L$, hence the $T_L$th request (Step 3), so every idealized-experiment fault is counted within the horizon.
    
    \smallskip
    \emph{Step 6: from the coupling to the cost estimates.} Let $C_L^\pi$ and $C_\infty^\pi$ denote the finite-instance and idealized costs for $\pi\in\{\mcA_{\medalg},\mcA_{\cmpalg}\}$. We always have $C_L^\pi\le T_L$, since the run contains $T_L$ requests and each contributes at most one fault. Moreover, for fixed $H$ and $\gamma$, $C_\infty^{\mcA_{\cmpalg}}\le H+1$, while  $C_\infty^{\mcA_{\medalg}}=1+\sum_{j=1}^H A_j$ (the block selection counts of \Cref{prop:ideal-costs}) has second moment:
        \begin{align*}
        \E\bigl[(C_\infty^{\mcA_{\medalg}})^2\bigr]
        &= \operatorname{Var}\bigl(C_\infty^{\mcA_{\medalg}}\bigr)+\bigl(\E\bigl[C_\infty^{\mcA_{\medalg}}\bigr]\bigr)^2
            && \\
        &= \operatorname{Var}\Bigl(\textstyle\sum_{j=1}^H A_j\Bigr)+(1+H/b)^2
            && \rhd~\text{$\E[A_j]=1/b$}\\
        &= \textstyle\sum_{j=1}^H \operatorname{Var}(A_j)+(1+H/b)^2
            && \rhd~\text{$\{A_j\}_{1\ldots H}$ are independent}\\
        &= Ha/b^2+(1+H/b)^2.
            && \rhd~\text{$\operatorname{Var}(A_j)=a/b^2$ as computed below}
    \end{align*}
     Therefore, writing $\Delta:=C_L^\pi-C_\infty^\pi$, Cauchy--Schwarz gives
    \begin{align*}
        \left|\E[C_L^\pi]-\E[C_\infty^\pi]\right|
        &= \bigl|\E[\Delta\,\Ind[\mathcal G_L]]+\E[\Delta\,\Ind[\mathcal G_L^c]]\bigr|
             &&\rhd~\text{$\Ind[\mathcal G_L]{+}\Ind[\mathcal G_L^c]=1$}\\
        &= \bigl|\E[\Delta\,\Ind[\mathcal G_L^c]]\bigr|
             &&\rhd~\text{$\Delta=0$ on $\mathcal G_L$ (Step 5)}\\
        &\le \E[C_L^\pi\Ind[\mathcal G_L^c]]
             +\E[C_\infty^\pi\Ind[\mathcal G_L^c]]
             &&\rhd~\text{triangle inequality; nonnegativity}\\
        &\le T_L\Pr[\mathcal G_L^c]
             +\sqrt{\E[(C_\infty^\pi)^2]\Pr[\mathcal G_L^c]}
             &&\rhd~\text{$C_L^\pi\le T_L$; Cauchy--Schwarz}\\
        &=o(1).
             &&\rhd~\text{Steps 2 and 4; $T_L=\operatorname{poly}(L)$}
    \end{align*}
    To elaborate the final step, the two summands vanish for different reasons: the first races the polynomial $T_L$ against small probabilities, while in the second the square root of the second moment is a constant not depending on $L$, so plain convergence $\Pr[\mathcal G_L^c]\to0$ suffices. For the first summand,
    \begin{align*}
        T_L\Pr[\mathcal G_L^c]
        &\le T_L\bigl(\Pr[\mathcal E_1]+\Pr[\mathcal E_2]+\Pr[\mathcal E_3]\bigr)
            && \rhd~\text{union bound}\\
        &\le T_L\bigl(2(H{+}1)e^{-L/4}+Ha^L\bigr)+T_L^2\delta_L
            && \rhd~\text{Steps 2 and 4}\\
        &= o(1),
            && 
    \end{align*}
    where the last step holds since $T_L=\operatorname{poly}(L)$ for fixed $H$ (Step 3), and  $\delta_L=o(T_L^{-2})$. For the second summand,
    \begin{align*}
        \sqrt{\E[(C_\infty^\pi)^2]\Pr[\mathcal G_L^c]}
        &= \sqrt{\E[(C_\infty^\pi)^2]}\cdot\sqrt{\Pr[\mathcal G_L^c]}
            && \rhd~\text{split the square root}\\
        &\le M_{H,\gamma}\sqrt{\Pr[\mathcal G_L^c]}
            && \rhd~\text{the idealized experiment does not involve $L$}\\
        &= o(1),
            && \rhd~\text{$\Pr[\mathcal G_L^c]\to0$ by the three bounds above}
    \end{align*}
    where $M_{H,\gamma}^2:=\max\bigl\{(H{+}1)^2,\,(1+H/b)^2+Ha/b^2\bigr\}$ bounds $\E[(C_\infty^\pi)^2]$ for both policies and is a constant for fixed $H$ and $\gamma$. Combining this coupling estimate with \Cref{prop:ideal-costs} proves both equalities in the lemma.
\end{proof}

\subsection{Proof of \texorpdfstring{\Cref{thm:med-lb}}{Theorem 2}}
\label{sec:med-wrapup}

\thmmedlb*

\begin{proof}
    Since $\opt$ minimizes \eqref{eqn:objective} over online policies and $\mcA_{\medalg}$ is online, $\E[\cost(\opt,T)]\le\E[\cost(\mcA_{\medalg},T)]$ on every instance and horizon, so the competitive ratio of $\mcA_{\medalg}$ is at least $1$. Fix any $c\in[1,4)$; by \Cref{obs:online-witness} it suffices to exhibit an instance, a horizon, and an online policy $\mcB$ with $\E[\cost(\mcA_{\medalg},T)]>c\,\E[\cost(\mcB,T)]$. Our witness is $\mcB=\mcA_{\cmpalg}$, which is online.

    \smallskip
    \emph{Choice of $\gamma$.} Since $c<4$, the number $(4-c)/(2c)$ is positive; fix $\gamma\in(0, \min\{\tfrac12, \tfrac{4-c}{2c}\})$, so that $\frac2b=\frac4{1+2\gamma}>c$;
    equivalently, $\gamma<(4-c)/(2c)$, using $b=\tfrac12+\gamma$.

    \smallskip
    \emph{Choice of $H$.} With $\gamma$ fixed, the ratio $(1+H/b)/(1+H/2)$ is increasing in $H$ and
    \[
        \lim_{H\to\infty}\frac{1+H/b}{1+H/2}=\frac{1/b}{1/2}=\frac2b>c,
    \]
    so there is a fixed $H\ge1$ with $(1+H/b)/(1+H/2)>c$.

    \smallskip
    \emph{Choice of $L$.} With $H$ and $\gamma$ held fixed, write the conclusion of \Cref{lem:finite} as
    \[
        \E[\cost(\mcA_{\medalg},T_L)]=1+\tfrac Hb+\varepsilon_L,
        \qquad
        \E[\cost(\mcA_{\cmpalg},T_L)]=1+\tfrac H2+\varepsilon'_L,
    \]
    where $\varepsilon_L$ and $\varepsilon'_L$ are real sequences tending to $0$ as $L\to\infty$; this is the meaning of $o_L(1)$ in that lemma. Set $\eta:=\bigl(1+H/b\bigr)-c\bigl(1+H/2\bigr)$. The choice of $H$ gives $(1+H/b)/(1+H/2)>c$, and $1+H/2>0$, so $\eta>0$. Since both error sequences vanish, we may fix $L$ large enough that $|\varepsilon_L|\le\eta/4$ and $|\varepsilon'_L|\le\eta/(4c)$. Then
    \[
        \E[\cost(\mcA_{\medalg},T_L)]-c\,\E[\cost(\mcA_{\cmpalg},T_L)]
        =\eta+\varepsilon_L-c\,\varepsilon'_L
        \ge\eta-\tfrac\eta4-\tfrac\eta4
        =\tfrac\eta2>0 .
    \]

    For this instance and horizon, \Cref{obs:online-witness} applied with the online witness $\mcA_{\cmpalg}$ shows that $\mcA_{\medalg}$ is not $c$-competitive. As $c\in[1,4)$ was arbitrary, the competitive ratio of the median algorithm is at least $4$.
\end{proof}

    The programming lemma realizes every prescribed first-request law by using an i.i.d. finite-support gap distribution together with a known deterministic initial age. These initial ages are essential to the present construction.
    
    This initialization convention is also covered by the matching upper bounds. Lund et al.~\cite{lund1999paging} state their $5$-competitive theorem for every initial configuration, explicitly including each page's elapsed time since its last request. Pabbaraju and Vakilian's improved charging lemma~\cite{pabbaraju2025new} uses only the same per-fault precedence guarantee; conditional on the full observed history (equivalently, the current age vector) independent renewal processes have independent residual times, so the median rule satisfies that guarantee with $c=4$. Hence their $4$-competitive upper bound also holds for arbitrary known deterministic initial ages.

\section*{AI Disclosure} 
The results were obtained through iterative and interactive engagement between the author and AI models, including GPT-5.5 Thinking, GPT-5.6 Sol, Opus 4.8 and 5, and Fable 5. AI models were also used throughout the writing process, after which the manuscript was substantially revised and rewritten by the author. All figures were generated using Opus 4.8 and Opus 5 and were subsequently modified and refined by the author.

The author takes full responsibility for the correctness of the results and has extensively verified, modified, and improved the proofs and exposition throughout the process, building on AI-generated proof ideas, analyses, and drafts.

\bibliographystyle{plain}
\bibliography{ref}

@article{belady1966study,
  title={A study of replacement algorithms for a virtual-storage computer},
  author={Belady, Laszlo A.},
  journal={IBM Systems journal},
  volume={5},
  number={2},
  pages={78--101},
  year={1966},
  publisher={IBM}
}

@article{sleator1985amortized,
  title={Amortized efficiency of list update and paging rules},
  author={Sleator, Daniel D and Tarjan, Robert E},
  journal={Communications of the ACM},
  volume={28},
  number={2},
  pages={202--208},
  year={1985},
  publisher={ACM New York, NY, USA}
}

@inproceedings{karlin1992markov,
  title={Markov paging},
  author={Karlin, AR and Phillips, SJ and Raghavan, P},
  booktitle={Proceedings of the 33rd Annual Symposium on Foundations of Computer Science},
  pages={208--217},
  year={1992}
}

@article{lund1999paging,
  title={Paging against a Distribution and {IP} Networking},
  author={Lund, Carsten and Phillips, Steven and Reingold, Nick},
  journal={Journal of Computer and System Sciences},
  volume={58},
  number={1},
  pages={222--231},
  year={1999},
  publisher={Elsevier}
}

@book{borodin2005online,
  title={Online computation and competitive analysis},
  author={Borodin, Allan and El-Yaniv, Ran},
  year={2005},
  publisher={cambridge university press}
}

@misc{roughgardennotes,
  author = {Tim Roughgarden},
  title = {{CS 369N Beyond Worst-Case Analysis, Lecture Notes 2: Models of Data in Online Paging}},
  howpublished = "\url{https://timroughgarden.org/f11/l2.pdf}",
  year = {2010}
}

@inproceedings{fiat1995randomized,
  title={Randomized and multipointer paging with locality of reference},
  author={Fiat, Amos and Karlin, Anna R},
  booktitle={Proceedings of the twenty-seventh annual ACM symposium on Theory of computing},
  pages={626--634},
  year={1995}
}

@inproceedings{irani1992strongly,
  title={Strongly competitive algorithms for paging with locality of reference},
  author={Irani, Sandy and Karlin, Anna R and Phillips, Steven},
  booktitle={Proceedings of the third annual ACM-SIAM symposium on Discrete algorithms},
  pages={228--236},
  year={1992}
}

@inproceedings{borodin1991competitive,
  title={Competitive paging with locality of reference},
  author={Borodin, Allan and Raghavan, Prabhakar and Irani, Sandy and Schieber, Baruch},
  booktitle={Proceedings of the twenty-third annual ACM symposium on Theory of computing},
  pages={249--259},
  year={1991}
}

@article{koutsoupias2000beyond,
  title={Beyond competitive analysis},
  author={Koutsoupias, Elias and Papadimitriou, Christos H},
  journal={SIAM Journal on Computing},
  volume={30},
  number={1},
  pages={300--317},
  year={2000},
  publisher={SIAM}
}

@inproceedings{becchetti2004modeling,
  title={Modeling locality: A probabilistic analysis of LRU and FWF},
  author={Becchetti, Luca},
  booktitle={Algorithms--ESA 2004: 12th Annual European Symposium, Bergen, Norway, September 14-17, 2004. Proceedings 12},
  pages={98--109},
  year={2004},
  organization={Springer}
}

@article{torng1998unified,
  title={A unified analysis of paging and caching},
  author={Torng, Eric},
  journal={Algorithmica},
  volume={20},
  pages={175--200},
  year={1998},
  publisher={Springer}
}

@article{young2000line,
  title={On-line paging against adversarially biased random inputs},
  author={Young, Neal E},
  journal={Journal of Algorithms},
  volume={37},
  number={1},
  pages={218--235},
  year={2000},
  publisher={Elsevier}
}

@article{denning1968working,
  title={The working set model for program behavior},
  author={Denning, Peter J},
  journal={Communications of the ACM},
  volume={11},
  number={5},
  pages={323--333},
  year={1968},
  publisher={ACM New York, NY, USA}
}

@article{irani2005competitive,
  title={Competitive analysis of paging},
  author={Irani, Sandy},
  journal={Online Algorithms: The State of the Art},
  pages={52--73},
  year={2005},
  publisher={Springer}
}

@article{dorrigiv2005survey,
  title={A survey of performance measures for on-line algorithms},
  author={Dorrigiv, Reza and L{\'o}pez-Ortiz, Alejandro},
  journal={SIGACT News},
  volume={36},
  number={3},
  pages={67--81},
  year={2005}
}

@inproceedings{angelopoulos2009paging,
  title={Paging and list update under bijective analysis},
  author={Angelopoulos, Spyros and Schweitzer, Pascal},
  booktitle={Proceedings of the twentieth annual ACM-SIAM symposium on Discrete algorithms},
  pages={1136--1145},
  year={2009},
  organization={SIAM}
}

@inproceedings{young1998bounding,
  title={Bounding the diffuse adversary},
  author={Young, Neal E},
  booktitle={Proceedings of the ninth annual ACM-SIAM symposium on Discrete algorithms},
  pages={420--425},
  year={1998}
}

@article{hao2018learning,
  title={On learning markov chains},
  author={Hao, Yi and Orlitsky, Alon and Pichapati, Venkatadheeraj},
  journal={Advances in Neural Information Processing Systems},
  volume={31},
  year={2018}
}

@article{pabbaraju2025new,
  title={New and Improved Bounds for Markov Paging},
  author={Pabbaraju, Chirag and Vakilian, Ali},
  journal={arXiv preprint arXiv:2502.05511},
  year={2025}
}

@article{fiat1991competitive,
  title={Competitive paging algorithms},
  author={Fiat, Amos and Karp, Richard M and Luby, Michael and McGeoch, Lyle A and Sleator, Daniel D and Young, Neal E},
  journal={Journal of Algorithms},
  volume={12},
  number={4},
  pages={685--699},
  year={1991},
  publisher={Elsevier}
}

@article{wolfer2021statistical,
  title={Statistical estimation of ergodic Markov chain kernel over discrete state space},
  author={Wolfer, Geoffrey and Kontorovich, Aryeh},
  journal={Bernoulli},
  volume={27},
  number={1},
  pages={532--553},
  year={2021},
  publisher={International Statistical Institute}
}

\clearpage
\appendix
\section{Stationary Calculation for the Four-Page Construction}
\label{app:block-rates}

We prove \Cref{prop:block-rates}. After serving a request, let $s$ be the current page (i.e.; the current state of the Markov chain which is the page just requested) and let $m\ne s$ be the unique missing page. Classify the state by whether $s$ is an entry page ($E$) or an exit page ($X$), and by whether $m$ equals $\nxt(s)$, $\lst(s)$, or $\thr(s)$. This gives six symmetry classes: $E_n,E_\ell,E_t,X_n,X_\ell,X_t$.

Consider the two-parameter family of policies that evict $\nxt(r)$ with probability $\beta_E$ at an entry-page miss and with probability $\beta_X$ at an exit-page miss, and otherwise evict $\lst(r)$. The witness $\mcA_\blkalg$ has $(\beta_E,\beta_X)=(\frac{1}{2},\frac{1}{2(1-\varepsilon)})$, while $\mcA_\latealg$ has $(\beta_E,\beta_X)=(0,0)$. The six-state transition table is
\begin{align}
    E_n&\longrightarrow \beta_X X_n+(1-\beta_X)X_\ell,
        &&\text{with a fault},\nonumber\\
    E_\ell&\longrightarrow X_\ell,
        E_t\longrightarrow X_t,
        &&\text{without a fault},\nonumber\\
    X_n&\longrightarrow(1-\varepsilon)
        [\beta_EE_n+(1-\beta_E)E_\ell]+\varepsilon E_t,
        &&\text{with fault probability }1-\varepsilon,\label{eqn:block-transitions}\\
    X_\ell&\longrightarrow(1-\varepsilon)E_\ell+\varepsilon E_n,
        &&\text{without a fault},\nonumber\\
    X_t&\longrightarrow(1-\varepsilon)E_t
        +\varepsilon[\beta_EE_n+(1-\beta_E)E_\ell],
        &&\text{with fault probability }\varepsilon.\nonumber
\end{align}

For completeness, the lines follow directly from the definitions. In $E_n$, the next request is the missing page $\nxt(s)$, an exit-page miss, and the eviction produces $X_n$ or $X_\ell$. In $E_\ell$ and $E_t$, the next request is a hit and the missing page retains type $\ell$ or $t$. In $X_n$, staying in the block, which has probability $1-\varepsilon$, requests the missing page and causes an entry-page miss; switching blocks, which has probability $\varepsilon$, hits the other block's entry page and leaves a type-$t$ hole. In $X_\ell$, both possible requests are hits, yielding $E_\ell$ or $E_n$. Finally, in $X_t$, staying is a hit of type $E_t$, whereas switching requests the missing page and causes an entry-page miss.

\begin{lemma}
    \label{lem:block-stationary}
    Set $H:=1-\beta_E\beta_X- \varepsilon(1-\beta_X)$.
    When $\beta_X>0$ and $\beta_E<1$, the six-state chain is
    irreducible. Its stationary law, in the order
    $(E_n,E_\ell,E_t,X_n,X_\ell,X_t)$, satisfies, with
    $\alpha:=\pi(E_n)$,
    \begin{align*}
        \pi(X_n)=\pi(E_t)=\pi(X_t)&=\beta_X\alpha,\\
        \pi(E_\ell)&=\frac{\alpha H}{\varepsilon},\\
        \pi(X_\ell)&=(1-\beta_X)\alpha
            +\frac{\alpha H}{\varepsilon},\\
        \alpha&=
        \frac{\varepsilon}
        {2[\varepsilon(1+\beta_X)+H]}.
    \end{align*}
    Its stationary fault rate is
    \begin{align}
        \label{eqn:block-general-rate}
        \rho(\beta_E,\beta_X)
        =\frac{\varepsilon(1+\beta_X)}
        {2[\varepsilon(1+\beta_X)+H]}.
    \end{align}
    For $(\beta_E,\beta_X)=(0,0)$, the chain has a unique closed
    recurrent class $\{E_n,E_\ell,X_\ell\}$ and fault rate
    $\varepsilon/2$.
\end{lemma}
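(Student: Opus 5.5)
The proof is a direct verification on the six-state transition table \eqref{eqn:block-transitions}, which is already justified in the text.

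\emph{Step 1: irreducibility.} Assume $\beta_X>0$ and $\beta_E<1$. I exhibit a cycle through all six classes:
\[
E_n\to X_n\to E_t\to X_t\to E_\ell\to X_\ell\to E_n .
\]
The edges are justified as follows. $E_n\to X_n$ uses $\beta_X>0$. $X_n\to E_t$ uses $\varepsilon>0$. $E_t\to X_t$ is deterministic. $X_t\to E_\ell$ has probability $\varepsilon(1-\beta_E)>0$. $E_\ell\to X_\ell$ is deterministic. $X_\ell\to E_n$ has probability $\varepsilon$. Hence the finite chain is irreducible and has a unique stationary law.

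\emph{Step 2: the stationary law.} I write the balance equations by collecting inflows.
\begin{align*}
\pi(X_n)&=\beta_X\pi(E_n),\\
\pi(X_t)&=\pi(E_t),\\
\pi(E_t)&=\varepsilon\pi(X_n)+(1-\varepsilon)\pi(X_t),\\
\pi(X_\ell)&=(1-\beta_X)\pi(E_n)+\pi(E_\ell),\\
\pi(E_n)&=\beta_E[(1-\varepsilon)\pi(X_n)+\varepsilon\pi(X_t)]+\varepsilon\pi(X_\ell).
\end{align*}
From the second and third equations, $\varepsilon\pi(X_t)=\varepsilon\pi(X_n)$. Therefore $\pi(X_t)=\pi(E_t)=\pi(X_n)=\beta_X\alpha$.

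Substituting into the $E_n$ equation gives
\[
\alpha=\beta_E\beta_X\alpha+\varepsilon(1-\beta_X)\alpha+\varepsilon\pi(E_\ell),
\]
so $\pi(E_\ell)=\alpha H/\varepsilon$. The fourth equation then gives $\pi(X_\ell)$. The $E_\ell$ equation is redundant, since one balance equation always is. Normalizing, the entries sum to
\[
\alpha\bigl[1+3\beta_X+(1-\beta_X)+2H/\varepsilon\bigr]=\alpha\bigl[2(1+\beta_X)+2H/\varepsilon\bigr]=1,
\]
which yields the stated value of $\alpha$.

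\emph{Step 3: the fault rate.} Faults occur from $E_n$ with probability $1$, from $X_n$ with probability $1-\varepsilon$, and from $X_t$ with probability $\varepsilon$. The stationary fault rate is therefore
\[
\alpha+(1-\varepsilon)\beta_X\alpha+\varepsilon\beta_X\alpha=\alpha(1+\beta_X),
\]
which is \eqref{eqn:block-general-rate}.

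\emph{Step 4: the case $(\beta_E,\beta_X)=(0,0)$.} Here $E_n\to X_\ell$ deterministically, $E_\ell\to X_\ell$, and $X_\ell\to\{E_\ell,E_n\}$. So $\{E_n,E_\ell,X_\ell\}$ is closed, and it is irreducible. Every other state leaks into it: $E_t\to X_t$, then $X_t\to E_\ell$ with probability $\varepsilon$, and $X_n\to E_\ell$ or $E_t$. It is therefore the unique recurrent class. Restricted to this class, the balance equations give
\[
\pi(E_n)=\varepsilon\pi(X_\ell),\qquad \pi(X_\ell)=\pi(E_n)+\pi(E_\ell),\qquad \pi(X_\ell)=\tfrac12 .
\]
Thus $\alpha=\varepsilon/2$. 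Faults occur only from $E_n$, so the rate is $\varepsilon/2$. This agrees with \eqref{eqn:block-general-rate} at $H=1-\varepsilon$.

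\emph{Main obstacle.} The hard part is not this algebra, which is routine. It is the passage, used in \Cref{prop:block-rates}, from the stationary rate to $\E[\cost,T]/T$. That step follows from the ergodic theorem for finite chains: the expected number of visits converges in Cesàro mean to $\pi$ from any initial state, including the transient start.
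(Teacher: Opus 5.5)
Your proposal is correct and follows the paper's proof essentially step for step: you use the same inflow balance equations, deduce $\pi(E_t)=\pi(X_t)=\pi(X_n)=\beta_X\alpha$ from the $E_t$ equation, obtain $\varepsilon\,\pi(E_\ell)=\alpha H$ from the $E_n$ equation, normalize, sum the fault contributions to $\alpha(1+\beta_X)$, and handle $(\beta_E,\beta_X)=(0,0)$ by the same closed-class computation. Your explicit six-cycle $E_n\to X_n\to E_t\to X_t\to E_\ell\to X_\ell\to E_n$ is a somewhat more explicit irreducibility certificate than the paper's inspection argument, and it visibly uses exactly the two hypotheses $\beta_X>0$ and $\beta_E<1$.
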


\begin{proof}
    When $\beta_X>0$ and $\beta_E<1$, inspection of
    \eqref{eqn:block-transitions} shows that all six states communicate:
    $E_n$ reaches both $X_n$ and $X_\ell$; $X_n$ reaches
    $E_n,E_\ell,E_t$; and $E_t,X_t$ connect back through a block
    switch. Thus the chain is irreducible.

    Write $(a,b,c,x,y,z)$ for the stationary masses of
    $(E_n,E_\ell,E_t,X_n,X_\ell,X_t)$, respectively. Thus, in
    particular, $a=\pi(E_n)$. Reading the inflows from
    \eqref{eqn:block-transitions} gives
    \begin{align*}
        x&=\beta_X a,
        &
        y&=(1-\beta_X)a+b,
        &
        z&=c,\\
        a
        &=(1-\varepsilon)\beta_E x+\varepsilon y
            +\varepsilon\beta_E z,\\
        b
        &=(1-\varepsilon)(1-\beta_E)x+(1-\varepsilon)y
            +\varepsilon(1-\beta_E)z,\\
        c&=\varepsilon x+(1-\varepsilon)z.
    \end{align*}
    Since $z=c$, the last balance equation becomes $c=\varepsilon x+(1-\varepsilon)c$. Since $\varepsilon>0$, this implies $c=x$. Together with
    $x=\beta_Xa$, we obtain $c=x=z=\beta_Xa$. We now substitute these identities and
    $y=(1-\beta_X)a+b$ into the balance equation for $a$:
    \begin{align*}
        a =(1-\varepsilon)\beta_E(\beta_Xa) +\varepsilon\bigl((1-\beta_X)a+b\bigr) +\varepsilon\beta_E(\beta_Xa) =\beta_E\beta_Xa +\varepsilon(1-\beta_X)a +\varepsilon b.
    \end{align*}
    Rearranging gives
    \begin{align}
        \varepsilon b
        &=a\bigl[1-\beta_E\beta_X
            -\varepsilon(1-\beta_X)\bigr]
          =aH.
        \label{eqn:block-balance-H}
    \end{align}
    It remains to determine $a$ by normalization. Using
    $c=x=z=\beta_Xa$ and
    $y=(1-\beta_X)a+b$,
    \begin{align*}
        1=a+b+c+x+y+z  =a+b+3\beta_Xa+(1-\beta_X)a+b
        =2a(1+\beta_X)+2b.
    \end{align*}
    By \eqref{eqn:block-balance-H},
    $b=aH/\varepsilon$, and hence  $a={\varepsilon}/(2[\varepsilon(1+\beta_X)+H])$.
    Since $a=\pi(E_n)$, this is precisely the value denoted by
    $\alpha$ in the lemma statement. The remaining displayed
    stationary masses follow from
    \[
        x=c=z=\beta_Xa,\qquad
        b=\frac{aH}{\varepsilon},\qquad
        y=(1-\beta_X)a+b.
    \]

    A fault occurs on the next transition with probability one
    from $E_n$, with probability $1-\varepsilon$ from $X_n$, and with
    probability $\varepsilon$ from $X_t$. No other state contributes
    to the one-step fault probability. Consequently,
    \begin{align*}
        \rho(\beta_E,\beta_X)
        =a+(1-\varepsilon)x+\varepsilon z=a+(1-\varepsilon)\beta_Xa
            +\varepsilon\beta_Xa =a(1+\beta_X).
    \end{align*}
    Substituting the value of $a$ proves
    \eqref{eqn:block-general-rate}.

    Finally, suppose that
    $(\beta_E,\beta_X)=(0,0)$. In this case every state reaches the
    closed recurrent class $\{E_n,E_\ell,X_\ell\}$. Solving the
    balance equations on this class gives
    $\pi(E_n)=\varepsilon/2$. Since only $E_n$ contributes a fault,
    the stationary fault rate is $\varepsilon/2$.
\end{proof}

For $\mcA_\blkalg$, substitute
$\beta_E=1/2$ and
$\beta_X=1/[2(1-\varepsilon)]$ into
\eqref{eqn:block-general-rate}. This gives
\begin{align}
    \label{eqn:block-rate-ratio}
    \rho_{\blkalg}
    &=\varepsilon\left(1-\frac23\varepsilon\right),
    &
    \rho_{\latealg}
    &=\frac{\varepsilon}{2},
    &
    \frac{\rho_{\blkalg}}{\rho_{\latealg}}
    &=2-\frac43\varepsilon.
\end{align}

The six-state chain induced by $\mcA_\blkalg$ is irreducible.
The chain induced by $\mcA_\latealg$ has the unique recurrent class
identified in \Cref{lem:block-stationary}, and all its other states are
transient. Hence the finite-state Ces\`aro theorem applies from the
uniform initial page and common initial cache fixed in
\Cref{sec:prelim}, regardless of the period of the base request chain,
and yields
\[
    \E[\cost(\mcA_\blkalg,T)]
    =\rho_{\blkalg}T+o(T),
    \qquad
    \E[\cost(\mcA_\latealg,T)]
    =\rho_{\latealg}T+o(T).
\]
This proves \Cref{prop:block-rates}.

\end{document}